\documentclass[11pt]{amsart}

\usepackage{color}
\usepackage{comment}
\usepackage[margin=1in]{geometry}
\usepackage{graphicx}
\usepackage{float}
\usepackage{yfonts}

\newtheorem{rhp}{Riemann-Hilbert Problem}
\newtheorem{theorem}{Theorem}
\newtheorem{lemma}{Lemma}
\newtheorem{proposition}{Proposition}
\theoremstyle{remark}
\newtheorem{rmk}{Remark}

\def\eq{\begin{equation}}
\def\endeq{\end{equation}}
\def\bbm{\begin{bmatrix}}
\def\ebm{\end{bmatrix}}
\def\bpm{\begin{pmatrix}}
\def\epm{\end{pmatrix}}

\numberwithin{equation}{section}

\title[NLS $n$-solitons]{Large-order asymptotics for multiple-pole solitons \\ of the focusing nonlinear Schr\"odinger equation}

\author{Deniz Bilman}

\author{Robert Buckingham}

\thanks{D. Bilman was supported by an AMS-Simons travel grant.  
R. Buckingham was supported by the National Science Foundation through grant 
DMS-1615718.}

\begin{document}

\begin{abstract}
We analyze the large-$n$ behavior of soliton solutions of the integrable 
focusing nonlinear Schr\"odinger equation with associated spectral data 
consisting of a single pair of conjugate poles of order $2n$.  Starting from 
the zero background, we generate multiple-pole solitons by $n$-fold 
application of Darboux transformations.   The resulting functions are encoded 
in a Riemann-Hilbert problem using the robust inverse-scattering transform 
method recently introduced by Bilman and Miller.  For moderate values of $n$ 
we solve the Riemann-Hilbert problem exactly.  With appropriate scaling, the 
resulting plots of exact solutions reveal semiclassical-type behavior, 
including regions with high-frequency modulated waves and quiescent regions.  
We compute the boundary of the quiescent regions exactly and use the 
nonlinear steepest-descent method to prove the asymptotic limit of the 
solitons is zero in these regions.  Finally, we study the behavior of the 
solitons in a scaled neighborhood of the central peak with amplitude 
proportional to $n$.  We prove that in a local scaling the solitons converge 
to functions satisfying the second member of the Painlev\'e-III hierarchy 
in the sense of Sakka.  This function is a generalization of a function 
recently identified by Suleimanov in the context of geometric optics and by 
Bilman, Ling, and Miller in the context of rogue wave solutions to the 
focusing nonlinear Schr\"odinger equation.
\end{abstract}

\maketitle


\section{Introduction}

The one-dimensional focusing cubic nonlinear Schr\"odinger equation 
\eq
\label{nls}
i\psi_t + \frac{1}{2}\psi_{xx} + |\psi|^2\psi = 0, \quad x,t\in\mathbb{R}
\endeq
is a standard model for the evolution of quasi-monochromatic waves in 
weakly nonlinear dispersive media \cite{BenneyN:1967} with applications 
including fluid dynamics \cite{Zakharov:1968} and nonlinear optics 
\cite{ChiaoGT:1964}.  Solutions of \eqref{nls} are well known to exhibit 
highly structured multiscale coherent wave patterns that may be viewed as 
the focusing counterpart to dispersive shock waves occuring in systems with 
hyperbolic modulation equations such as the defocusing nonlinear Schr\"odinger 
and Korteweg-de Vries equations \cite{ElH:2016}.  A standard mechanism 
for the generation of rapid oscillations of $|\psi(x,t)|$ from smooth Cauchy 
data is the dispersive regularization of a so-called gradient catastrophe 
\cite{BertolaT:2013}.  
These phenomena have been extensively studied using the semiclassically 
scaled problem $i\epsilon\psi_t+\frac{\epsilon}{2}\psi_{xx}+|\psi|^2\psi=0$ 
with $\epsilon$-independent Cauchy data $\psi_0(x,0)$.  As $\epsilon\to 0$, 
the initial condition can be better approximated by reflectionless initial 
data consisting of $\mathcal{O}(\epsilon^{-1})$ solitons.  These 
so-called semiclassical soliton ensembles can be computed explicitly for 
moderately small values of $\epsilon$, and studied asymptotically via 
Riemann-Hilbert techniques in the limit $\epsilon\to 0$ 
\cite{KamvissisMM:2003,LyngM:2007}.  Solutions asymptotically display rapid 
oscillations of period $\mathcal{O}(\epsilon)$ in fixed 
($\epsilon$-independent) regions of the space-time plane.  

The nonlinear Schr\"odinger equation \eqref{nls} is completely integrable 
\cite{ZakharovS:1972}, and each initial condition with sufficient smoothness 
and decay is associated to scattering data consisting of poles and connection 
coefficients (encoding solitons) and a reflection coefficient (encoding 
radiation).  The scattering data for a semiclassical soliton ensemble 
consists of $\mathcal{O}(\epsilon^{-1})$ \emph{simple} poles (and associated 
connection coefficients, but the reflection coefficient is zero).  On the other 
hand, it has been known since the original work of 
Zakharov and Shabat \cite{ZakharovS:1972} that \eqref{nls} has soliton 
solutions with spectral data consisting of \emph{higher-order} poles.  We refer 
to a reflectionless solution of \eqref{nls} with scattering data consisting 
of a pair of poles of order $m$ as an \emph{$m^\text{th}$-order pole soliton} 
or, more generally, a \emph{multiple-pole soliton}.  In light of 
the rich mathematical structure displayed by solutions with scattering data 
consisting of a large number of simple poles, along with the fact that 
multiple-pole solitons can be generated by the coalescence of simple poles, 
it is natural to study the asymptotic behavior of $m^\text{th}$-order pole 
solitons as $m\to\infty$.  We show that multiple-pole solitons provide 
an alternate mechanism for generating behavior qualitatively similar to 
dispersive shock waves (see Figures 
\ref{fig-far-field-2d}--\ref{fig-zero-region-boundary}).

Previous studies of muliple-pole soliton solutions of \eqref{nls} have 
primarily focused on algebraic as opposed to asymptotic aspects 
\cite{AktosunDv:2007,GagnonS:1994,Martines:2017,Tanaka:1975}.  Olmedilla 
\cite{Olmedilla:1987} and Schiebold \cite{Schiebold:2017} studied the 
long-time asymptotic behavior (while keeping the pole order fixed).  Solitons 
associated to higher-order poles have also been studied for the modified 
Korteweg-de Vries equation \cite{WadatiO:1982}, the sine-Gordon equation 
\cite{AktosunDv:2010,Poppe:1983,TsuruW:1984}, the 
Caudrey-Dodd-Gibbon-Sawada-Kotera equation \cite{FuchssteinerA:1987}, the 
Kadomtsev-Petviashvili I equation \cite{AblowitzCTV:2000,VillarroelA:1999}, 
the $N$-wave system \cite{Shchesnovich:2003}, the complex short-pulse 
equation \cite{LingFZ:2016}, and the coupled Sasa-Satsuma system 
\cite{KuangZ:2017}.  Vinh \cite{Vinh:2017} has recently studied analogues of 
higher-order solitons for nonintegrable generalized Korteweg-de Vries 
equations.  

The recently introduced robust inverse-scattering transform 
\cite{BilmanM:2017} (see \S\ref{subsec-robust} for more details) provides the 
tools necessary to analyze the large-order behavior of multiple-pole 
solitons.  Bilman, Ling, and Miller \cite{BilmanLM:2018} used the robust 
inverse-scattering transform to study the large-order asymptotic behavior of 
multiple-pole soliton solutions of 
\eq
i\psi_t + \frac{1}{2}\psi_{xx} + (|\psi|^2-1)\psi = 0, \quad x,t\in\mathbb{R}
\label{nls2}
\endeq 
with non-decaying initial conditions (i.e. the nonlinear Schr\"odinger 
equation expressed in a rotating frame).  Here we adapt the robust 
inverse-scattering transform to analyze multiple-pole soliton solutions of 
\eqref{nls} generated by repeated Darboux transformations.  
Specifically, we fix a 
Darboux transformation that takes a given solution $\psi_0(x,t)$ to 
\eqref{nls} and generates a new solution $\widetilde{\psi}_0(x,t)$ with the 
same Beals-Coifman scattering data except for the addition of double poles at 
points $\xi$ and $\xi^*$.  If the Beals-Coifman scattering data for 
$\psi_0(x,t)$ already has poles of order $m$ at $\xi$ and $\xi^*$, then the 
Beals-Coifman scattering data for $\widetilde{\psi}_0(x,t)$ will have poles of 
order $m+2$ at these points.  We start with the trivial initial condition 
$\psi^{[0]}(x,t)\equiv 0$ and repeatedly apply the same Darboux 
transformation $n$ times to obtain a solution $\psi^{[2n]}(x,t)$ with order 
$2n$ poles at $\xi$ and $\xi^*$.  

As one might expect, the 
global behavior (the \emph{far field}) is markedly different for the 
multiple-pole soliton solutions of \eqref{nls} studied here and the 
multiple-pole soliton solutions of \eqref{nls2} studied in 
\cite{BilmanLM:2018}.  Nevertheless, the multiple-pole soliton solutions of 
\eqref{nls} and \eqref{nls2} both have a peak 
of amplitude proportional to the pole order $m$.  In 
\cite{BilmanLM:2018}, it was shown for \eqref{nls2} that the local behavior 
in a scaled neighborhood of this peak (the \emph{near field}) is given by a 
certain Painlev\'e function.  We show that for \eqref{nls} the near-field 
behavior is described by a new family of Painlev\'e solutions that agree with 
the Painlev\'e function in \cite{BilmanLM:2018} for special parameter 
values.  We now summarize our results.

\subsection{Far-field results}
Fix a pole location $\xi\in\mathbb{C}^+$, a vector of connection coefficients 
${\bf c}\equiv(c_1,c_2)\in(\mathbb{C}^*)^2$, and a non-negative integer $n$.  
Define $D_0\subset\mathbb{C}$ to be a circular disk centered at the origin 
containing $\xi$ in its interior.  Let 
${\bf M}^{[n]}(\lambda;x,t;{\bf c}) \equiv {\bf M}^{[n]}(\lambda;x,t)$ be the 
unique solution of the following Riemann-Hilbert problem (for uniqueness see 
the argument in, for example, \cite[Theorem 2.4]{BilmanM:2017}).  
\begin{rhp}
Let $(x,t)\in\mathbb{R}^2$ be arbitrary parameters, and let $n\in\mathbb{Z}_{\geq 0}$. Find the unique $2\times 2$ matrix-valued function $\mathbf{M}^{[n]}(\lambda;x,t)$ with the following properties:
\begin{itemize}
\item[]\textbf{Analyticity:} $\mathbf{M}^{[n]}(\lambda;x,t)$ is analytic for $\lambda\in\mathbb{C}\setminus \partial D_0$, and it takes continuous boundary values from the interior and exterior of $\partial D_0$.
\item[]\textbf{Jump condition:} The boundary values on the jump contour $\partial D_0$ (oriented clockwise) are related as 
\begin{equation}
\mathbf{M}_{+}^{[n]}(\lambda;x,t) = \mathbf{M}_{-}^{[n]}(\lambda;x,t)e^{-i(\lambda x+\lambda^2 t)\sigma_3}\mathbf{S} \left( \frac{\lambda-\xi}{\lambda-\xi^*}\right)^{n\sigma_3} \mathbf{S}^{-1}e^{i(\lambda x+\lambda^2 t)\sigma_3},\quad\lambda\in\partial D_0,
\label{eq:jump-rhp-M}
\end{equation}
where
\eq
{\bf S}:=\frac{1}{|{\bf c}|}\bpm c_1 & -c_2^* \\ c_2 & c_1^* \epm.
\label{S-def}
\endeq
\item[]\textbf{Normalization:} $\mathbf{M}^{[n]}(\lambda;x,t)\to\mathbb{I}$ as $\lambda\to\infty$.
\end{itemize}
\label{rhp:psi-n}
\end{rhp}
\noindent
The $2n^\text{th}$-order pole solitons we study are defined by 
\eq
\label{psi-from-M}
\psi^{[2n]}(x,t;{\bf c}) \equiv \psi^{[2n]}(x,t) := 2i\lim_{\lambda\to\infty}\lambda[{\bf M}^{[n]}(\lambda;x,t;{\bf c})]_{12}
\endeq
(see Remark \ref{rmk-pole-order} in \S\ref{subsec-iteration} for the 
explanation of why the pole order is necessarily even).  
Introduce the scaled space and time variables $\chi$ and $\tau$ by 
\eq
\label{chi-tau}
\chi:=\frac{x}{n}, \quad \tau:=\frac{t}{n}.
\endeq
As illustrated in Figures \ref{fig-far-field-2d} and 
\ref{fig-far-field-2d-c5}, as $n\to\infty$ the $\chi\tau$-plane is 
partitioned into well-defined regions in which the leading-order behavior of 
$\psi^{[2n]}(n\chi,n\tau)$ is different.  Figure 
\ref{fig-far-field-2d} illustrates this convergence for $\xi=i$ and 
${\bf c}=(1,1)$.  
\begin{figure}[H]
\begin{center}
\includegraphics[height=2in]{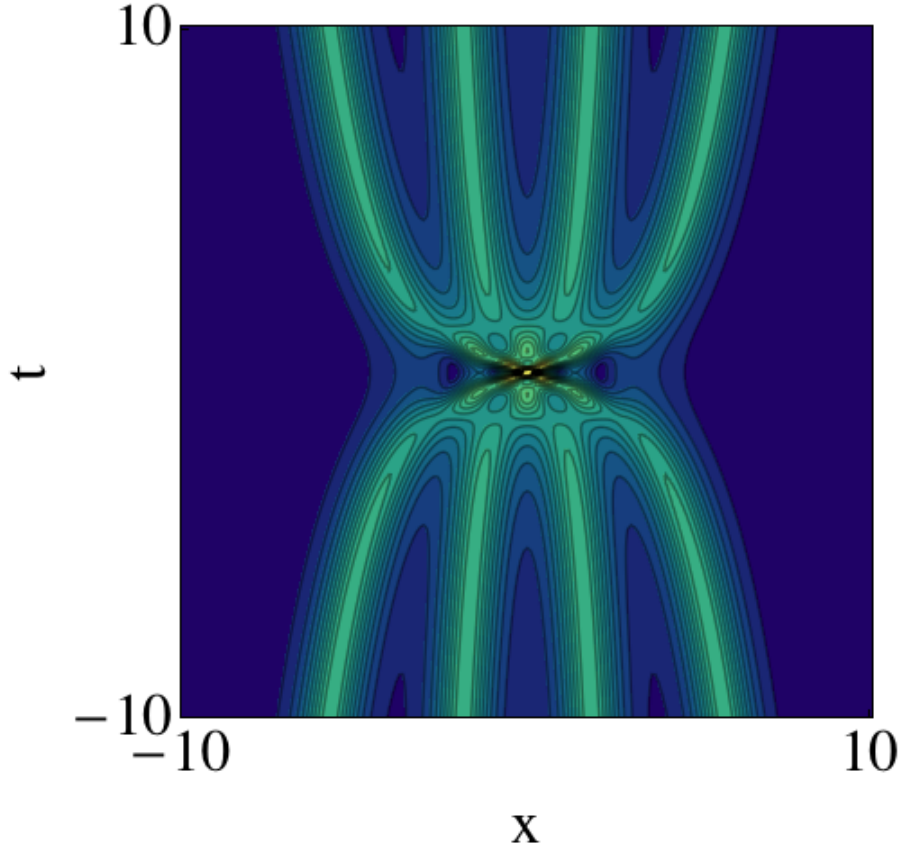}
\includegraphics[height=2in]{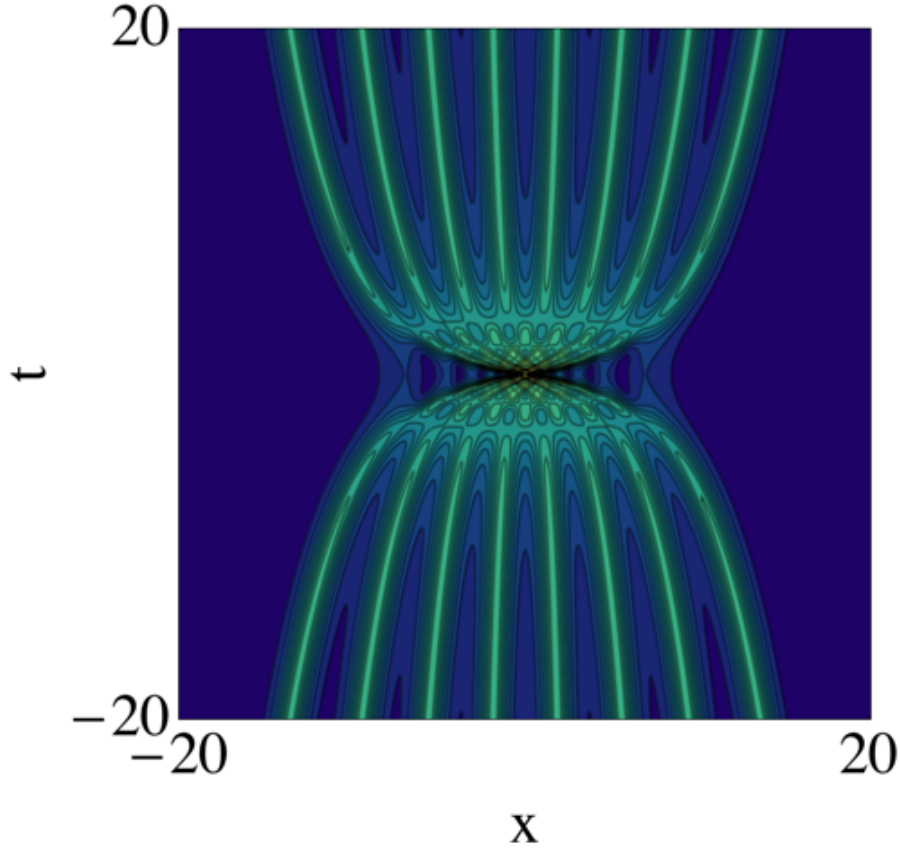}
\includegraphics[height=2in]{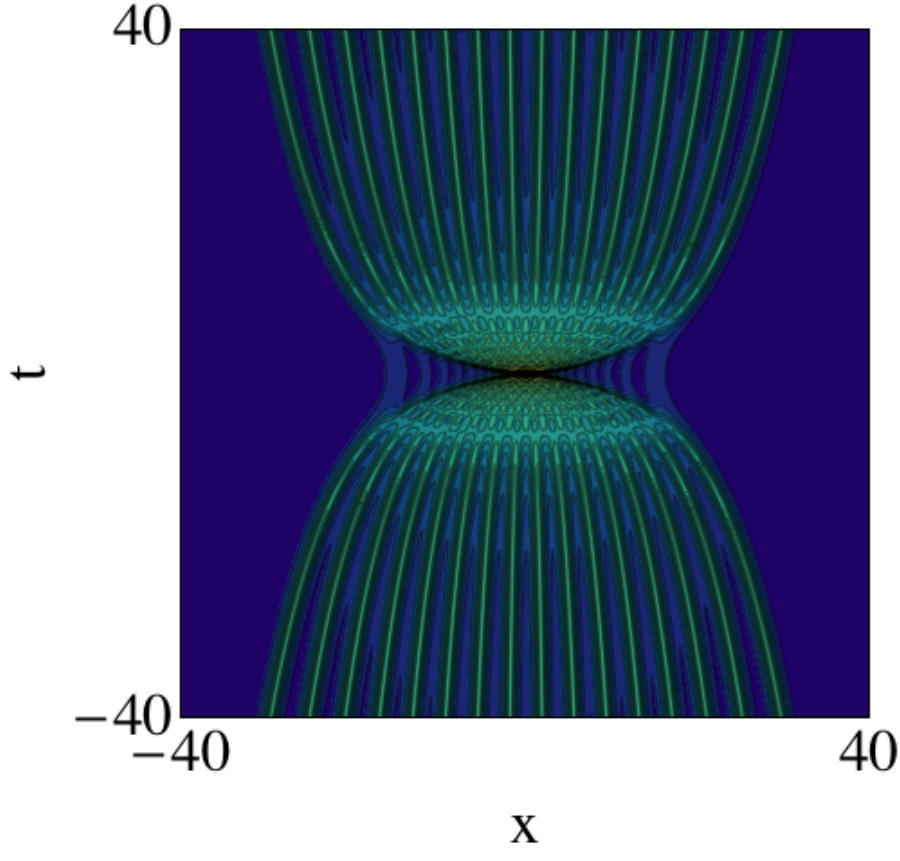}
\caption{Plots of $|\psi^{[2n]}(x,t;(1,1))|$ for $-5n\leq t\leq 5n$ and $-5n\leq x\leq 5n$ (i.e. $-5\leq\tau\leq 5$ and $-5\leq\chi\leq 5$), where $\psi^{[2n]}(x,t;(1,1))$ is a multiple-pole soliton solution of the nonlinear Schr\"odinger equation \eqref{nls}.  In each plot $c_1=c_2=1$ and $\xi=i$.  \emph{Left:} $n=2$, \emph{Center:} $n=4$. \emph{Right:} $n=8$.}
\label{fig-far-field-2d}
\end{center}
\end{figure}
Figures \ref{fig-far-field2-1d} and \ref{fig-far-field-1d} show time slices 
of $|\psi^{[2n]}(n\chi,n\tau)|$ at $\tau=\frac{3}{8}$ and $\tau=5$, 
respectively.  Although it is beyond the scope of this paper, these plots 
suggest there are at least three different nonzero leading-order behaviors.  
\begin{figure}[H]
\begin{center}
\includegraphics[height=1.35in]{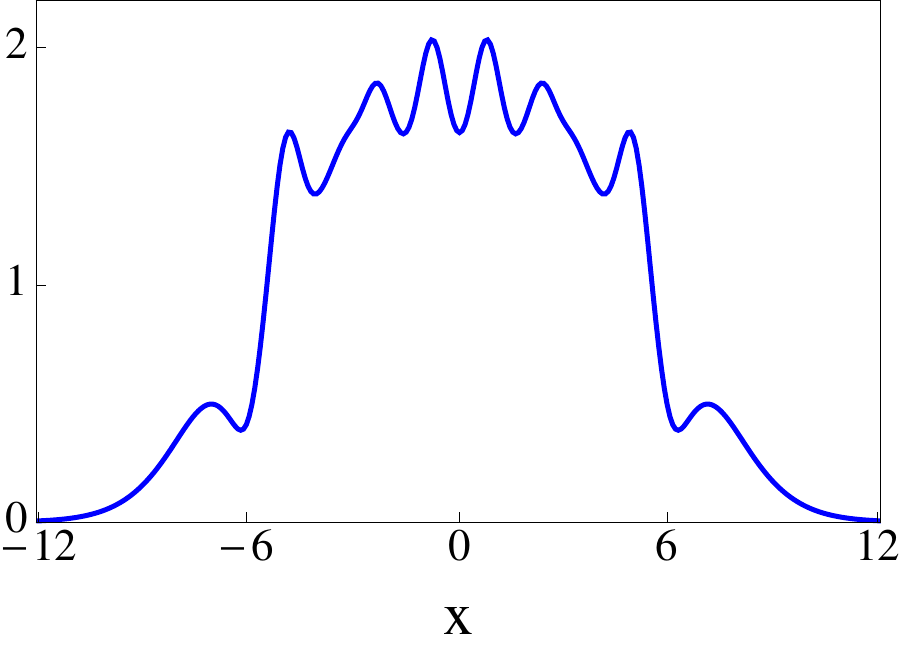}
\includegraphics[height=1.35in]{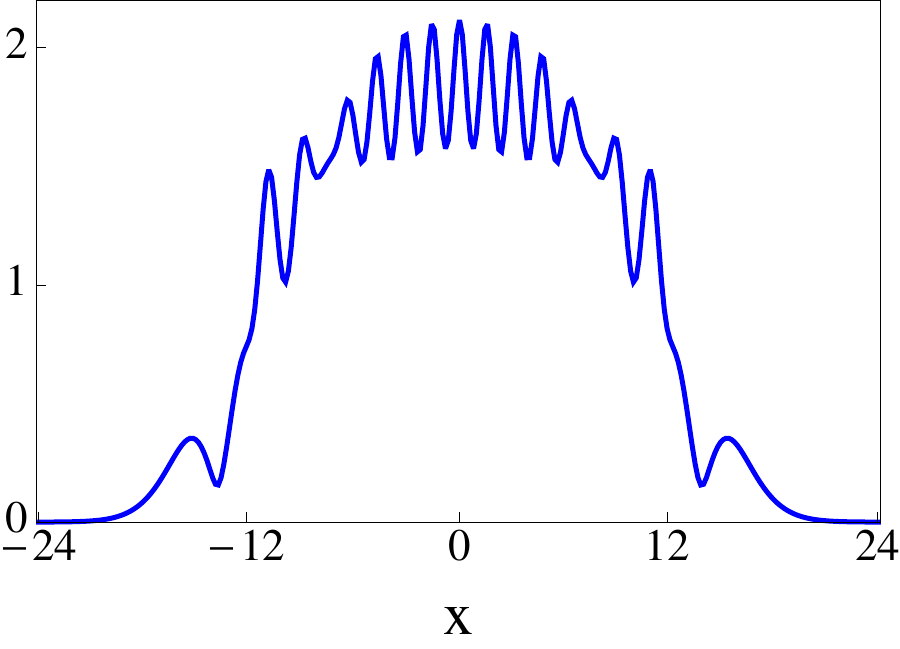}
\includegraphics[height=1.35in]{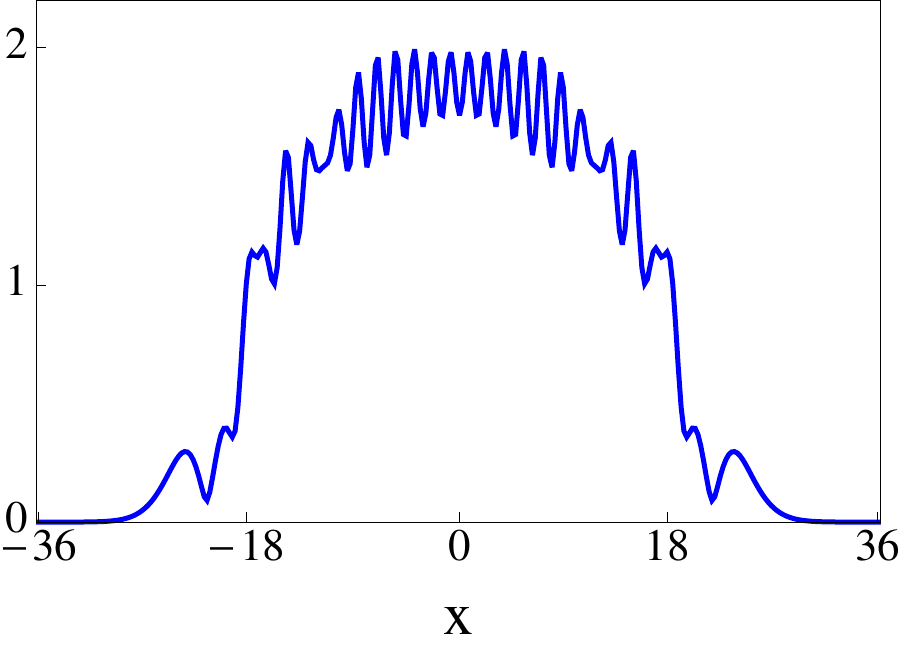}
\caption{Plots of $|\psi^{[2n]}(x,\frac{3}{8}n;(1,1))|$ for $t=\frac{3}{8}n$ and $-3n\leq x\leq 3n$ (i.e. $\tau=\frac{3}{8}$ and $-3\leq\chi\leq 3$), where $\psi^{[2n]}(x,t;(1,1))$ is a multiple-pole soliton solution of the nonlinear Schr\"odinger equation \eqref{nls}.  In each plot $c_1=c_2=1$ and $\xi=i$.  \emph{Left:} $n=4$. \emph{Center:} $n=8$. \emph{Right:} $n=12$.}
\label{fig-far-field2-1d}
\end{center}
\end{figure}
\begin{figure}[H]
\begin{center}
\includegraphics[height=1.35in]{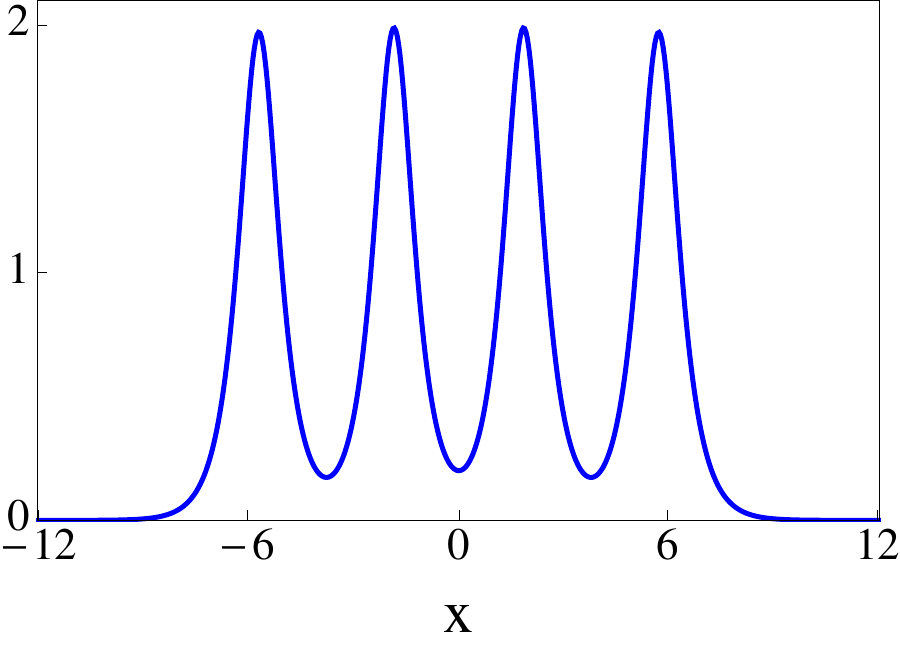}
\includegraphics[height=1.35in]{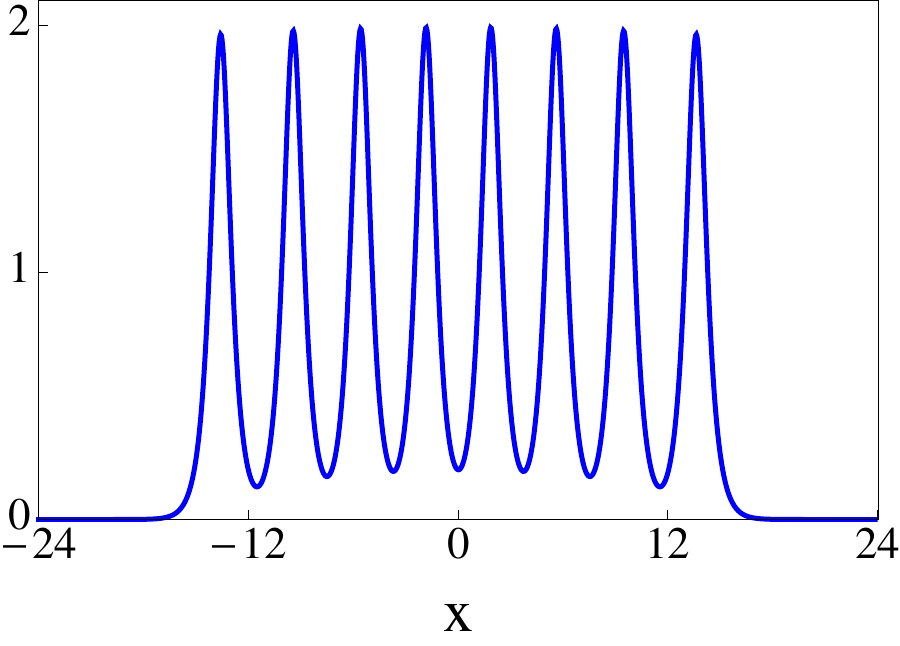}
\includegraphics[height=1.35in]{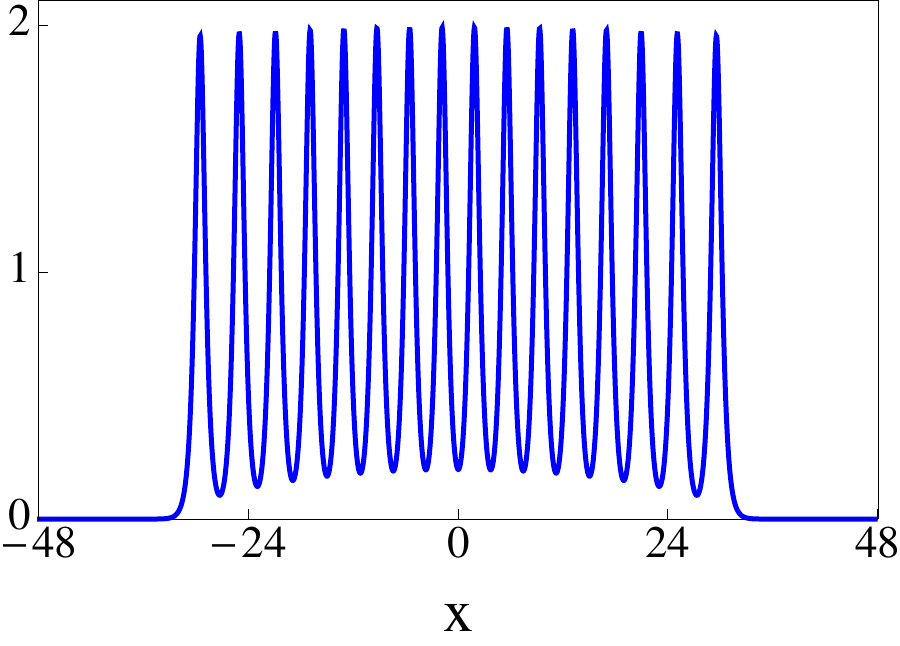}
\caption{Plots of $|\psi^{[2n]}(x,5n;(1,1))|$ for $t=5n$ and $-6n\leq x\leq 6n$ (i.e. $\tau=5$ and $-6\leq\chi\leq 6$), where $\psi^{[2n]}(x,t;(1,1))$ is a multiple-pole soliton solution of the nonlinear Schr\"odinger equation \eqref{nls}.  In each plot $c_1=c_2=1$ and $\xi=i$.   \emph{Left:} $n=2$. 
\emph{Center:} $n=4$. \emph{Right:} $n=8$.}
\label{fig-far-field-1d}
\end{center}
\end{figure}

In Figure \ref{fig-far-field-2d-c5} we illustrate the effect of changing 
${\bf c}$ by plotting $\psi^{[2n]}(n\chi,n\tau)$ with ${\bf c}=(1,5)$.  Much 
of the far-field structure remains the same.  The 
solution appears to still be converging to zero at all 
$(\chi,\tau)$ points at which the solution converged to zero with 
${\bf c}=(1,1)$ (this is made precise in Theorem 
\ref{thm-zero-reg}). Furthermore, if $|\tau|$ is sufficiently large, then the 
oscillatory structure appears unchanged.  However, there are noticable 
qualitative differences for $\chi$ and $\tau$ near the 
origin.  
\begin{figure}[H]
\begin{center}
\includegraphics[height=2in]{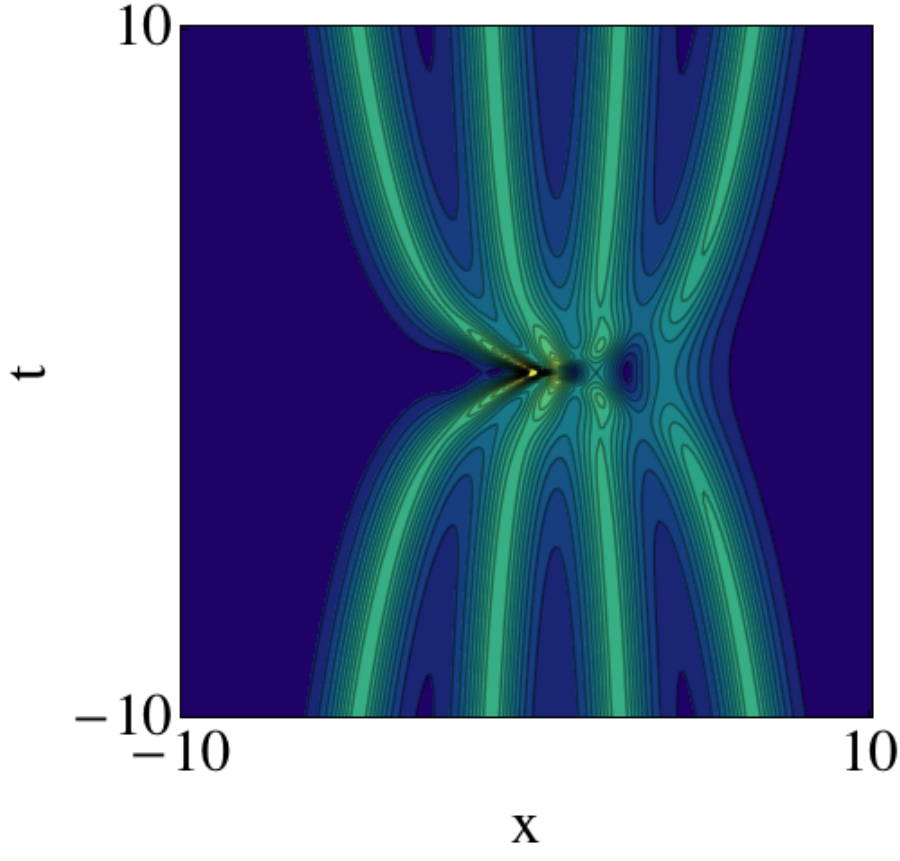}
\includegraphics[height=2in]{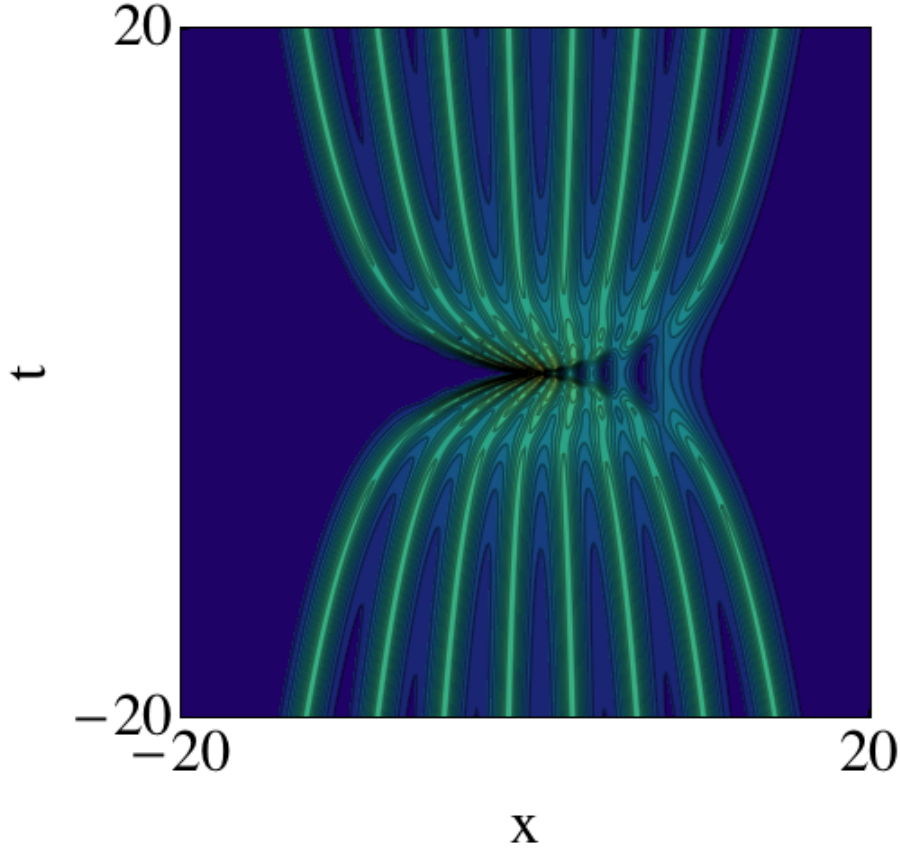}
\includegraphics[height=2in]{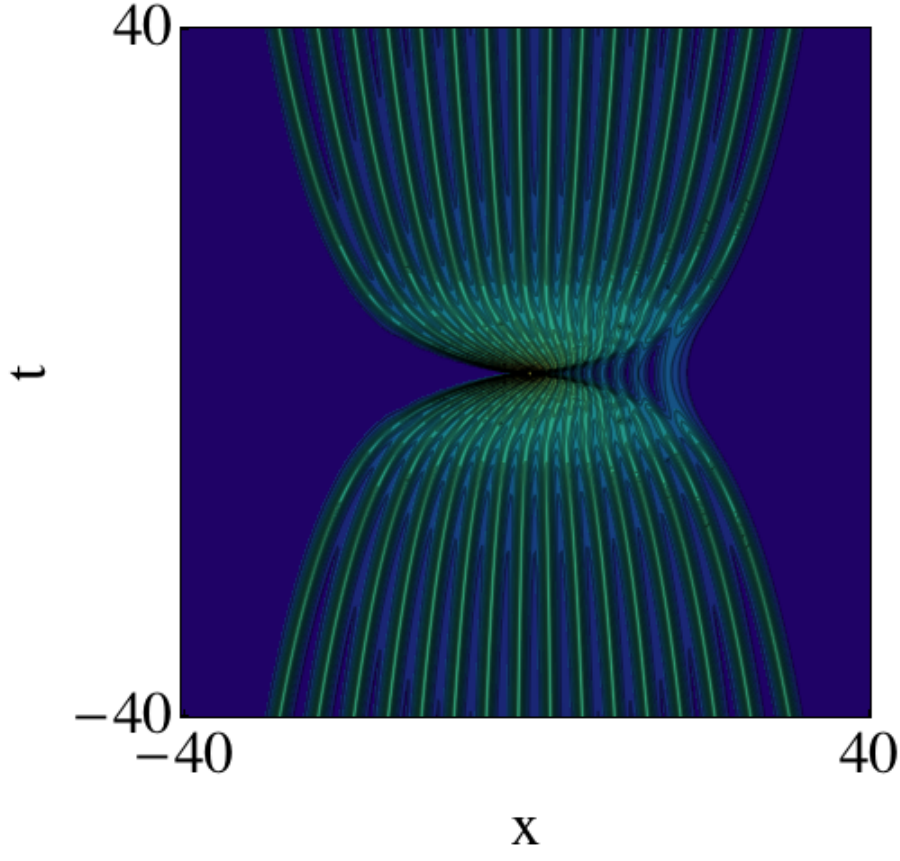}
\caption{Plots of $|\psi^{[2n]}(x,t;(1,5))|$ for $-5n\leq t\leq 5n$ and $-5n\leq x\leq 5n$ (i.e. $-5\leq\tau\leq 5$ and $-5\leq\chi\leq 5$), where $\psi^{[2n]}(x,t;(1,5))$ is a multiple-pole soliton solution of the nonlinear Schr\"odinger equation \eqref{nls}.  In each plot $c_1=1$, $c_2=5$, and $\xi=i$.   \emph{Left:} $n=2$. \emph{Center:} $n=4$. \emph{Right:} $n=8$.}
\label{fig-far-field-2d-c5}
\end{center}
\end{figure}
These differences are further illustrated in Figure 
\ref{fig-far-field2-1d-c5}, which shows a time slice at $\tau=\frac{3}{8}$ 
for ${\bf c}=(1,5)$.  We partially quantify these differences by studying the 
dependence on ${\bf c}$ in the near-field limit in Theorem 
\ref{thm-near-field}.
\begin{figure}[H]
\begin{center}
\includegraphics[height=1.35in]{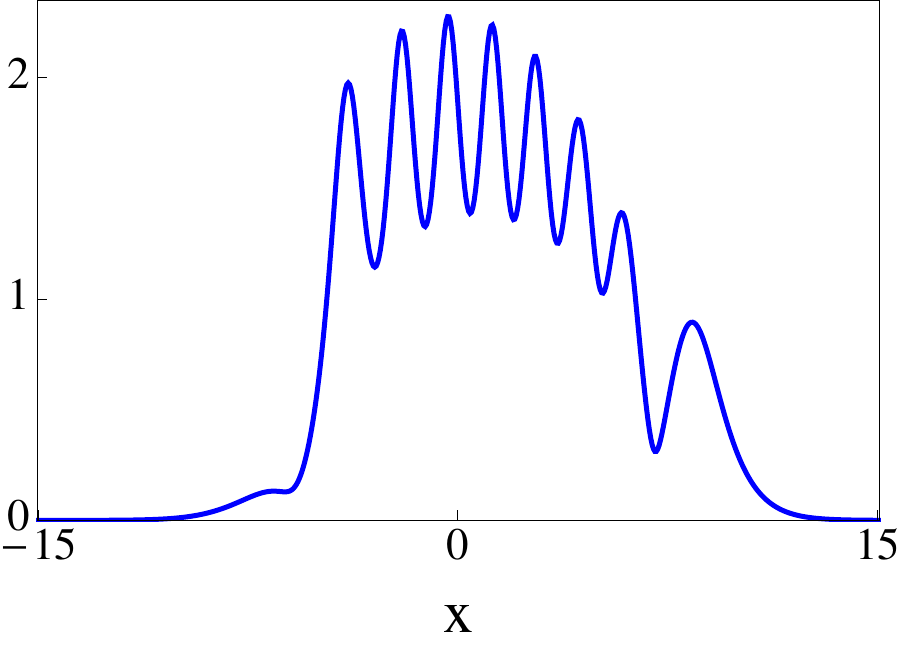}
\includegraphics[height=1.35in]{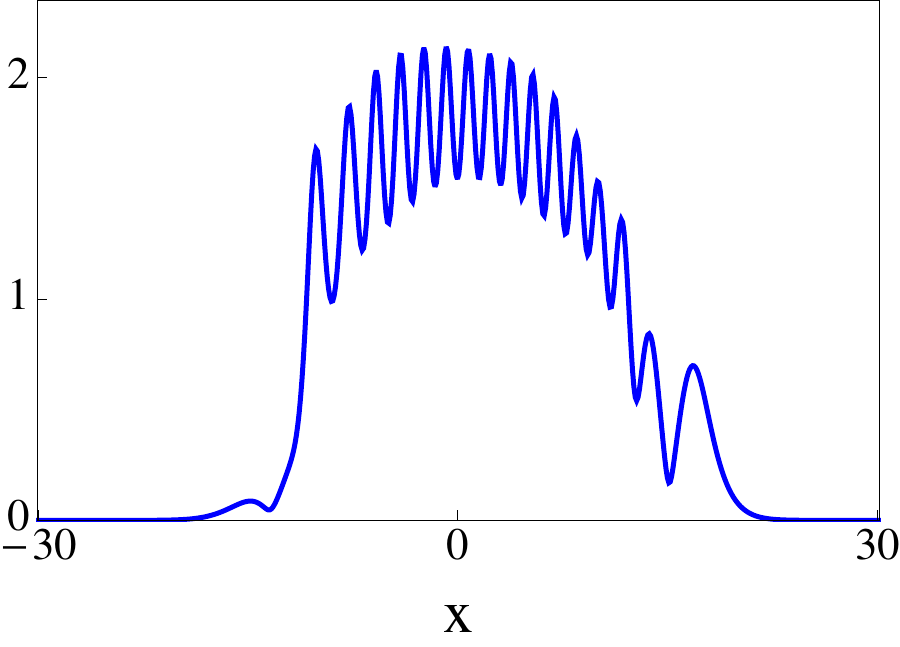}
\includegraphics[height=1.35in]{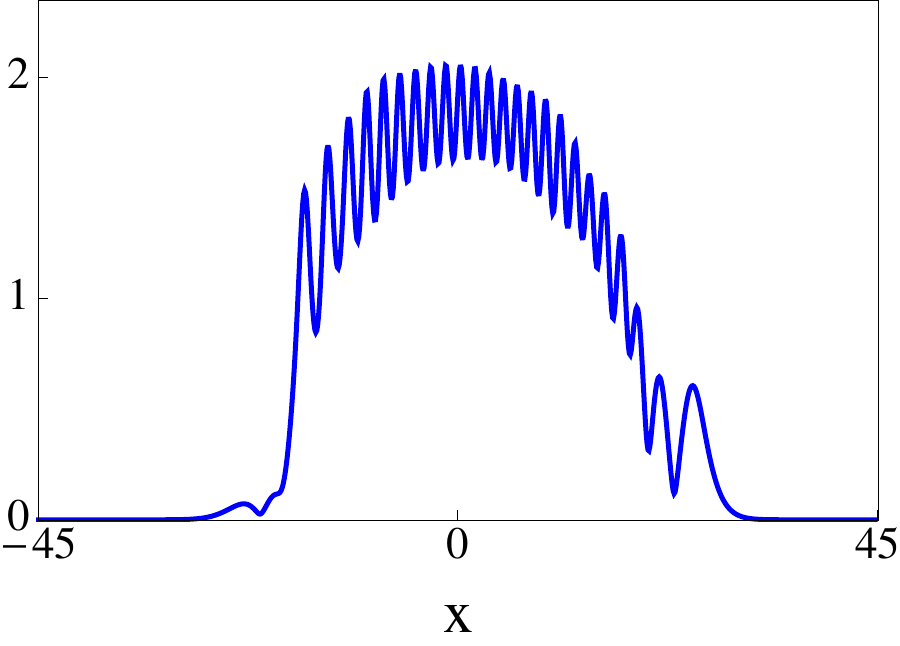}
\caption{Plots of $|\psi^{[2n]}(x,\frac{3}{8}n;(1,5))|$ for $t=\frac{3}{8}n$ and $-\frac{15}{4}n\leq x\leq \frac{15}{4}n$ (i.e. $\tau=\frac{3}{8}$ and $-\frac{15}{4}\leq\chi\leq\frac{15}{4}$), where $\psi^{[2n]}(x,t;(1,5))$ is a multiple-pole soliton solution of the nonlinear Schr\"odinger equation \eqref{nls}.  In each plot $c_1=1$, $c_2=5$, and $\xi=i$.   \emph{Left:} $n=4$.  \emph{Center:} $n=8$. \emph{Right:} $n=12$.}
\label{fig-far-field2-1d-c5}
\end{center}
\end{figure}

Our first result is Theorem \ref{thm-zero-reg}, showing there is a specific 
region in the $\chi\tau$-plane in which $\psi^{[2n]}(n\chi,n\tau)$ decays 
exponentially fast to zero as $n\to\infty$.  This region depends on $\xi$ but 
is independent of ${\bf c}$.  We provide an explicit (though transcendental) 
characterization of the boundary of the zero region.  

We state our results for $\xi=i$, which is sufficient for general $\xi$ by 
symmetry properties of \eqref{nls} since there is only one conjugate pair of 
poles in the scattering data.  Formulas for general $\xi$ are provided in 
\S\ref{sec-zero-reg}.  The boundary curve consists of two different types of 
points.  First, consider the locus of points $(\chi,\tau)\in\mathbb{R}^2$ 
satisfying 
\eq
\label{xi-i-boundary-quadratic}
16 \tau^4 + (8 \chi^2 - 72 \chi + 108)\tau^2 + (\chi^4 - 2\chi^3) = 0.
\endeq
Part of this locus is a smooth arc with endpoints 
$\left(\frac{9}{4},\pm\frac{3\sqrt{3}}{8}\right)$ and containing the point 
$(2,0)$.  Call this arc $\mathcal{L}_1$.  This arc appears to separate the 
zero region from a region in which the leading-order behavior of 
$\psi^{[2n]}(n\chi,n\tau)$ is specified by a model Riemann-Hilbert problem 
with a single band (suggesting non-oscillatory behavior).  

Next, given $\xi\in\mathbb{C}$, define the phase function 
\eq
\label{phi}
\varphi(\lambda;\chi,\tau) :=i(\lambda \chi+\lambda^2 \tau) + \log\left(\frac{\lambda-\xi^*}{\lambda-\xi}\right).
\endeq
The critical points of $\varphi(\lambda;\chi,\tau)$ are those values of 
$\lambda$ satisfying 
\eq
\label{phi-crit-eqn}
2\tau\lambda^3+\chi\lambda^2+2\tau\lambda+(\chi-2)=0.
\endeq
For $\tau=0$ and $\chi>2$, define 
\eq
\label{lambda-plus-tau-zero}
\lambda_+(\chi,0):=\left(\frac{2-\chi}{\chi}\right)^{1/2}
\endeq
so that $\Im(\lambda_+(\chi,0))>0$.  Note that if $\tau=0$, then 
$\lambda_+(\chi,0)$ satisfies \eqref{phi-crit-eqn}.  For $\tau\geq 0$, let 
$\lambda_+(\chi,\tau)$ be the solution of \eqref{phi-crit-eqn} that is the 
analytic continuation in $\tau$ of \eqref{lambda-plus-tau-zero}.  We restrict 
this definition to $(\chi,\tau)$ values that can be reached by a vertical path 
in the $\chi\tau$-plane starting at $(\chi,0)$ along which no two solutions of 
\eqref{phi-crit-eqn} coincide.  Then, the condition 
\eq
\label{zero-boundary2}
\Re(\varphi(\lambda_+(\chi,\tau),\chi,\tau))=0
\endeq
defines a simple unbounded curve in the first quadrant of the 
$\chi\tau$-plane 
with one endpoint at $\left(\frac{9}{4},\frac{3\sqrt{3}}{8}\right)$.  Denote 
this curve by $\mathcal{L}_2^+$ and define 
$\mathcal{L}_2^-:=\{(\chi,\tau):(\chi,-\tau)\in\mathcal{L}_2^+\}$.  
The curves $\mathcal{L}_2^+$ and $\mathcal{L}_2^-$ appear to separate the 
zero region from regions in which the leading-order behavior of 
$\psi^{[2n]}(n\chi,n\tau)$ is specified by a model Riemann-Hilbert problem 
with two bands (suggesting oscillatory behavior).  
Let $\mathcal{Z}_+$ denote the unbounded region in the 
$\chi\tau$-plane containing the ray $\chi>2$ and bounded by 
$\mathcal{L}_1\cup\mathcal{L}_2^+\cup\mathcal{L}_2^-$.  Also define 
$\mathcal{Z}_-:=\{(\chi,\tau):(-\chi,-\tau)\in\mathcal{Z}_+\}$.  Define the 
zero region $\mathcal{Z}:=\mathcal{Z}_+\cup\mathcal{Z}_-$.  
\begin{theorem}
\label{thm-zero-reg}
If $(\chi,\tau)\in\mathcal{Z}$,  
\eq
\label{zero-region-result}
\psi^{[2n]}(n\chi,n\tau;{\bf c}) = \mathcal{O}(e^{-dn})
\endeq
holds for some constant $d>0$.  
\end{theorem}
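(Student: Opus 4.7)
The plan is to apply the Deift-Zhou nonlinear steepest-descent method to Riemann-Hilbert Problem~\ref{rhp:psi-n} in the scaled spectral regime. Substituting $x=n\chi$ and $t=n\tau$ in \eqref{eq:jump-rhp-M}, the jump on $\partial D_0$ becomes
\eq
\mathbf{V}(\lambda)=e^{-in(\lambda\chi+\lambda^2\tau)\sigma_3}\,\mathbf{S}\left(\frac{\lambda-\xi}{\lambda-\xi^*}\right)^{n\sigma_3}\mathbf{S}^{-1}e^{in(\lambda\chi+\lambda^2\tau)\sigma_3}.
\endeq
Expanding the entries of $\mathbf{V}$ shows that each entry is a combination of two exponential factors whose phases are controlled by $\varphi(\lambda;\chi,\tau)$ from \eqref{phi}, multiplied by bounded $\mathbf{c}$-dependent coefficients, so that Problem~\ref{rhp:psi-n} fits the Deift-Zhou framework with large parameter $n$ and phase $\varphi$.

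The next step is to introduce a scalar $g$-function $g(\lambda;\chi,\tau)$ analytic on $\mathbb{C}$ apart from a cut from $\xi^*$ to $\xi$ inside $D_0$, with $g(\lambda)\to 0$ as $\lambda\to\infty$ and a jump across $\partial D_0$ tailored to absorb the dominant $R^{\pm n}$ growth of the diagonal of $\mathbf{V}$; in the zero region, the natural $g$ is essentially piecewise of the form $\pm\frac{1}{2}\log((\lambda-\xi)/(\lambda-\xi^*))$. Setting $\mathbf{N}(\lambda):=\mathbf{M}^{[n]}(\lambda)\,e^{ng(\lambda)\sigma_3}$, the transformed jump $\widetilde{\mathbf{V}}:=e^{-ng_-\sigma_3}\mathbf{V}e^{ng_+\sigma_3}$ admits an LDU factorization whose triangular factors have off-diagonal entries bounded in absolute value by constants times $e^{\pm n\Re\varphi}$. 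Opening lenses around $\partial D_0$, the triangular factors are deformed onto auxiliary contours chosen so that $\Re\varphi$ has the sign making each entry exponentially small in $n$, while the diagonal factor of $\widetilde{\mathbf{V}}$ is forced by construction of $g$ to be close to the identity.

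The main obstacle is the geometric verification that such lens contours exist inside $\mathbb{C}\setminus\{\xi,\xi^*\}$ for $(\chi,\tau)\in\mathcal{Z}$. Their existence is controlled by the topology of the level set $\{\lambda:\Re\varphi(\lambda;\chi,\tau)=0\}$, which is in turn governed by the critical points of $\varphi$, namely the roots of the cubic \eqref{phi-crit-eqn}. A direct calculation shows that the discriminant of this cubic in $\lambda$ equals $-4$ times the left-hand side of \eqref{xi-i-boundary-quadratic}, identifying $\mathcal{L}_1$ as the coalescence locus of two saddles, while the Stokes condition \eqref{zero-boundary2} defining $\mathcal{L}_2^\pm$ captures the second transition, at which $\Re\varphi$ vanishes at the distinguished saddle $\lambda_+$. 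Throughout $\mathcal{Z}$ the three saddles remain distinct and the sign chart of $\Re\varphi$ along their steepest-descent trajectories permits a lens deformation encircling $\xi$ and $\xi^*$; on the boundary of $\mathcal{Z}$ this configuration degenerates, consistent with the transition to the single- or two-band model RHPs mentioned in the introduction.

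With the lenses in place, every jump in the reduced problem is of the form $\mathbb{I}+O(e^{-dn})$ uniformly in $\lambda$, for some $d=d(\chi,\tau)>0$ bounded below on compact subsets of $\mathcal{Z}$. Standard small-norm RHP theory then gives $\mathbf{N}(\lambda)=\mathbb{I}+O(e^{-dn})$ uniformly for large $\lambda$; since $g(\infty)=0$, the conjugation by $e^{ng\sigma_3}$ is trivial at infinity, so \eqref{psi-from-M} yields $\psi^{[2n]}(n\chi,n\tau;\mathbf{c})=2i\lim_{\lambda\to\infty}\lambda[\mathbf{N}(\lambda)]_{12}=O(e^{-dn})$, establishing \eqref{zero-region-result}.
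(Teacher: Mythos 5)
Your overall architecture coincides with the paper's proof in \S\ref{sec-zero-reg}: the conjugation by $\left(\frac{\lambda-\xi^*}{\lambda-\xi}\right)^{n\sigma_3}$ outside $D_0$ (your $g$-function in disguise), the resulting jump $e^{-n\varphi\sigma_3}{\bf S}^{-1}e^{n\varphi\sigma_3}$ governed by the phase \eqref{phi}, the deformation of $\partial D_0$ onto the level loops $\partial D_\xi$ and $\partial D_{\xi^*}$ of $\Re(\varphi)=0$ surrounding $\xi$ and $\xi^*$, the LDU factorization of ${\bf S}^{-1}$ with lenses opened into regions where $\Re(\varphi)$ has a definite sign, and the concluding small-norm estimate are exactly the steps the paper carries out. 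Your discussion of the saddle coalescences along $\mathcal{L}_1$ and of the Stokes condition \eqref{zero-boundary2} along $\mathcal{L}_2^\pm$ likewise matches the paper's analysis of the signature charts of $\Re(\varphi)$.

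There is, however, one incorrect claim that leaves a gap: you assert that ``the diagonal factor of $\widetilde{\mathbf{V}}$ is forced by construction of $g$ to be close to the identity.'' It is not. The diagonal factor in the factorization \eqref{Dxi-Xpos} is the constant, $n$-independent matrix $\mathrm{diag}\left(\frac{|{\bf c}|}{c_1},\frac{c_1}{|{\bf c}|}\right)$, which equals $\mathbb{I}$ only if $c_2=0$ --- excluded since ${\bf c}\in(\mathbb{C}^*)^2$. A scalar $g$ of the form $\pm\log\left(\frac{\lambda-\xi}{\lambda-\xi^*}\right)$ is designed to absorb the $n$-dependent growth $\left(\frac{\lambda-\xi}{\lambda-\xi^*}\right)^{\pm n}$ and does not touch this constant factor. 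After the lenses are opened one is therefore left with non-negligible constant diagonal jumps on the closed loops $\partial D_\xi$ and $\partial D_{\xi^*}$, and these must be removed by an explicit outer parametrix --- the paper's Riemann-Hilbert Problem \ref{rhp:Q}, whose solution ${\bf Q}(\lambda;\chi,\tau)$ is piecewise-constant and diagonal. The conclusion of the theorem survives precisely because ${\bf Q}$ is diagonal and normalized to $\mathbb{I}$ at infinity, so $[{\bf Q}]_{12}\equiv 0$ and the $12$-entry of ${\bf M}^{[n]}$ still contributes only $\mathcal{O}(e^{-dn})$ to \eqref{psi-from-M}; but as written your reduced problem does not have jumps uniformly of the form $\mathbb{I}+\mathcal{O}(e^{-dn})$, so the small-norm theory cannot be invoked until this diagonal parametrix step is inserted.
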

Theorem \ref{thm-zero-reg} holds for general $\xi\in\mathbb{C}^+$ with 
$\mathcal{Z}$ defined 
as in \S\ref{sec-zero-reg}.  The boundary curve is independent of both 
${\bf c}$ and $n$, although it does depend on $\xi$ (see \S\ref{sec-zero-reg} 
for the $\xi$-dependent formulas).  Figure \ref{fig-zero-region-boundary} 
illustrates the boundary curve for two choices of $\xi$.  
\begin{figure}[H]
\begin{center}
\includegraphics[height=2.5in]{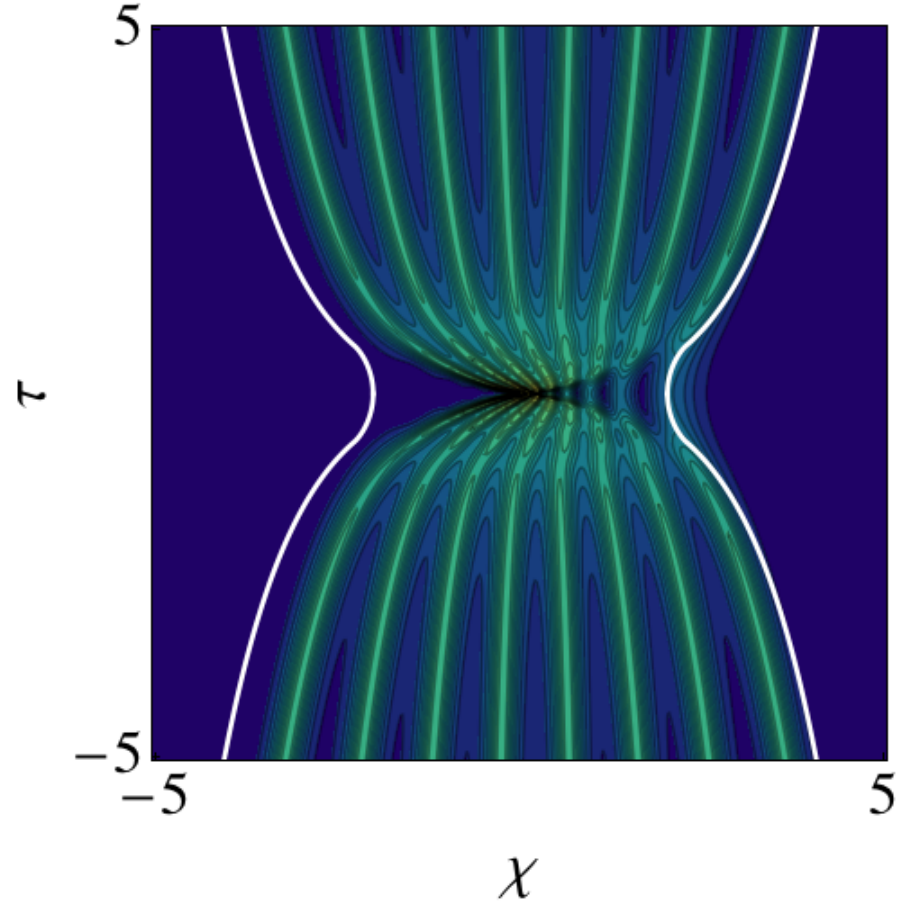}
\includegraphics[height=2.5in]{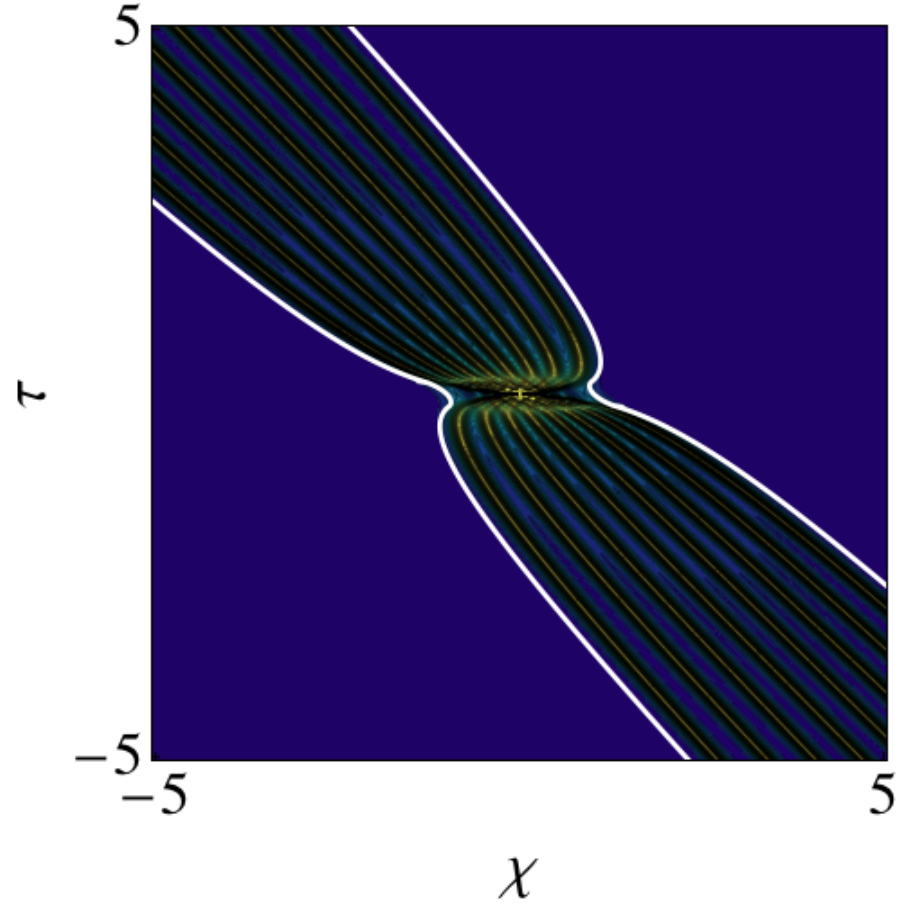}
\caption{The boundary of the zero region for two choices of $\xi$.  
\emph{Left:}  Plot of $|\psi^{[2n]}(n\chi,n\tau;(1,5))|$ with $c_1=1$, $c_2=5$, 
and $\xi=i$.  \emph{Right:}  Plot of $|\psi^{[2n]}(n\chi,n\tau;(1,1))|$ with 
$c_1=1$, $c_2=1$, and $\xi=\frac{1}{2}+2i$.  In both plots $n=4$, 
$-5\leq \chi\leq 5$, $-5\leq \tau\leq 5$, and $\psi^{[2n]}(x,t;{\bf c})$ 
is a multiple-pole soliton solution of the nonlinear Schr\"odinger equation 
\eqref{nls}.}
\label{fig-zero-region-boundary}
\end{center}
\end{figure}

\subsection{Near-field results}
\label{sec:near-field}
From plots such as Figure \ref{fig-origin-3d}, it is evident that the 
qualitative behavior of $\psi^{[2n]}(x,t;{\bf c})$ near $(x,t)=(0,0)$ is 
distinctly different from elsewhere in the space-time plane and is dominated 
by a single peak (with shape dependent on ${\bf c}$) with amplitude of 
$\mathcal{O}(n)$ for $n$ large.  
\begin{figure}[H]
\begin{center}
\includegraphics[height=2.4in]{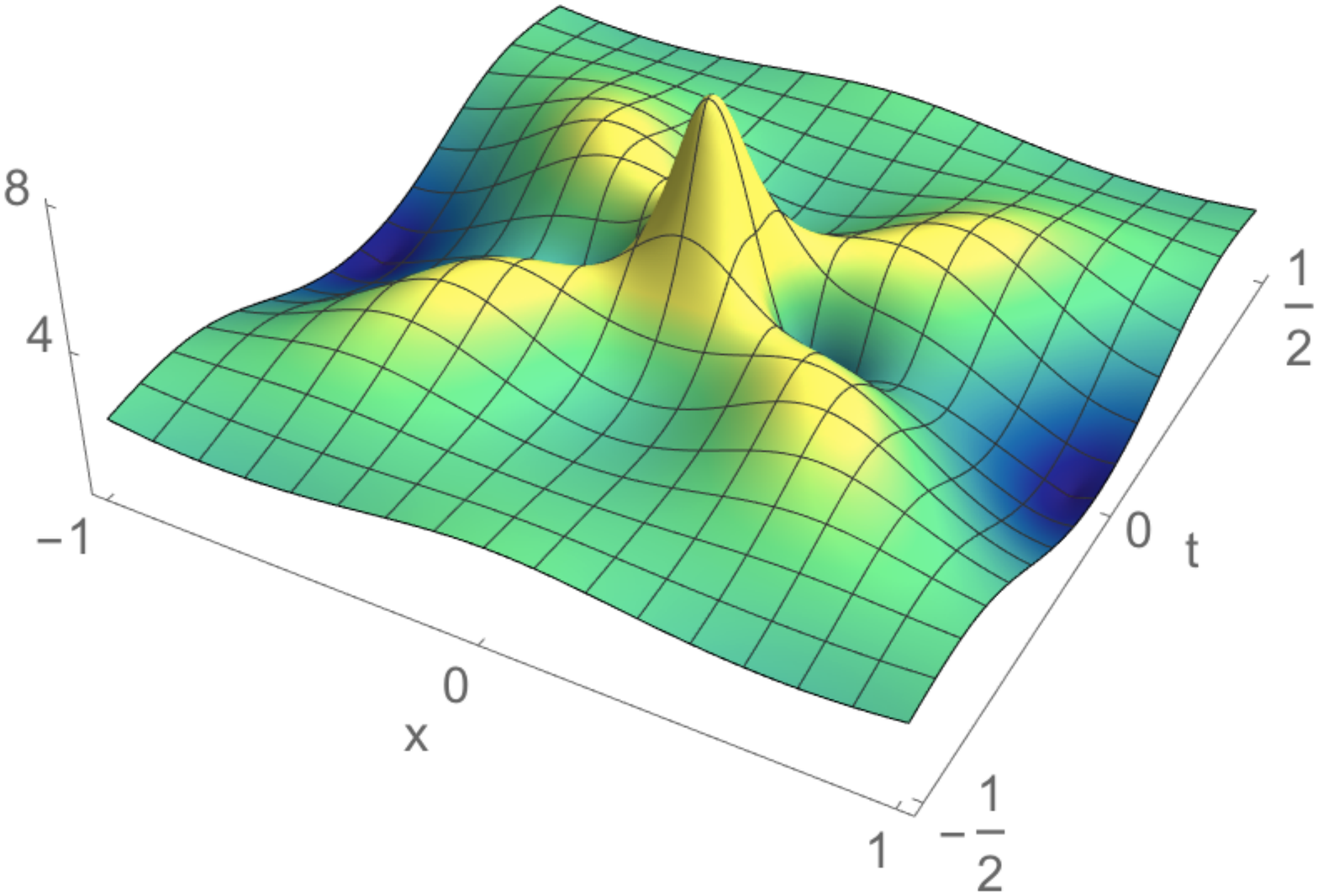}
\includegraphics[height=2.4in]{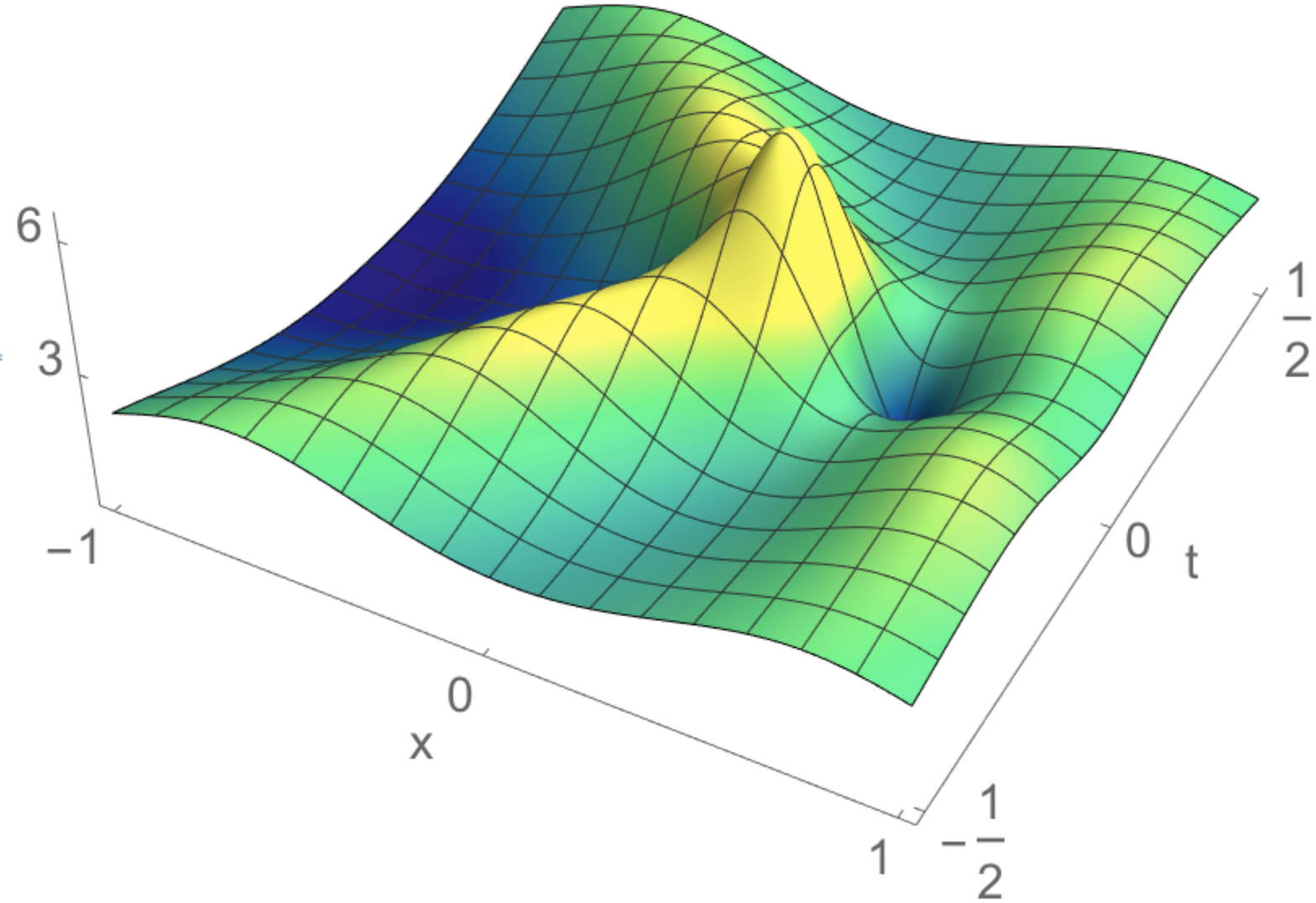}
\caption{Plots of $|\psi^{[2]}(x,t;{\bf c})|$ illustrating the behavior near 
$(x,t)=(0,0)$, where $\psi^{[2]}(x,t;{\bf c})$ is a $2^\text{nd}$-order pole 
soliton solution of the nonlinear Schr\"odinger equation \eqref{nls}. 
\emph{Left:} ${\bf c}=(1,1)$.  \emph{Right:} ${\bf c}=(1,5)$.}
\label{fig-origin-3d}
\end{center}
\end{figure}
We study the behavior in this region by defining the rescaled variables 
\eq
\label{near-field-scaling}
X:=nx, \quad T:=n^2t.
\endeq
Our main result is that, in the large-$n$ limit, the function 
$\frac{1}{n}\psi^{[2n]}(\frac{X}{n},\frac{T}{n^2};{\bf c})$ is 
well-approximated by a 
function $\Psi(X,T;{\bf c})\equiv\Psi(X,T)$ that satisfies Painlev\'e-III 
hierarchy equations in the sense of Sakka \cite{Sakka:2009}.  The functions 
$\Psi(X,T;(1,1))$ and $\Psi(X,T;(1,-1))$ (with $\xi=i$) were first identified 
recently in two different but related contexts,  self-focusing in nonlinear 
geometric optics \cite{Suleimanov:2017} and rogue waves \cite{BilmanLM:2018}. 
The work \cite{BilmanLM:2018}, which is more closely related to the current 
study, analyzes solutions of the focusing nonlinear Schr\"odinger equation 
\eqref{nls2} generated from repeated Darboux transformations applied to the 
constant solution $\psi(x,t)\equiv 1$.  The resulting solutions, referred to 
in \cite{BilmanLM:2018} as fundamental rogue waves, can be viewed as 
higher-order analogues of the Peregrine breather.  The spectral data encoding 
the fundamental rogue waves includes a conjugate pair of singularities of 
fractional order (as opposed to a pair of poles of integer order in our 
situation).  The far-field behavior is completely different behavior for the 
two problems (compare, say, Figure \ref{fig-far-field-2d} above with Figure 2 
in \cite{BilmanLM:2018}), and it is only in an appropriate scaling near the 
origin that the functions $\Psi(X,T)$ emerge.  This situation in which the 
local behavior of more than one solution is described by the same Painlev\'e 
transcendent or other special function is a hallmark of integrable wave 
equations.  

Some of the main results concerning these functions in 
\cite{BilmanLM:2018} are as follows:
\begin{enumerate}
\item[(a$^\prime$)] Fundamental rogue-wave solutions of \eqref{nls2}
near the origin are, after appropriate scaling, well approximated by certain 
functions $\Psi(X,T;(1,\pm 1))$ that are solutions of \eqref{nls}.
\item[(b$^\prime$)] For fixed $T$, the functions $\Psi(X,T;(1,\pm 1))$ satisfy
the second member of Sakka's Painlev\'e-III hierarchy \cite{Sakka:2009} with 
certain parameters.  These functions satisfy $\Psi(0,0;(1,\pm 1))=\pm 4$, 
$\Psi(-X,T;(1,\pm 1))=\Psi(X,T;(1,\pm 1))$, and 
$\Psi(X,-T;(1,\pm 1))=\Psi(X,T;(1,\pm 1))^*$.
\item[(c$^\prime$)] For $T=0$, the functions $\Psi(X,0;(1,\pm 1))$ generate 
certain solutions of the Painlev\'e-III equation.  
\end{enumerate}

In this work we prove analogues of (a$^\prime$)--(c$^\prime$) for the 
higher-order pole soliton solutions of \eqref{nls}.  We start with a more 
general Darboux transformation than in \cite{BilmanLM:2018,BilmanM:2017}, 
allowing both general $\xi\in\mathbb{C}^+$ and general 
${\bf c}\in(\mathbb{C}^*)^2$.  The choice of $\xi$ can effectively be scaled 
out and does not introduce new phenomena, so we will fix $\xi=i$.  However, 
changing ${\bf c}$ amounts to changing the boundary 
conditions of $\Psi(X,T;{\bf c})$, leading to a new family of distinguished 
Painlev\'e solutions.  We emphasize that $\Psi(X,T;{\bf c})$ depends only on 
the ratio $c_1/c_2$, so in effect it is a one-(complex)-parameter family of 
solutions (see the discussion at the end of \S\ref{subsec-Darboux}).  We 
illustrate the convergence of $\frac{1}{n}\psi^{[2n]}(\frac{X}{n},0;{\bf c})$ 
to $\Psi(X,0;{\bf c})$ for ${\bf c}=(1,1)$ and ${\bf c}=(1,5)$ in Figures 
\ref{fig-origin} and \ref{fig-origin-c5}, respectively.  The plots of 
$\frac{1}{n}\psi^{[2n]}(\frac{X}{n},0;{\bf c})$ ($n=2,4,8$) were generated by 
solving the associated Riemann-Hilbert problem recast as a linear system (see 
Appendix \ref{sec-app-linear}).  The functions $\Psi(X,0;{\bf c})$ were 
computed using the methodology introduced in \cite{TrogdonO:2016} and 
\texttt{RHPackage} \cite{Olver:website} (see Appendix \ref{sec-app-Psi}).  
Note that the even $X$-symmetry enjoyed by $\Psi(X,0;(1,\pm 1))$ is broken 
for general ${\bf c}$.  
\begin{figure}[H]
\begin{center}
\includegraphics[height=1.6in]{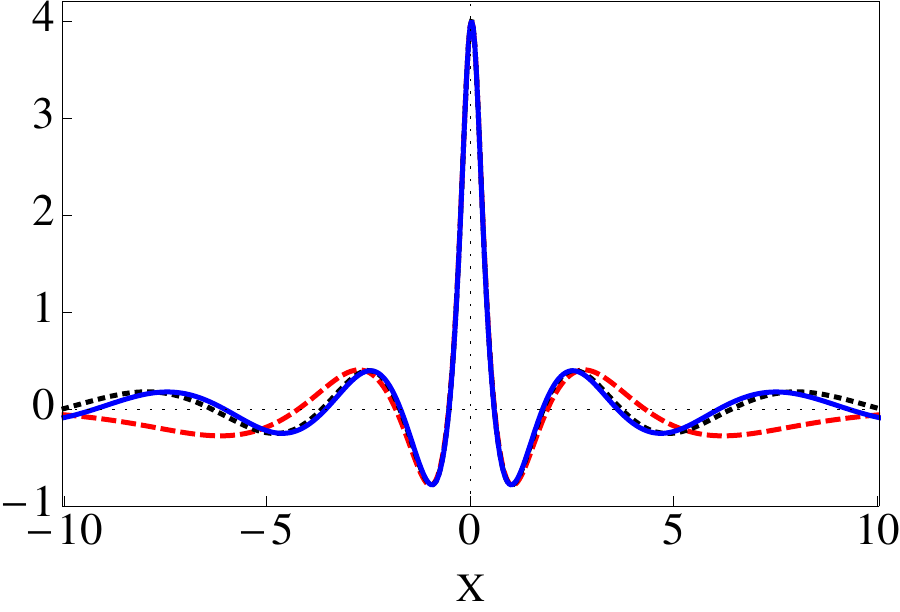}
\includegraphics[height=1.6in]{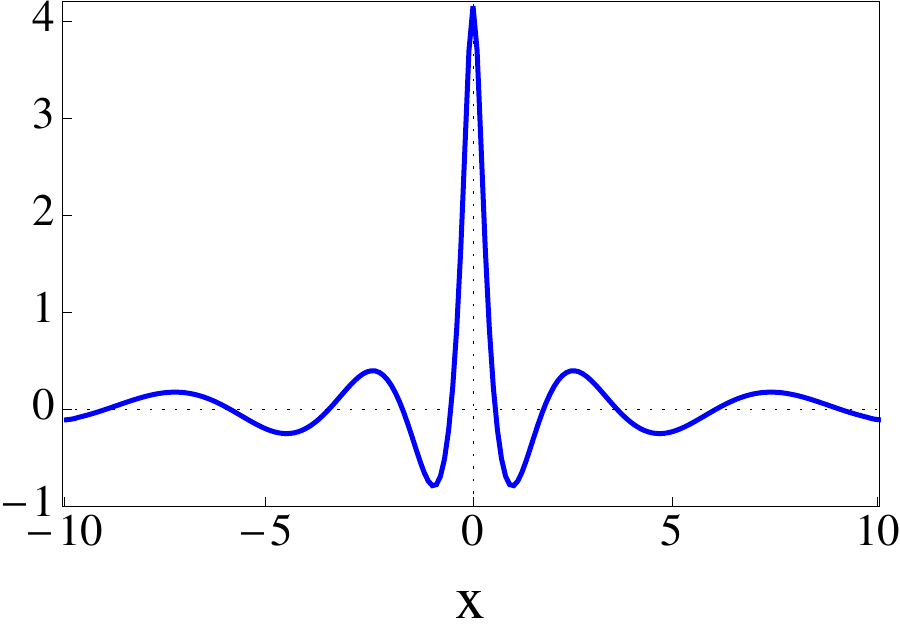}
\caption{\emph{Left:}  Scaled multiple-pole soliton solutions $\frac{1}{n}\psi^{[2n]}(\frac{X}{n},0;(1,1))$ of the nonlinear Schr\"odinger equation \eqref{nls} for $T=0$ and $-10\leq X\leq 10$ (i.e. $t=0$ and $-\frac{10}{n}\leq x\leq\frac{10}{n}$) for $n=2$ (red and dashed), $n=4$ (black and dotted), and $n=8$ (blue and solid) with ${\bf c}=(1,1)$.  \emph{Right:} The limiting function $\Psi(X,0;(1,1))$ with ${\bf c}=(1,1)$.}
\label{fig-origin}
\end{center}
\end{figure}
\begin{figure}[H]
\begin{center}
\includegraphics[height=1.4in]{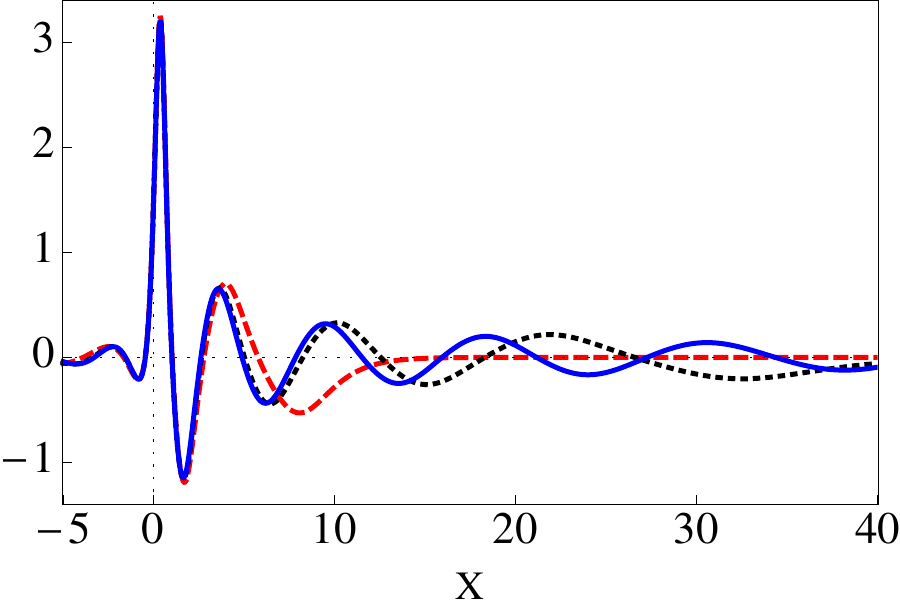}
\includegraphics[height=1.4in]{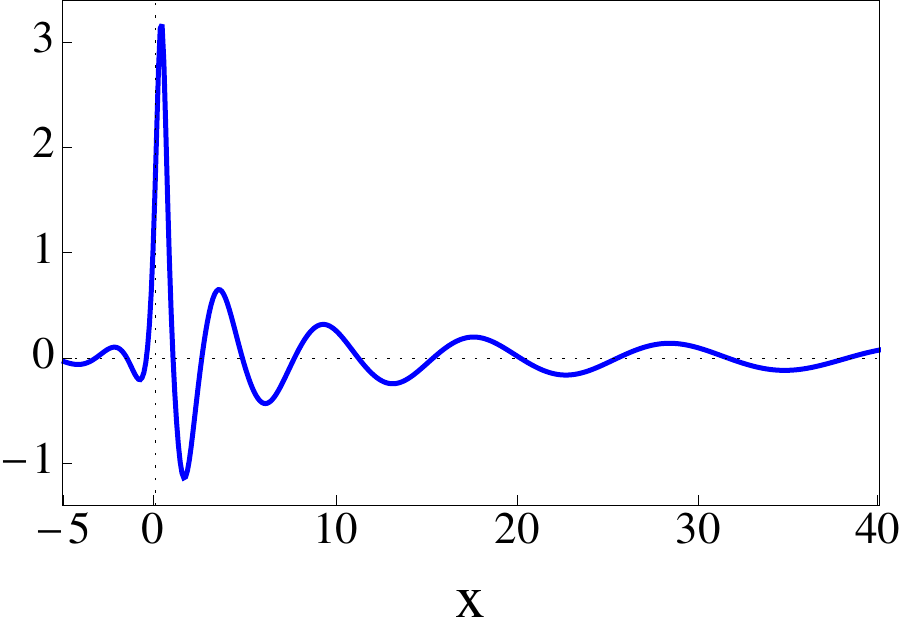}
\caption{\emph{Left:}  Scaled multiple-pole soliton solutions $\frac{1}{n}\psi^{[2n]}(\frac{X}{n},0;(1,5))$ of the nonlinear Schr\"odinger equation \eqref{nls} for $T=0$ and $-5\leq X\leq 40$ (i.e. $t=0$ and $-\frac{5}{n}\leq x\leq\frac{40}{n}$) for $n=2$ (red and dashed), $n=4$ (black and dotted), and $n=8$ (blue and solid) with ${\bf c}=(1,5)$.  \emph{Right:}  The limiting function $\Psi(X,0;(1,5))$ with ${\bf c}=(1,5)$.}
\label{fig-origin-c5}
\end{center}
\end{figure}
We now state our near-field results.  
\begin{theorem}
\label{thm-near-field}
Fix ${\bf c}\equiv(c_1,c_2)\in(\mathbb{C}^*)^2$ and set $\xi=i$.  Then the 
following hold. 
\begin{enumerate}
\item[(a)] There is an $n$-independent function $\Psi(X,T;{\bf c})$ 
such that, as $n\to\infty$,
\eq
\label{psi-Psi-approx}
\frac{1}{n}\psi^{[2n]}\left(\frac{X}{n},\frac{T}{n^2};{\bf c}\right) = \Psi(X,T;{\bf c}) + \mathcal{O}\left(\frac{1}{n}\right)
\endeq
uniformly for compact subsets of the $XT$-plane. 
The function $\Psi(X,T;{\bf c})$ is a solution of \eqref{nls} in the variables 
$X$ and $T$, i.e. 
\eq
\label{nls-XT}
i\Psi_T + \frac{1}{2}\Psi_{XX} + |\Psi|^2\Psi = 0, \quad X,T\in\mathbb{R}.
\endeq
\item[(b)] For fixed $T\in\mathbb{R}$, the function $\Psi(X,T;{\bf c})$ 
satisfies the fourth-order ordinary differential equation 
\eq
\label{fourth-order-ode}
\begin{split}
X\Psi\Psi_{XXX}+3\Psi\Psi_{XX}-X\Psi_X\Psi_{XX}-2(\Psi_X)^2+4\Psi^3\Psi^*+2X\Psi^2\Psi^*\Psi_X+2X\Psi^3\Psi_X^* & \\
+iT(\Psi\Psi_{XXXX}-\Psi_X\Psi_{XXX}+6\Psi^2\Psi_X\Psi_X^*+6\Psi^2\Psi^*\Psi_{XX}) & = 0,
\end{split}
\endeq
which is the second member of Sakka's Painlev\'e-III hierarchy 
\cite{Sakka:2009} with certain parameters.  Also, $\Psi(X,T;{\bf c})$ 
satisfies the initial conditions 
\eq
\label{Psi-values}
\Psi(0,0;{\bf c}) = 8\frac{c_1c_2^*}{|{\bf c}|^2}, \quad \Psi_X(0,0;{\bf c}) = 32\frac{c_1c_2^*}{|{\bf c}|^4}(c_2c_2^*-c_1c_1^*),
\endeq
as well as the symmetries
\eq
\label{Psi-symmetry-X}
\Psi(-X,T;{\bf c}^*\sigma_1) = \Psi(X,T;{\bf c}) 
\endeq
and 
\eq
\label{Psi-symmetry-T}
\Psi(X,-T;{\bf c})^* = \Psi(X,T;{\bf c}) \text{ if }{\bf c}\in\mathbb{R}^2.
\endeq
In particular, if ${\bf c}\in\mathbb{R}^2$ then $\Psi(X,0;{\bf c})$ is 
real-valued.
\item[(c)]  If ${\bf c}\in\mathbb{R}^2$, then 
\eq
\label{u-def}
u(s;{\bf c}):=\frac{2s^2\Psi(-\frac{1}{8}s^2,0;{\bf c})}{(s^2\Psi(-\frac{1}{8}s^2,0;{\bf c}))_s}
\endeq
satisfies the standard Painlev\'e-III equation 
\eq
\label{PIII}
u_{ss} = \frac{1}{u}(u_s)^2 - \frac{1}{s}u_s + \frac{4\Theta_0 u^2+4(1-\Theta_\infty)}{s} + 4u^3 - \frac{4}{u}
\endeq
with parameters $\Theta_\infty=\Theta_0=0$.
The odd function $u(s;{\bf c})$ satisfies
\eq
\label{u-values}
u_s(0;{\bf c}) = 1, \quad u_{sss}(0;{\bf c}) = \frac{3}{|{\bf c}|^2}(c_2^2 - c_1^2).
\endeq
\end{enumerate}
\end{theorem}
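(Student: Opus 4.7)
The plan is to establish parts (a)--(c) via a Riemann--Hilbert analysis of ${\bf M}^{[n]}(\lambda;X/n,T/n^2;{\bf c})$ in the near-field scaling, adapting the approach of \cite{BilmanLM:2018} to the integer-order poles of \eqref{nls}.

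For part (a), I would exploit the freedom to deform $\partial D_0$ (as long as it continues to enclose $\xi$) and replace it by a circle of radius proportional to $n$ centered at the origin, then substitute $\lambda = n\mu$. The rescaled contour is then a fixed circle in the $\mu$-plane, independent of $n$. Under this substitution the phase $e^{-i(\lambda x+\lambda^2 t)\sigma_3}$ with $x=X/n$ and $t=T/n^2$ becomes exactly $e^{-i(\mu X+\mu^2 T)\sigma_3}$, while the factor $((\lambda-\xi)/(\lambda-\xi^*))^n=((\mu-\xi/n)/(\mu-\xi^*/n))^n$ converges to $e^{(\xi^*-\xi)/\mu}$ with error $O(1/n)$, uniformly on compact subsets away from $\mu=0$. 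Small-norm Riemann--Hilbert theory then gives \eqref{psi-Psi-approx} uniformly on compact subsets of the $XT$-plane, and defines $\Psi(X,T;{\bf c})$ via $2i\lim_{\mu\to\infty}\mu[\widetilde{\bf N}(\mu;X,T;{\bf c})]_{12}$, where $\widetilde{\bf N}$ solves the limiting ($n$-independent) RHP with jump $e^{-i(\mu X+\mu^2 T)\sigma_3}{\bf S}e^{(\xi^*-\xi)\sigma_3/\mu}{\bf S}^{-1}e^{i(\mu X+\mu^2 T)\sigma_3}$. Because the $(X,T)$-dependence enters the limiting RHP only through the standard Zakharov--Shabat exponential, the dressing method yields a Lax pair showing that $\Psi$ satisfies \eqref{nls-XT}.

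For part (b), the fourth-order ODE \eqref{fourth-order-ode} follows from the isomonodromic structure of the limiting RHP in $\mu$; the algebraic derivation mirrors that in \cite{BilmanLM:2018}, which depends on the RHP structure rather than on the specific boundary matrix ${\bf S}$, and so applies for general ${\bf c}$. The initial values \eqref{Psi-values} come from direct evaluation of the RHP at $(x,t)=(0,0)$: the phases are trivial, the jump reduces to ${\bf S}((\lambda-\xi)/(\lambda-\xi^*))^{n\sigma_3}{\bf S}^{-1}$, and conjugation by ${\bf S}$ diagonalizes it. Expanding at $\lambda=\infty$ and using \eqref{psi-from-M} gives $\psi^{[2n]}(0,0;{\bf c})=-4in(\xi-\xi^*)c_1c_2^*/|{\bf c}|^2$, which equals $8nc_1c_2^*/|{\bf c}|^2$ for $\xi=i$; the derivative $\Psi_X(0,0;{\bf c})$ is obtained by differentiating the RHP with respect to $x$ at the origin and reading off the next Laurent coefficient at $\lambda=\infty$. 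The symmetries \eqref{Psi-symmetry-X}--\eqref{Psi-symmetry-T} follow from discrete RHP symmetries: $\lambda\mapsto -\lambda$ (which for $\xi=i$ interchanges $\xi$ and $\xi^*$, compensated by a $\sigma_1$-conjugation that permutes $c_1\leftrightarrow c_2$ with complex conjugation) produces \eqref{Psi-symmetry-X}, and $\lambda\mapsto\lambda^*$ combined with complex conjugation of the full RHP produces \eqref{Psi-symmetry-T}.

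For part (c), when ${\bf c}\in\mathbb{R}^2$ and $T=0$, the symmetry \eqref{Psi-symmetry-T} makes $\Psi(X,0;{\bf c})$ real-valued, and \eqref{fourth-order-ode} collapses to a third-order ODE in $X$. Substituting $X=-s^2/8$ and the definition \eqref{u-def} of $u(s;{\bf c})$ into this third-order equation, direct algebraic manipulation confirms that $u$ satisfies \eqref{PIII} with $\Theta_0=\Theta_\infty=0$. The initial data \eqref{u-values} then follows by the chain rule from \eqref{Psi-values} together with $\Psi_{XX}(0,0;{\bf c})$ and $\Psi_{XXX}(0,0;{\bf c})$, which are computed by the same RHP-based method as $\Psi_X(0,0;{\bf c})$. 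The main obstacle is expected to be the rigorous small-norm analysis of part (a): the limiting jump has an essential singularity at $\mu=0$ that must be encircled by (but bounded away from) the rescaled contour, the $O(1/n)$ estimate must be uniform on compact $(X,T)$-subsets, and the existence and uniqueness of the limiting RHP's solution for all $(X,T)\in\mathbb{R}^2$ requires a separate vanishing-lemma argument.
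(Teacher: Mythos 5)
Your proposal follows essentially the same route as the paper: rescaling $\lambda=n\Lambda$ with $\partial D_0$ taken of radius $n$ so that $\bigl((\lambda-\xi)/(\lambda-\xi^*)\bigr)^n\to e^{-2i\beta\Lambda^{-1}}$ with an $\mathcal{O}(n^{-1})$ error, a small-norm comparison with the resulting $n$-independent near-field Riemann--Hilbert problem for part (a), the constant-jump/Lax-pair argument of Bilman--Ling--Miller (which is insensitive to the particular constant matrix ${\bf S}$) for the fourth-order ODE, explicit evaluation of the solution at the origin for \eqref{Psi-values}, discrete symmetries of the limiting problem plus uniqueness for \eqref{Psi-symmetry-X}--\eqref{Psi-symmetry-T}, and direct substitution for part (c). The one detail that needs correcting is the $T$-reflection symmetry: the map $\Lambda\mapsto\Lambda^*$ composed with entrywise conjugation sends the jump matrix to its \emph{inverse} when ${\bf c}\in\mathbb{R}^2$ (it flips the sign of the $e^{-2i\Lambda^{-1}\sigma_3}$ factor without flipping the phase conjugation appropriately), so that candidate solution does not satisfy the same jump condition; the paper instead uses $\Lambda\mapsto-\Lambda^*$, under which ${\bf A}(-\Lambda^*;X,-T;{\bf c})^*$ solves exactly the same Riemann--Hilbert problem and uniqueness yields \eqref{Psi-symmetry-T}.
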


\subsection{Outline and notation}

In \S\ref{sec-Darboux} we define the Darboux transformation that generates 
$\psi^{[2n]}(x,t)$ from $\psi^{[2n-2]}(x,t)$ and use the robust 
inverse-scattering transform recently introduced by Bilman and Miller 
\cite{BilmanM:2017} to derive Riemann-Hilbert Problem \ref{rhp:psi-n} 
encoding $\psi^{[2n]}(x,t)$.  In \S\ref{sec-zero-reg} we analyze this 
Riemann-Hilbert problem in the far-field scaling \eqref{chi-tau} and prove 
Theorem \ref{thm-zero-reg} concerning the zero region.  In 
\S\ref{sec-near-field} we consider the near-field scaling 
\eqref{near-field-scaling} and prove Theorem \ref{thm-near-field} showing the 
local behavior near the origin is described by the solution $\Psi(X,T)$ of 
Sakka's Painlev\'e-III hierarchy.  In Appendix \ref{sec-app-linear} we show 
how the basic Riemann-Hilbert problem can be reformulated as a linear system, 
a fact used to plot $\psi^{[2n]}(x,t)$ in Figures 
\ref{fig-far-field-2d}--\ref{fig-origin-c5}.  In Appendix \ref{sec-app-Psi} 
we describe how \texttt{RHPackage} is used to compute $\Psi(X,T)$ for Figures 
\ref{fig-origin} and \ref{fig-origin-c5}.

\

\noindent
{\bf Notation.}
We define
\eq
\mathbb{I}:=\bpm 1 & 0 \\ 0 & 1 \epm, \quad \sigma_1 := \bpm 0 & 1 \\ 1 & 0 \epm, \quad \sigma_2:= \bpm 0 & -i \\ i & 0 \epm, \quad \sigma_3:=\bpm 1 & 0 \\ 0 & -1 \epm.
\endeq
With the exception of the identity matrix and the Pauli matrices, we denote 
$2\times2$ matrices by bold upper-case letters and 2-vectors by bold 
lower-case letters.  If ${\bf M}$ is a matrix, the $jk$-entry of ${\bf M}$ is 
denoted by $[{\bf M}]_{jk}$.  The complex conjugate of a number $a$ is 
denoted $a^*$, while the conjugate-transpose of a vector ${\bf v}$ is denoted 
${\bf v}^\dagger$.  By $\mathbb{C}^*$ we mean $\mathbb{C}\backslash\{0\}$, 
while by $\mathbb{C}^+$ we mean $\{z\in\mathbb{C}:\Im(z)>0\}$.  
The boundary of a domain $D\subset\mathbb{C}$ is denoted 
$\partial D$.  When we write  $f(x;s)$ we mean that $f$ is a function of $x$ 
with parametric dependence on $s$, and we may suppress the explicit 
dependence on parameters for brevity.

\section{The Darboux transformation and initial Riemann-Hilbert problem}
\label{sec-Darboux}

\subsection{Summary of the robust inverse scattering transform}
\label{subsec-robust}
We begin with a brief overview of the robust inverse-scattering transform 
introduced recently in \cite{BilmanM:2017} and how it differs from the standard 
inverse-scattering transform.  The basis for any inverse-scattering trasform for 
\eqref{nls} is the Lax pair \cite{ZakharovS:1972}
\eq
\label{lax-pair-xt}
\begin{split}
\frac{\partial{\bf u}}{\partial x}(\lambda;x,t) & = \bbm -i\lambda & \psi(x,t) \\ -\psi(x,t) & i\lambda \ebm {\bf u}(\lambda;x,t), \\ 
\frac{\partial{\bf u}}{\partial t}(\lambda;x,t) & = \bbm -i\lambda^2+\frac{i}{2}|\psi(x,t)|^2 & \lambda\psi(x,t) + \frac{i}{2}\psi_x(x,t) \\ -\lambda\psi(x,t)^* + \frac{i}{2}\psi_x(x,t)^* & i\lambda^2 - \frac{i}{2}|\psi(x,t)|^2 \ebm {\bf u}(\lambda;x,t).
\end{split}
\endeq
The standard inverse-scattering transform \cite{BealsC:1984} begins by defining the 
Jost matrices ${\bf J}^\pm(\lambda;x,t)$ to be the unique fundamental matrix of 
simultaneous solutions to \eqref{lax-pair-xt} satisfying the boundary conditions 
${\bf J}^\pm(\lambda;x,t)e^{i\lambda x\sigma_3} = \mathbb{I}+o(1)$, $x\to\pm\infty$.
The Jost functions also satisfy the scattering relation 
\eq
{\bf J}^+(\lambda;x,t)={\bf J}^-(\lambda;x,t)\bbm a(\lambda^*)^* & b(\lambda^*;t)^* \\ -b(\lambda;t) & a(\lambda) \ebm.
\endeq
If we denote the first and second columns of ${\bf J}^\pm$ by ${\bf j}^{\pm,1}$ and 
${\bf j}^{\pm,2}$, respectively, 
then we can define the Beals-Coifman simultaneous solution to \eqref{lax-pair-xt} 
as 
\eq
{\bf U}^\text{BC}(\lambda;x,t) := \begin{cases} \left[\frac{1}{a(\lambda)}{\bf j}^{-,1}(\lambda;x,t)e^{-i\lambda^2 t}, \hspace{.2in} {\bf j}^{+,2}(\lambda;x,t)e^{i\lambda^2 t}\right], & \lambda\in\mathbb{C}^+, \vspace{.05in}\\  
\left[{\bf j}^{+,1}(\lambda;x,t)e^{-i\lambda^2 t}, \hspace{.2in} \frac{1}{a(\lambda^*)^*}{\bf j}^{-,2}(\lambda;x,t)e^{i\lambda^2 t}\right], & \lambda\in\mathbb{C}^-. 
\end{cases}
\endeq
Then the function 
\eq
{\bf M}^\text{BC}(\lambda;x,t) := {\bf U}^\text{BC}(\lambda;x,t)e^{i(\lambda x+\lambda^2 t)\sigma_3}
\endeq
satisfies the normalization condition 
\eq
\label{MBC-normalization}
\lim_{\lambda\to\infty}{\bf M}^\text{BC}(\lambda;x,t) = \mathbb{I},
\endeq
the Schwarz-symmetry condition 
\eq
\label{Schwarz-symmetry}
{\bf M}^\text{BC}(\lambda;x,t) = \sigma_2{\bf M}^\text{BC}(\lambda^*;x,t)^*\sigma_2, \quad \lambda\in\mathbb{C}\backslash\mathbb{R},
\endeq
and the jump condition
\eq
\label{M-real-jump}
{\bf M}_+^\text{BC}(\lambda;x,t) = {\bf M}_-^\text{BC}(\lambda;x,t) \bbm 1+|R(\lambda)|^2 & R(\lambda)^*e^{-2i(\lambda x+\lambda^2 t)} \\ R(\lambda)e^{2i(\lambda x+\lambda^2 t)} & 1 \ebm, \quad \lambda\in\mathbb{R}, 
\endeq
where $R(\lambda):= b(\lambda;t)/a(\lambda)$.  
These properties allow ${\bf M}^\text{BC}(\lambda;x,t)$ to be obtained as the 
solution of a Riemann-Hilbert problem.  

The function ${\bf M}^\text{BC}(\lambda;x,t)$ has nice properties as 
$\lambda\to\infty$ (i.e. \eqref{MBC-normalization}).  However, for general 
$\lambda$ ${\bf M}^\text{BC}(\lambda;x,t)$ is only sectionally 
\emph{meromorphic}, with poles arising from zeros of $a(\lambda)$ 
corresponding to solitons.  In general, these poles can be handled in the 
Riemann-Hilbert formalism either by solving the problem exactly (as we do in 
Appendix \ref{sec-app-linear}) or by interpolation (see, for instance, 
\cite{KamvissisMM:2003}).  There are technical 
issues with unique solvability of the Riemann-Hilbert problem with spectral 
singularities, i.e.~points in the continuous spectrum for which 
${\bf M}^\text{BC}(\lambda;x,t)$ fails to have a boundary value 
\cite[\S1.1.2]{BilmanM:2017}.  These spectral singularies, which arise in 
particular in the case of Peregrine breathers, can be handled by a limiting 
procedure.  The robust inverse-scattering transform bypasses the limiting 
procedure and leads directly to a sectionally analytic Riemann-Hilbert 
problem, even for solutions whose scattering data under the standard 
inverse-scattering transform consist of spectral singularities of high order 
(i.e higher-order Peregrine breathers) \cite{BilmanLM:2018}.  While we are 
not concerned with spectral singularities in the current work, we take 
advantage of the robust inverse-scattering transform's ability to handle 
higher-order poles.

The key observation of the robust inverse-scattering transform is that 
different fundamental solutions of \eqref{lax-pair-xt} have desirable 
properties in different sections of the $\lambda$-plane.  The Beals-Coifman 
solution ${\bf U}^\text{BC}(\lambda;x,t)$ is well-behaved for $|\lambda|$ 
sufficiently large.  On the other hand, there are other solutions that are 
bounded in the regions where ${\bf U}^\text{BC}(\lambda;x,t)$ has poles.  
The following key proposition is proved in 
\cite[Proposition 2.1]{BilmanM:2017} for \eqref{nls2};  nevertheless the 
proof goes through verbatim for \eqref{nls}.  
\begin{proposition}
Suppose $\psi(x,t)$ is a bounded classical solution of \eqref{nls} defined 
for $(x,t)$ in a simply connected domain $\Omega\subset\mathbb{R}^2$ 
containing $(0,0)$.  Then, for each $\lambda\in\mathbb{C}$, there exists a 
unique simultaneous fundamental solution matrix 
${\bf U}^{\rm in}(\lambda;x,t)$, $(x,t)\in\Omega$, of the Lax pair equations 
\eqref{lax-pair-xt} together with the initial condition 
${\bf U}^{\rm in}(\lambda;0,0)=\mathbb{I}$.  Furthermore, 
${\bf U}^{\rm in}(\lambda;x,t)$ is an entire function of $\lambda$ for each 
$(x,t)\in\Omega$, $\det{\bf U}^{\rm in}(\lambda;x,t)\equiv 1$, and 
${\bf U}^{\rm in}(\lambda;x,t)=\sigma_2{\bf U}^{\rm in}(\lambda^*;x,t)^*\sigma_2$.
\label{p:U-in}
\end{proposition}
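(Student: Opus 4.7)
\emph{Proof plan.} The plan is to treat this as a Frobenius-type existence-and-uniqueness problem for an overdetermined linear system, and then to extract the claimed regularity, determinant, and symmetry properties from the explicit structure of the Lax coefficient matrices. Writing \eqref{lax-pair-xt} as $\partial_x{\bf u}={\bf X}(\lambda;x,t){\bf u}$ and $\partial_t{\bf u}={\bf T}(\lambda;x,t){\bf u}$, the matrices ${\bf X}$ and ${\bf T}$ are polynomial in $\lambda$ (of degrees one and two, respectively) with coefficients depending smoothly on $\psi$, $\psi_x$, and their conjugates. A direct calculation shows that the zero-curvature condition $\partial_t{\bf X}-\partial_x{\bf T}+[{\bf X},{\bf T}]=0$ is equivalent to $\psi$ satisfying \eqref{nls}. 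Since $\Omega$ is simply connected and $\psi$ is a classical solution, the Frobenius theorem for linear overdetermined systems produces, for each fixed $\lambda\in\mathbb{C}$, a unique matrix-valued function ${\bf U}^{\rm in}(\lambda;x,t)$ on $\Omega$ with ${\bf U}^{\rm in}(\lambda;0,0)=\mathbb{I}$; concretely one may integrate the $x$-ODE from $(0,0)$ to $(x,0)$ and then the $t$-ODE from $(x,0)$ to $(x,t)$, with the zero-curvature identity guaranteeing path-independence.

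The remaining properties then follow directly. Entire dependence on $\lambda$ comes from a Picard-iteration argument along any such path: the iterates are polynomials in $\lambda$ whose degrees grow linearly in the iteration index and which converge uniformly on compact subsets of $\mathbb{C}\times\Omega$ by a standard Gr\"onwall estimate, so the limit is holomorphic — hence entire — in $\lambda$. The identity $\det{\bf U}^{\rm in}(\lambda;x,t)\equiv 1$ follows from Abel's identity applied to each of the $x$- and $t$-flows, together with the trace-free property ${\rm tr}\,{\bf X}={\rm tr}\,{\bf T}=0$ and $\det\mathbb{I}=1$. For the Schwarz symmetry, a short matrix computation gives $\sigma_2{\bf X}(\lambda^*;x,t)^*\sigma_2={\bf X}(\lambda;x,t)$ and $\sigma_2{\bf T}(\lambda^*;x,t)^*\sigma_2={\bf T}(\lambda;x,t)$, which imply that $\widetilde{\bf U}(\lambda;x,t):=\sigma_2{\bf U}^{\rm in}(\lambda^*;x,t)^*\sigma_2$ satisfies the same overdetermined system with $\widetilde{\bf U}(\lambda;0,0)=\mathbb{I}$; uniqueness then forces $\widetilde{\bf U}\equiv{\bf U}^{\rm in}$.

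The main obstacle is essentially book-keeping rather than substance. One must verify the zero-curvature identity explicitly — a classical calculation for the Zakharov-Shabat pair — and then carry out the Picard-iteration argument with uniform control in $\lambda$ on compact sets to confirm that ${\bf U}^{\rm in}(\cdot;x,t)$ is entire and not merely locally analytic. The symmetry identity for ${\bf T}$ is slightly delicate because of its $\lambda^2$ and $\lambda$ terms and the presence of $\psi_x$, but it ultimately reduces to the conjugation rule $\sigma_2 M\sigma_2=\bpm m_{22} & -m_{21} \\ -m_{12} & m_{11} \epm$ applied to $M={\bf T}(\lambda^*;x,t)^*$.
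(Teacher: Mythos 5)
Your proposal is correct and follows essentially the same route as the paper, which does not reprove the statement but defers to \cite[Proposition 2.1]{BilmanM:2017}: existence and path-independence from the zero-curvature (compatibility) condition on the simply connected domain, entirety in $\lambda$ from Picard iteration with uniform bounds on compacts, $\det\equiv 1$ from the trace-free coefficients via Abel's identity, and the Schwarz symmetry from the conjugation identity for the coefficient matrices combined with uniqueness. No gaps.
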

Now define $D_0\subset\mathbb{C}$ to be an open disk centered at the origin 
of sufficiently large radius to contain all the singularities of 
${\bf U}^\text{BC}(\lambda;x,t)$.  Set 
\eq
\label{U-def}
{\bf U}(\lambda;x,t) := \begin{cases} {\bf U}^{\rm in}(\lambda;x,t), & \lambda\in D_0, \\ {\bf U}^\text{BC}(\lambda;x,t), & \text{otherwise} \end{cases}
\endeq
and define $\Sigma_L:=(-\infty,-r)$, $\Sigma_R:=(r,\infty)$, 
$\Sigma_+:=\partial D_0\cap\mathbb{C}^+$, $\Sigma_-:=\partial D_0\cap\mathbb{C}^-$ 
($\Sigma_L$ and $\Sigma_R$ are oriented left-to-right while $\Sigma_+$ and $\Sigma_-$ 
are oriented clockwise).  Then the function 
\eq
{\bf M}(\lambda;x,t) := {\bf U}(\lambda;x,t)e^{i(\lambda x+\lambda^2t)\sigma_3}, \quad \lambda\in\mathbb{C}\backslash\mathbb{R}, \quad (x,t)\in\mathbb{R}^2
\endeq
is analytic for $\lambda\notin\Sigma_L\cup\Sigma_R\cup\Sigma_+\cup\Sigma_-$, 
satisfies the jump \eqref{M-real-jump} on $\Sigma_L\cup\Sigma_R$, and has the jump 
\eq
{\bf M}_+(\lambda;x,t) = \begin{cases} \vspace{.02in} {\bf M}_-(\lambda;x,t) \left[\frac{1}{a(\lambda)}{\bf j}^{-,1}(\lambda;0,0), \hspace{.2in} {\bf j}^{+,2}(\lambda;0,0) \right], & \lambda\in\Sigma^+, \\ {\bf M}_-(\lambda;x,t) \left[{\bf j}^{+,1}(\lambda;0,0), \hspace{.2in} \frac{1}{a(\lambda^*)^*}{\bf j}^{-,2}(\lambda;0,0) \right], & \lambda\in\Sigma^- \end{cases}
\endeq
on the remaining contours.  Thus, in addition to having identity asymptotics at 
infinity (see \eqref{MBC-normalization}), ${\bf M}(\lambda;x,t)$ satisfies a jump 
condition in a form amenable to analysis via the nonlinear steepest-descent method 
of Deift and Zhou \cite{DeiftZ:1993}.  Once ${\bf M}(\lambda;x,t)$ is 
known, the solution to \eqref{nls} can be found by 
\eq
\psi(x,t) = 2i\lim_{\lambda\to\infty}\lambda[{\bf M}(\lambda;x,t)]_{12}.
\endeq

\subsection{Definition of the Darboux transformation}
\label{subsec-Darboux}
We now define the specific Darboux transformations we will use and show how 
to formulate them in terms of the robust inverse-scattering transform.  The 
basic idea of a Darboux transformation is to take a solution of \eqref{nls} 
and find the solution having the same Beals-Coifman scattering data with the 
exception of one or more additional poles (see \cite{MatveevS:1991} for further 
background).  

We begin by defining certain Darboux transformations depending on an 
eigenvalue $\xi=\alpha+i\beta\in\mathbb{C}^+$ and a row vector of connection 
coefficients ${\bf d}:=(d_1,d_2)\in\mathbb{C}^2$.  The Darboux 
transformations we study will be derived from these by taking a certain 
limit of ${\bf d}$.  
Given the matrix ${\bf U}(\lambda;x,t)$ defined in \eqref{U-def} and 
associated to a solution $\psi(x,t)$ of \eqref{nls}, we introduce 
\eq
\label{basic-Darboux}
\dot{\bf U}(\lambda;x,t) := \left(\mathbb{I} + \frac{{\bf R}(x,t)}{\lambda-\xi}\right){\bf U}(\lambda;x,t)
\endeq
for a to-be-determined matrix ${\bf R}(x,t)$.  The transformation 
\eqref{basic-Darboux} is constructed so that $\dot{\bf U}(\lambda;x,t)$ has 
the same jump conditions and normalization as $\lambda\to\infty$ as 
${\bf U}(\lambda;x,t)$.  If we further assume that ${\bf R}(x,t)^2={\bf 0}$, 
then we also have $\det\dot{\bf U}(\lambda;x,t)=1$.  The remaining freedom in 
determining ${\bf R}(x,t)$ can be used to ensure 
\eq
\mathop{\text{Res}}_{\lambda=\xi}\dot{\bf U}(\lambda;x,t) = \lim_{\lambda\to\xi}\dot{\bf U}(\lambda;x,t)\bbm d_1 \\ d_2 \ebm \bbm id_2 & -id_1 \ebm = \lim_{\lambda\to\xi}\dot{\bf U}(\lambda;x,t)\bbm id_1d_2 & -id_1^2 \\ id_2^2 & -id_1d_2 \ebm.
\endeq
This condition completely specifies ${\bf R}(x,t)$ (see 
\cite[\S 3.1]{BilmanM:2017} for complete details) as 
\eq
{\bf R}(x,t) = \frac{{\bf U}(\xi;x,t){\bf d}^\mathsf{T}{\bf d}{\bf U}(\xi;x,t)^\mathsf{T}\sigma_2}{1-{\bf d}{\bf U}(\xi;x,t)^\mathsf{T}\sigma_2{\bf U}'(\xi;x,t){\bf d}^\mathsf{T}}.
\endeq

At this point we would like to define 
$\dot{\bf M}(\lambda;x,t):=\dot{\bf U}(\lambda;x,t)e^{i(\lambda x+\lambda^2t)\sigma_3}$ 
and set up the associated Riemann-Hilbert problem.  The problem is that 
$\dot{\bf M}(\lambda;x,t)$ is not Schwarz-symmetric (i.e. does not satisfy an 
analogue of \eqref{Schwarz-symmetry}), and thus cannot generate a solution of 
\eqref{nls}.  This can be remedied by first performing a 
Darboux transformation ${\bf U}(\lambda;x,t)\to\dot{\bf U}(\lambda;x,t)$ with 
data $(\xi,{\bf d})$, and then performing a second Darboux transformation 
$\dot{\bf U}(\lambda;x,t)\to\ddot{\bf U}(\lambda;x,t)$ with data 
$(\xi^*,({\bf d}\sigma_2)^*)$.  We now write the composition of these two 
Darboux transformations explicitly (the interested reader can find full 
details of the straightforward calculation in \cite[\S3.2]{BilmanM:2017}).  
Define 
\eq
\begin{split}
{\bf s}_\text{f}(x,t):={\bf U}(\xi;x,t){\bf d}^\mathsf{T}, \quad N_\text{f}(x,t):={\bf s}_\text{f}(x,t)^\dagger{\bf s}_\text{f}(x,t), \\
w_\text{f}(x,t):={\bf d}{\bf U}(\xi;x,t)^\mathsf{T}\sigma_2{\bf U}^\prime(\xi;x,t){\bf d}^\mathsf{T}\phantom{nnnnn}
\end{split}
\endeq
(here the subscript f stands for ``finite'') and use these to define
\eq
\begin{split}
{\bf Y}_\text{f}(x,t):= & \frac{4\beta^2(1-w_\text{f}(x,t)^*)}{4\beta^2|1-w_\text{f}(x,t)|^2 + N_\text{f}(x,t)^2}{\bf s}_\text{f}(x,t){\bf s}_\text{f}(x,t)^\mathsf{T}\sigma_2 \\ 
& + \frac{2i\beta N_\text{f}(x,t)}{4\beta^2|1-w_\text{f}(x,t)|^2+N_\text{f}(x,t)^2}\sigma_2{\bf s}_\text{f}(x,t)^*{\bf s}_\text{f}(x,t)^\mathsf{T}\sigma_2, \\
{\bf Z}_\text{f}(x,t) := & \sigma_2{\bf Y}_\text{f}(x,t)^*\sigma_2.
\end{split}
\endeq
Then 
\eq
\label{U-double-dot}
\ddot{\bf U}(\lambda;x,t) = {\bf G}_\text{f}(\lambda;x,t){\bf U}(\lambda;x,t),
\endeq
where
\eq
{\bf G}_\text{f}(\lambda;x,t) := \mathbb{I} + \frac{{\bf Y}_\text{f}(x,t)}{\lambda-\xi} + \frac{{\bf Z}_\text{f}(x,t)}{\lambda-\xi^*}.
\endeq

If we apply the general Darboux transformation defined by 
\eqref{U-double-dot} to the trivial background solution $\psi(x,t)\equiv 0$, 
the resulting position of the generated solution can be shifted by replacing 
$(d_1,d_2)$ with $(\epsilon^{-1}d_1,\epsilon^{-1}d_2)$ for some fixed 
constant $\epsilon\in\mathbb{C}^*$.  We 
choose to fix two complex constants $c_1$ and $c_2$ and study the Darboux 
transformation with connection data $(\epsilon^{-1}c_1,\epsilon^{-1}c_2)$ in 
the limit $\epsilon\to 0$.  This will have in particular the effect of 
ensuring $\psi(x,t;(1,\pm 1))$ achieves its maximum value at the origin 
$(x,t)=(0,0)$.  The formulas used to construct the Darboux transformation all 
have well-defined limits as $\epsilon\to 0$ that are given explicitly in 
\eqref{s-N-w}--\eqref{Yn-Zn} below.  We note that those formulas are 
unchanged if both $c_1$ and $c_2$ are multiplied by the same nonzero complex 
number.  Because of this, we can think of ${\bf c}$ as an element of the 
complex projective space $\mathbb{CP}^1$.  This means that, although we 
write both $c_1$ and $c_2$, there is actually only \emph{one} complex degree 
of freedom in the Darboux transformation.  Furthermore, if either $c_1=0$ or 
$c_2=0$ then the limiting Darboux transformation defined by 
\eqref{s-N-w}--\eqref{Yn-Zn} is trivial (i.e. takes the input solution 
$\psi(x,t)$ to itself).  Thus we restrict ourselves to 
${\bf c}\in(\mathbb{C}^*)^2$.

\subsection{Iteration of the Darboux transformation}
\label{subsec-iteration}
If we apply the Darboux transformation with data $\{\xi,{\bf c}\}$ to the 
trivial solution 
\eq
\label{psi0}
\psi^{[0]}(x,t)\equiv 0
\endeq
then we obtain a second-order pole soliton $\psi^{[2]}(x,t;{\bf c})$.  In the 
same way, applying the Darboux transformation $n$ times in succession (using 
the robust inverse-scattering transformation each time to sweep the spectral 
poles to $\partial D_0$) will generate a $2n^\text{th}$-order pole soliton 
$\psi^{[2n]}(x,t;{\bf c})$.  In principal there is no need to fix the data 
$\{\xi,{\bf c}\}$ between iterations, although we will do so in order to 
obtain a well-defined limit as $n\to\infty$.  We now explain the construction 
in detail.  

Fix $\xi\in\mathbb{C}^+$ and ${\bf c} = (c_1,c_2)\in(\mathbb{C}^*)^2$.  We 
begin with the background eigenvector matrix
\eq
\label{background-U}
{\bf U}^{[0]}(\lambda;x,t) := e^{-i(\lambda x+\lambda^2 t)\sigma_3}
\endeq
corresponding to the trivial solution \eqref{psi0}.  
Let $D_0$ be a circular disc centered at the origin that is large enough to 
contain $\xi$.  Suppose that ${\bf U}^{[n]}(\lambda;x,t)$ is known and is 
analytic for $\lambda\notin\partial D_0$.  Set 
\eq
\label{s-N-w}
\begin{split}
{\bf s}^{[n]}(x,t):={\bf U}^{[n]}(\xi;x,t){\bf c}^\mathsf{T}, \quad N^{[n]}(x,t):={\bf s}^{[n]}(x,t)^\dagger{\bf s}^{[n]}(x,t), \\
w^{[n]}(x,t):={\bf c}{\bf U}^{[n]}(\xi;x,t)^\mathsf{T}\sigma_2{\bf U}^{[n]\prime}(\xi;x,t){\bf c}^\mathsf{T}.\phantom{nnnnnn}
\end{split}
\endeq
Then define 
\eq
\label{Gn-ito-Y-and-Z}
{\bf G}^{[n]}(\lambda;x,t) := \mathbb{I} + \frac{{\bf Y}^{[n]}(x,t)}{\lambda-\xi} + \frac{{\bf Z}^{[n]}(x,t)}{\lambda-\xi^*},
\endeq
wherein
\eq
\label{Yn-Zn}
\begin{split}
{\bf Y}^{[n]}(x,t):= & \frac{-4\beta^2 w^{[n]}(x,t)^*}{4\beta^2|w^{[n]}(x,t)|^2 + N^{[n]}(x,t)^2}{\bf s}^{[n]}(x,t){\bf s}^{[n]}(x,t)^\mathsf{T}\sigma_2 \\ 
& + \frac{2i\beta N^{[n]}(x,t)}{4\beta^2|w^{[n]}(x,t)|^2+N^{[n]}(x,t)^2}\sigma_2{\bf s}^{[n]}(x,t)^*{\bf s}^{[n]}(x,t)^\mathsf{T}\sigma_2, \\
{\bf Z}^{[n]}(x,t) := & \sigma_2{\bf Y}^{[n]}(x,t)^*\sigma_2.
\end{split}
\endeq
Define
\eq
\ddot{\bf U}^{[n+1]}(\lambda;x,t) := {\bf G}^{[n]}(\lambda;x,t){\bf U}^{[n]}(\lambda;x,t).
\endeq
Now $\ddot{\bf U}^{[n+1]}(\lambda;x,t)$ has simple poles at $\xi$ and $\xi^*$.  We 
apply the idea of the robust inverse-scattering transform and sweep the poles to 
$\partial D_0$ by defining 
$\ddot{\bf U}^{[n,\text{in}]}(\lambda;x,t) := \ddot{\bf U}^{[n]}(\lambda;x,t){\bf G}^{[n]}(\lambda;0,0)^{-1}$
for $\lambda\in D_0$.  Then the matrix 
\eq
\label{Unp1-ito-Gn-Un}
{\bf U}^{[n+1]}(\lambda;x,t) := \begin{cases} {\bf G}^{[n]}(\lambda;x,t){\bf U}^{[n]}(\lambda;x,t), & \lambda\notin D_0, \\ {\bf G}^{[n]}(\lambda;x,t){\bf U}^{[n]}(\lambda;x,t){\bf G}^{[n]}(\lambda;0,0)^{-1}, & \lambda\in D_0 \end{cases}
\endeq
is analytic for $\lambda\notin\partial D_0$.  
In terms of the matrix ${\bf Y}^{[n]}$, the solution $\psi^{[2n+2]}(x,t)$ to 
\eqref{nls} is obtained from the solution $\psi^{[2n]}(x,t)$ by 
\eq
\label{psi-n}
\psi^{[2n+2]}(x,t) = \psi^{[2n]}(x,t) + 2i([{\bf Y}^{[n]}(x,t)]_{12}-[{\bf Y}^{[n]}(x,t)^*]_{21}).
\endeq

For reference we perform the first Darboux transformation 
explicitly.  Writing 
${\bf s}^{[0]}=\left(s_1^{[0]},s_2^{[0]}\right)^\mathsf{T}$, from 
\eqref{psi-n}, \eqref{psi0}, and \eqref{Yn-Zn} we see
\eq
\label{first-Darboux}
\psi^{[2]} = \frac{-8\beta^2w^{[0]*}(s_1^{[0]})^2 + 8\beta^2w^{[0]}(s_2^{[0]*})^2 + 8\beta N^{[0]}s_1^{[0]}s_2^{[0]*}}{4\beta^2|w^{[0]}|^2+(N^{[0]})^2}.
\endeq
We also have (writing $\xi=\alpha+i\beta$)
\eq
\begin{split}
s_1^{[0]}(x,t) = c_1 e^{-i(\xi x+\xi^2 t)},& \quad s_2^{[0]}(x,t) = c_2 e^{i(\xi x+\xi^2 t)}, \\
w^{[0]}(x,t) = 2c_1c_2(x+2\xi t), \quad N^{[0]}&(x,t) = |c_1|^2e^{2\beta x + 4\alpha\beta t} + |c_2|^2e^{-2\beta x - 4\alpha\beta t}.
\end{split}
\endeq
Combining the previous two equations gives an explicit formula for 
$\psi^{[2]}(x,t)$ for any choice of $c_1,c_2\in\mathbb{C}^*$ and 
$\xi\in\mathbb{C}^+$.

\begin{rmk}
\label{rmk-pole-order}
Looking at Figures \ref{fig-far-field-1d} and \ref{fig-far-field-2d-c5}, it 
appears that $\psi^{[2n]}(x,t)$ 
is a coalescence of $2n$ single-pole solitons (and this is indeed the case).  
Yet the observant reader may have noticed that $\psi^{[2n]}(x,t)$ is 
generated from the trivial background by only $n$ applications of the Darboux 
transformation \eqref{Unp1-ito-Gn-Un}, each of which only involves a single 
pole.  How do the $n$ extra poles arise?  To understand this, note that the 
right (i.e normalized as $x\to+\infty$) Beals-Coifman matrix 
${\bf U}^\text{BC}(\lambda;x,t)$ associated to $\psi^{[2n]}(x,t)$ with poles 
of order $2n$ at $\lambda=\xi$ and $\lambda=\xi^*$ has the asymptotic 
behavior 
\eq
\begin{cases}
\displaystyle\lim_{x\to+\infty} {\bf U}^\text{BC}(\lambda;x,t)e^{i(\lambda x+\lambda^2 t)\sigma_3} = \mathbb{I}, \\ 
{\bf U}^\text{BC}(\lambda;x,t)e^{i(\lambda x+\lambda^2 t)\sigma_3} \text{ is bounded as }x\to-\infty, 
\end{cases}
\quad \lambda\notin D_0, \, \Im(\lambda)>0.
\endeq
However, the matrix ${\bf U}^{[2n]}(\lambda;x,t)$ has different asymptotics 
as $x\to +\infty$ for $\Im(\lambda)>0$, and it is necessary to renormalize to obtain the associated 
Beals-Coifman matrix, which introduces the additional pole at each iteration.
\end{rmk}

\begin{rmk}
If one wanted to study the sequence of odd-order pole solitons 
$\{\psi^{[2n+1]}(x,t)\}$, then one could start with the standard single-pole 
soliton $\psi^{[1]}(x,t)$ and apply the Darboux transformation 
\eqref{Unp1-ito-Gn-Un} $n$ times.  We anticipate that the large-$n$ behavior 
of the odd sequence is the same as that of the even sequence, and so we 
restrict our attention to $\psi^{[2n]}(x,t)$.  
\end{rmk}

\subsection{The Riemann-Hilbert problem}
Given ${\bf U}^{[n]}(\lambda;x,t)$, we define 
\eq
\label{Mn-ito-Un}
{\bf M}^{[n]}(\lambda;x,t):={\bf U}^{[n]}(\lambda;x,t)e^{i(\lambda x+\lambda^2 t)\sigma_3}.
\endeq
We now pose the Riemann-Hilbert problem satisfied by ${\bf M}^{[n]}(\lambda;x,t)$.  
Orient $\partial D_0$ clockwise.  From \eqref{Unp1-ito-Gn-Un} and \eqref{Mn-ito-Un}, 
we see the jump for ${\bf M}^{[n]}(\lambda;x,t)$ is
\eq
\label{Mn-jump}
{\bf M}_+^{[n]}(\lambda;x,t) = {\bf M}_-^{[n]}(\lambda;x,t){\bf V}_{\bf M}^{[n]}(\lambda;x,t), \quad \lambda\in\partial D_0,
\endeq
where
\eq
{\bf V}_{\bf M}^{[n]}(\lambda;x,t) = e^{-i(\lambda x+\lambda^2 t)\sigma_3}{\bf G}^{[n-1]}(\lambda;0,0)\cdots{\bf G}^{[1]}(\lambda;0,0){\bf G}^{[0]}(\lambda;0,0)e^{i(\lambda x+\lambda^2 t)\sigma_3}.
\endeq
Explicit evaluation of ${\bf Y}^{[n]}(0,0)$ shows it is actually independent of 
$n$, and so ${\bf G}^{[n]}(\lambda;0,0)={\bf G}^{[0]}(\lambda;0,0)$ for all 
$n$. The reason for this is the point $(x,t)=(0,0)$ is the normalization point 
of $\mathbf{U}^\mathrm{in}(\lambda;x,t)$ in Proposition~\ref{p:U-in}, and hence 
the values of the quantities \eqref{s-N-w} coincide at $(0,0)$ for each $n$.  
Thus 
\eq
{\bf V}_{\bf M}^{[n]}(\lambda;x,t):= e^{-i(\lambda x+\lambda^2 t)\sigma_3}{\bf G}^{[0]}(\lambda;0,0)^n e^{i(\lambda x+\lambda^2 t)\sigma_3}.
\endeq
From \eqref{Gn-ito-Y-and-Z} we have 
\eq
{\bf G}^{[0]}(\lambda;0,0) = \mathbb{I} + \frac{{\bf Y}^{[0]}(0,0)}{\lambda-\xi} + \frac{{\bf Z}^{[0]}(0,0)}{\lambda-\xi^*}.
\endeq
By direct calculation we have 
\eq
{\bf Y}^{[0]}(0,0) = \frac{2i\beta}{|{\bf c}|^2}\sigma_2({\bf c}^*)^\mathsf{T}{\bf c}\sigma_2 = \frac{2i\beta}{|{\bf c}|^2}\bpm c_2c_2^* & -c_1c_2^* \\ -c_1^*c_2 & c_1c_1^* \epm
\endeq
and
\eq
{\bf Z}^{[0]}(0,0) = \sigma_2{\bf Y}^{[0]}(0,0)^*\sigma_2 = \frac{-2i\beta}{|{\bf c}|^2}\bpm c_1c_1^* & c_1c_2^* \\ c_1^*c_2 & c_2c_2^* \epm.
\endeq
The eigenvalues of ${\bf G}^{[0]}(\lambda;0,0)$ are 
$\frac{\lambda-\xi}{\lambda-\xi^*}$ and $\frac{\lambda-\xi^*}{\lambda-\xi}$.  
Recall the eigenvector matrix ${\bf S}$ defined by \eqref{S-def}.  We define a 
second eigenvector matrix by 
\eq
\label{Stilde-def}
\widetilde{\bf S}:=\frac{1}{|{\bf c}|}\bpm c_2^* & c_1 \\ -c_1^* & c_2 \epm.
\endeq
Then we have the following two useful representations of the jump matrix for 
${\bf M}^{[n]}(\lambda;x,t)$:
\eq
\label{V-representations}
\begin{split}
{\bf V}_{\bf M}^{[n]}(\lambda;x,t) & = e^{-i(\lambda x+\lambda^2 t)\sigma_3}{\bf S}\bpm \left(\frac{\lambda-\xi}{\lambda-\xi^*}\right)^n & 0 \\ 0 & \left(\frac{\lambda-\xi^*}{\lambda-\xi}\right)^n\epm{\bf S}^{-1}e^{i(\lambda x+\lambda^2 t)\sigma_3} \\
& = e^{-i(\lambda x+\lambda^2 t)\sigma_3}\widetilde{\bf S}\bpm \left(\frac{\lambda-\xi^*}{\lambda-\xi}\right)^n & 0 \\ 0 & \left(\frac{\lambda-\xi}{\lambda-\xi^*}\right)^n\epm\widetilde{\bf S}^{-1}e^{i(\lambda x+\lambda^2 t)\sigma_3}.
\end{split}
\endeq
We therefore have the basic Riemann-Hilbert Problem \ref{rhp:psi-n} for 
${\bf M}^{[n]}(\lambda;x,t)$.  The $2n^\text{th}$-order pole soliton 
$\psi^{[2n]}(x,t;{\bf c})$ is obtained from ${\bf M}^{[n]}(\lambda;x,t)$ via 
\eqref{psi-from-M}.

\section{Analysis in the zero region}
\label{sec-zero-reg}
We now prove Theorem \ref{thm-zero-reg}.  
Starting with ${\bf M}^{[n]}(\lambda;x,t)$, we perform a series 
of invertible transformations to analyze Riemann-Hilbert Problem \ref{rhp:psi-n}
asymptotically.  Recall $\chi:=x/n$ and $\tau:=t/n$ as introduced in 
\eqref{chi-tau}.
If $\chi>0$, define
\eq
{\bf N}^{[n]}(\lambda;\chi,\tau) := \begin{cases} {\bf M}^{[n]}(\lambda;n\chi,n\tau)e^{-in(\lambda \chi+\lambda^2\tau)}{\bf S}e^{in(\lambda \chi+\lambda^2\tau)}, & \lambda\in D_0, \\ {\bf M}^{[n]}(\lambda;n\chi,n\tau)\left(\frac{\lambda-\xi^*}{\lambda-\xi}\right)^{n\sigma_3}, & \lambda\notin D_0 \end{cases} \quad (\chi>0).
\endeq
If $\chi<0$, define
\eq
{\bf N}^{[n]}(\lambda;\chi,\tau) := \begin{cases} {\bf M}^{[n]}(\lambda;n\chi,n\tau)e^{-in(\lambda \chi+\lambda^2\tau)}\widetilde{\bf S}e^{in(\lambda \chi+\lambda^2\tau)}, & \lambda\in D_0, \\ {\bf M}^{[n]}(\lambda;n\chi,n\tau)\left(\frac{\lambda-\xi}{\lambda-\xi^*}\right)^{n\sigma_3}, & \lambda\notin D_0 \end{cases} \quad (\chi<0).
\endeq
The normalization for ${\bf N}^{[n]}(\lambda;\chi,\tau)$ as $\lambda\to\infty$ is 
unchanged from that of ${\bf M}^{[n]}(\lambda;\chi,\tau)$.
Introducing the phase functions $\varphi(\lambda;\chi,\tau)$ by \eqref{phi} and 
$\widetilde\varphi(\lambda;\chi,\tau)$ by 
\eq
\label{phitilde}
\widetilde\varphi(\lambda;\chi,\tau) :=i(\lambda \chi+\lambda^2 \tau) + \log\left(\frac{\lambda-\xi}{\lambda-\xi^*}\right),
\endeq
the jump matrices for ${\bf N}^{[n]}(\lambda;\chi,\tau)$ can be written as 
\eq
{\bf V}_{\bf N}^{[n]}(\lambda;\chi,\tau) := \begin{cases} e^{-n\varphi(\lambda;\chi,\tau)\sigma_3}{\bf S}^{-1}e^{n\varphi(\lambda;x,t)\sigma_3}, & \chi>0, \\ e^{-n\widetilde\varphi(\lambda;\chi,\tau)\sigma_3}\widetilde{\bf S}^{-1}e^{n\widetilde\varphi(\lambda;x,t)\sigma_3}, & \chi<0 \end{cases}
\endeq
for $\lambda\in\partial D_0$ (oriented clockwise).  Note that 
$\varphi(\lambda;\chi,\tau)$ and $\widetilde\varphi(\lambda;\chi,\tau)$ are 
independent of ${\bf c}$.

Our immediate goal is to understand the topology of the level curves 
$\Re(\varphi(\lambda;\chi,\tau))=0$ in the complex $\lambda$-plane as $\chi$ and $\tau$ 
vary.  Note that $\Re(\varphi(\lambda;\chi,\tau))$ is zero for all 
$\lambda\in\mathbb{R}$.  Also, observe the critical points are those $\lambda$ 
values satisfying
\eq
\label{crit-point-cubic}
2\tau(\lambda-\alpha)^3+(\chi+2\alpha \tau)(\lambda-\alpha)^2+2\beta^2\tau(\lambda-\alpha)+(\beta^2\chi-2\beta+2\alpha\beta^2\tau)=0.
\endeq  
This cubic has real coefficients.  

Assume first that $\tau=0$.  Then there are two critical points
\eq
\lambda_\pm(\chi,0) = \alpha \pm \left(\frac{2\beta-\beta^2\chi}{\chi}\right)^{1/2}
\endeq
(we write $(\cdot)^{1/2}$ for the principal branch of the square root).  
If $\chi=\frac{2}{\beta}$, then the two critical points coincide at $\alpha$.  
If $\chi>\frac{2}{\beta}$, then $\lambda_\pm(\chi,0)$ are complex conjugates.  
As we will see shortly, the ray $\{(\chi,\tau):\chi>\frac{2}{\beta},\,\tau=0\}$ 
is in the zero region, so we assume $\chi\geq\frac{2}{\beta}$.  As 
$\lambda\to\infty$, $i\lambda \chi$ is the dominant term in 
$\varphi(\lambda;\chi,0)$, so $\Re(\varphi(\lambda;\chi,0))<0$ for 
$\Im(\lambda)>0$ and $\Re(\varphi(\lambda;\chi,0))>0$ for $\Im(\lambda)<0$ if 
$|\lambda|$ is sufficiently large.  Looking at the logarithm term, 
$\Re(\varphi(\lambda;\chi,0))>0$ for $|\lambda-\xi|$ sufficiently small, and 
$\Re(\varphi(\lambda;\chi,0))<0$ for $|\lambda-\xi^*|$ sufficiently small.  
Therefore there must be a level line curve $\Re(\varphi(\lambda;\chi,0))=0$ 
completely enclosing $\lambda=\xi$ (and another around $\lambda=\xi^*$).  
Since $\Re(\varphi(\lambda;\chi,0))$ is harmonic (and not constant) away 
from $\xi$ and $\xi^*$, there can be at most one closed loop of the zero 
level lines in the upper half-plane and at most one in the lower half-plane.  
Therefore, the zero level curves must be exactly the real axis along with 
two simple loops enclosing $\xi$ and $\xi^*$ that intersect the real axis at a 
single (shared) point if $\chi=\frac{2}{\beta}$ (as shown in the left panel in 
Figure \ref{fig-re-phi1}) or that are entirely in their respective half-planes 
if $\chi>\frac{2}{\beta}$ (as shown in the center panel in Figure 
\ref{fig-re-phi1}).  By explicit calculation, we check that $\lambda_\pm(\chi,0)$ 
lie outside the two closed loops if $\chi>\frac{2}{\beta}$.  

Now fix $\chi>\frac{2}{\beta}$.  For $\tau>0$ sufficiently small, 
$\varphi(\lambda;\chi,\tau)$ has three distinct critical points.  Two of these form a 
complex conjugate pair with real part approximately $\alpha$.  We define 
$\lambda_\pm(\chi,\tau)$ to be the analytic continuation (in $\tau$) of 
$\lambda_\pm(\chi,0)$.  This analytic continuation is well defined if $\tau$ is 
sufficiently small such that $\lambda_+(\chi,S)\neq\lambda_-(\chi,S)$ for 
$0\leq S\leq \tau$.  The third critical point is negative real (for $\tau$ 
sufficiently small).  We denote this critical point by $\lambda_0(\chi,\tau)$ for 
any values of $\chi$ and $\tau$ for which $\lambda_\pm(\chi,\tau)$ are defined.  Since $\tau$ 
is nonzero, as $\lambda\to\infty$, 
$\Re(\varphi(\lambda;\chi,\tau))$ is dominated by $\Re(i\lambda^2\tau)$.  As 
the local behavior near $\xi$ and $\xi^*$ is topologically unchanged, the zero 
level curves must be topologically the same as in the case $\tau=0$ with the 
addition of an unbounded curve that, for large $|\lambda|$, is approximately 
parallel to the imaginary axis.  See the right panel in Figure 
\ref{fig-re-phi1}.  
\begin{figure}[H]
\begin{center}
\includegraphics[height=1.95in]{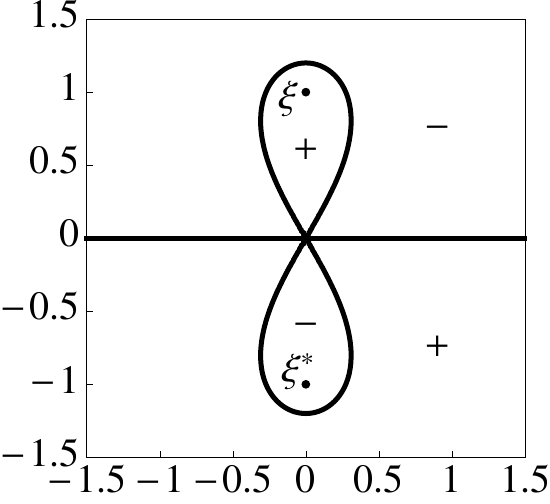}
\includegraphics[height=1.95in]{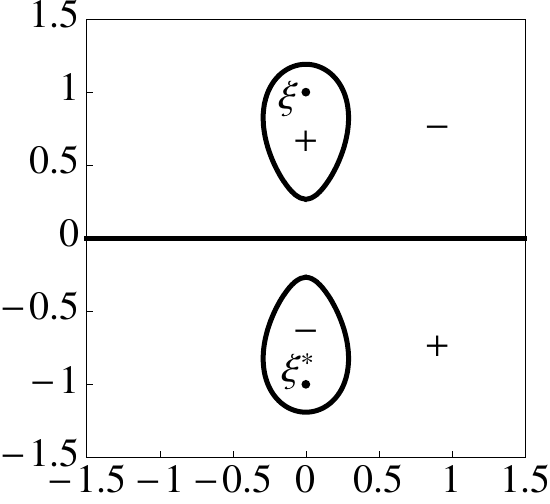}
\includegraphics[height=1.95in]{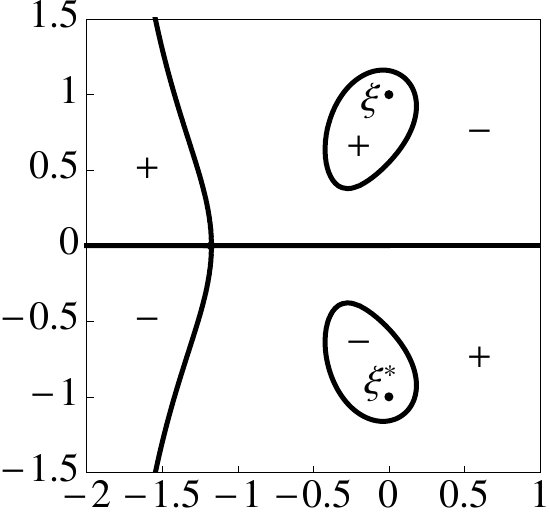}
\caption{Signature charts of $\Re(\varphi(\lambda;\chi,\tau))$ in the complex 
$\lambda$-plane for $\xi=i$.  \emph{Left:} $(\chi,\tau)=(2,0)$.  \emph{Center:} 
$(\chi,\tau)=(2.05,0)$.  \emph{Right:} $(\chi,\tau)=(2.25,0.6)$.  The left 
panel illustrates the topology of the zero level lines for general 
$\xi=\alpha+i\beta$ if 
$\chi=\frac{2}{\beta}$ and $\tau=0$, the center panel illustrates the general 
topology for $\chi>\frac{2}{\beta}$ and $\tau=0$, and the right panel 
illustrates the general topology for $\chi>\frac{2}{\beta}$ and $\tau>0$ with 
$|\tau|$ sufficiently small.}
\label{fig-re-phi1}
\end{center}
\end{figure}

For fixed $\chi>\frac{2}{\beta}$, let $\tau\to+\infty$ (the argument as 
$\tau\to-\infty$ is analogous).  Then, excluding shrinking 
neighborhoods of $\xi$ and $\xi^*$, $\Re(\varphi(\lambda;\chi,\tau))$ is well 
approximated everywhere by $\Re(i\lambda^2\tau)$.  This means that, with the 
possible exception of shrinking loops around $\xi$ and $\xi^*$, the zero 
level curves of $\Re(\varphi(\lambda;\chi,\tau))$ are the real axis and an 
unbounded curve with real part approximately $\alpha$.  Furthermore, explicit 
calculation shows that $\Re(\varphi(a+i\beta;\chi,\tau))>0$ for $a<\alpha$ (for 
$\tau$ sufficiently large), and thus there are no closed loops on which 
$\Re(\varphi(\lambda;\chi,\tau))>0$ around $\xi$ (or, by symmetry, around 
$\xi^*$).  

Therefore, for fixed $\chi>\frac{2}{\beta}$, there is (at least one) 
topological change in the zero level lines of $\Re(\varphi(\lambda;\chi,\tau))$ 
as $\tau$ changes from zero to infinity.  We show below that any $(\chi,\tau)$ 
values before the first 
transition (starting from $\tau=0$) are in the zero region.  The topological 
change can happen in one of three ways, all of which occur for certain values 
of $\chi$, $\tau$, and $\xi$.  The first way is for $\lambda_+(\chi,\tau)$ and 
$\lambda_-(\chi,\tau)$ to coincide on the real axis at a point distinct from 
$\lambda_0(\chi,t)$.  This transition is illustrated in the first panel in 
Figure \ref{fig-re-phi2} (as well as the first panel in Figure 
\ref{fig-re-phi1} in the special case $\tau=0$).  We conjecture that, as $\tau$ 
increases from this 
configuration, a single band will open in the model Riemann-Hilbert problem.  
This suggests a transition from the zero region to a region in which the 
solution to \eqref{nls} is nonzero and non-oscillatory.  A necessary algebraic 
condition for this to occur is for the discriminant of \eqref{crit-point-cubic} 
to be zero, which is equivalent to 
\eq
\label{zero-boundary1}
\begin{split}
(16\alpha^4\beta^2+32\alpha^2\beta^4+16\beta^6)\tau^4 + (32\alpha^3\beta^2 \chi - 16\alpha^3\beta + 32\alpha\beta^4 \chi - 144\alpha\beta^3)\tau^3 & \\ 
+ (24\alpha^2\beta^2\chi^2 - 24\alpha^2\beta \chi + 8\beta^4 \chi^2 - 72\beta^3 \chi + 108\beta^2)\tau^2 & \\ 
+ (8\alpha\beta^2 \chi^3 - 12\alpha\beta \chi^2)\tau + (\beta^2\chi^4 - 2\beta \chi^3) & = 0.
\end{split}
\endeq
If $\xi=i$, this simplifies to \eqref{xi-i-boundary-quadratic}, a quadratic 
relation in $\tau^2$.
\begin{figure}[H]
\begin{center}
\includegraphics[height=2in]{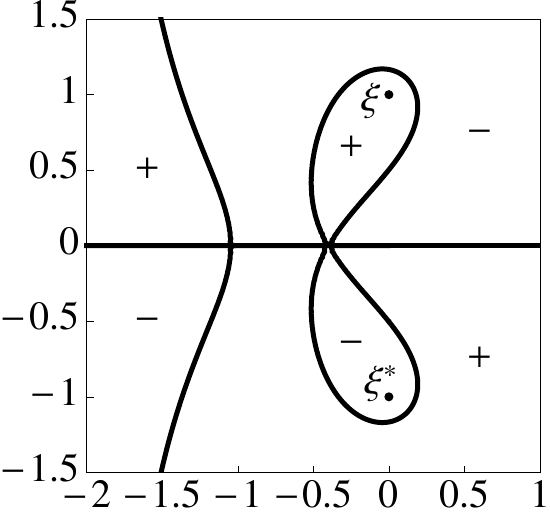}
\includegraphics[height=2in]{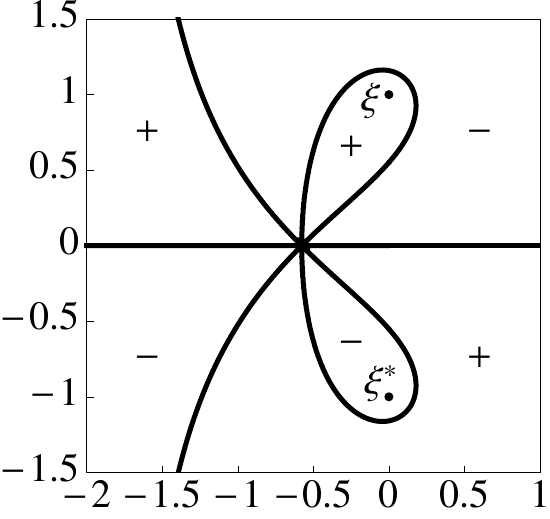}
\includegraphics[height=2in]{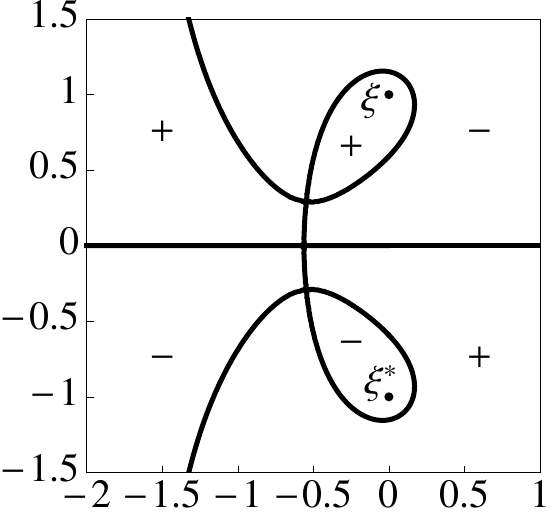}
\caption{Signature charts of $\Re(\varphi(\lambda;\chi,\tau))$ in the complex 
$\lambda$-plane for $\xi=i$.  \emph{Left:} $(\chi,\tau)\approx(2.2,0.595)$.  
\emph{Center:}  $(\chi,\tau)=(\frac{9}{4},\frac{3\sqrt{3}}{8})$.  \emph{Right:}  
$(\chi,\tau)\approx(2.3,0.649)$.  The left panel illustrates the boundary 
between the zero region with $\chi>0$ and what we conjecture is a nonzero 
non-oscillatory region.  The right panel illustrates the boundary between the 
zero region with $\chi>0$ and what we conjecture is a nonzero single-phase 
region.  The center panel illustrates the single point lying at the corner of 
the three different regions.}
\label{fig-re-phi2}
\end{center}
\end{figure}

The second way the 
topological change can occur is if $\lambda_+(\chi,\tau)$, $\lambda_-(\chi,\tau)$, and 
$\lambda_0(\chi,\tau)$ all coincide.  This is illustrated in the second panel in 
Figure \ref{fig-re-phi2}.  This double-critical behavior appears to 
correspond with a point at the corner of three different regions.  If 
$\alpha=0$, a necessary condition for this triple critical point is the 
discriminant of \eqref{zero-boundary1} must be zero, which 
occurs exactly (for $\tau>0$) at the critical point
\eq
(\chi_c,\tau_c)=\left(\frac{9}{4\beta},\frac{3\sqrt{3}}{8\beta^2}\right).
\endeq

The third 
way for the topological change to occur, illustrated in the third panel in 
Figure \ref{fig-re-phi2}, is when $\lambda_+(\chi,\tau)$ and $\lambda_-(\chi,\tau)$ 
simultaneously intersect a zero level line of $\Re(\varphi(\lambda;\chi,\tau))$ off 
the real axis.  We expect this to correspond to an opening of two bands in the 
model Riemann-Hilbert problem and a transition between the zero region and a 
region in which the solution to \eqref{nls} has single-phase oscillations with  
period of order $n^{-1}$.  A necessary algebraic condition for this 
transition is the explicit (although transcendental) algebro-logarithmic 
relation \eqref{zero-boundary2}.
In Figure \ref{fig-zero-region-boundary} we illustrate the boundary of the 
zero region computed using \eqref{zero-boundary1} and \eqref{zero-boundary2} 
for $\xi=i$ and $\xi=\frac{1}{2}+2i$.  

At this point we have proven the existence of a connected open region in the 
$\chi \tau$-plane containing the ray $\{(\chi,\tau):\chi>\frac{2}{\beta},\tau=0\}$ such that
\begin{itemize}
\item The point $\lambda=\xi$ is enclosed by a simple loop on which 
$\Re(\varphi(\lambda;\chi,\tau))=0$.  For $\lambda$ immediately outside this loop 
$\Re(\varphi(\lambda;\chi,\tau))<0$, and for $\lambda$ immediately inside this loop 
$\Re(\varphi(\lambda;\chi,\tau))>0$.
\item The point $\lambda=\xi^*$ is enclosed by a simple loop on which 
$\Re(\varphi(\lambda;\chi,\tau))=0$.  For $\lambda$ immediately outside this 
loop $\Re(\varphi(\lambda;\chi,\tau))>0$, and for $\lambda$ immediately inside 
this loop $\Re(\varphi(\lambda;\chi,\tau))<0$.
\end{itemize}
We denote the maximal region satisfying these conditions $\mathcal{Z}_+$.  
Furthermore, if $(\chi,\tau)\in\mathcal{Z}_+$, then it is immediate from 
\eqref{phi} and \eqref{phitilde} that $(-\chi,-\tau)$ has complementary 
properties:
\begin{itemize}
\item The point $\lambda=\xi$ is enclosed by a simple loop on which 
$\Re(\widetilde\varphi(\lambda;-\chi,-\tau))=0$.  For $\lambda$ immediately 
outside this loop $\Re(\widetilde\varphi(\lambda;-\chi,-\tau))>0$, and for 
$\lambda$ immediately inside this loop 
$\Re(\widetilde\varphi(\lambda;-\chi,-\tau))<0$.
\item The point $\lambda=\xi^*$ is enclosed by a simple loop on which 
$\Re(\widetilde\varphi(\lambda;-\chi,-\tau))=0$.  For $\lambda$ immediately 
outside this loop $\Re(\widetilde\varphi(\lambda;-\chi,-\tau))<0$, and for 
$\lambda$ immediately inside this loop 
$\Re(\widetilde\varphi(\lambda;-\chi,-\tau))>0$.
\end{itemize}
We denote this complementary region by $\mathcal{Z}_-$, and we call 
$\mathcal{Z}:=\mathcal{Z}_+\cup\mathcal{Z}_-$ the \emph{zero region}.  We now 
use nonlinear steepest-descent analysis to show that $\psi^{[2n]}(x,t)$ is 
exponentially close to zero in the zero region as $n\to\infty$.

If $\chi>0$, we denote the bounded region in the $\lambda$-plane containing 
$\xi$ in which 
$\Re(\varphi(\lambda;\chi,\tau))>0$ by $D_\xi$, and the bounded region 
containing $\xi^*$ in which $\Re(\varphi(\lambda;\chi,\tau))<0$ by $D_{\xi^*}$. 
Similarly, if $\chi<0$, we denote the bounded region containing $\xi$ in which 
$\Re(\varphi(\lambda;\chi,\tau))<0$ by $D_\xi$, and the bounded region 
containing $\xi^*$ in which $\Re(\varphi(\lambda;\chi,\tau))>0$ by $D_{\xi^*}$. 
Recall that the jump for ${\bf N}^{[n]}(\lambda;\chi,\tau)$ is defined on the 
loop $\partial D_0$ enclosing both $\xi$ and $\xi^*$.  The next step in the 
analysis is to deform the jump contour from $\partial D_0$ to 
$\partial D_\xi\cup\partial D_{\xi^*}$.  Define 
\eq
{\bf O}^{[n]}(\lambda;\chi,\tau):= \begin{cases} {\bf N}^{[n]}(\lambda;\chi,\tau){\bf V}_{\bf N}^{[n]}(\lambda;\chi,\tau), & \lambda\in D_0\cap(D_\xi\cup D_{\xi^*})^\mathsf{c}, \\ {\bf N}^{[n]}(\lambda;\chi,\tau){\bf V}_{\bf N}^{[n]}(\lambda;\chi,\tau)^{-1}, & \lambda\in (D_\xi\cup D_{\xi^*})\cap D_0^\mathsf{c}, \\ {\bf N}^{[n]}(\lambda;\chi,\tau), & \text{otherwise}.
\end{cases}
\endeq
Then, orienting $\partial D_\xi$ and $\partial D_{\xi^*}$ clockwise, the 
function ${\bf O}^{[n]}(\lambda;\chi,\tau)$ is analytic for 
$\lambda\notin \partial D_\xi\cup \partial D_{\xi^*}$, satisfies 
${\bf O}^{[n]}_+(\lambda;\chi,\tau) = {\bf O}^{[n]}_-(\lambda;\chi,\tau) {\bf V}_{\bf O}^{[n]}(\lambda;\chi,\tau)$
for $\lambda\in\partial D_\xi\cup\partial D_{\xi^*}$, where 
${\bf V}_{\bf O}^{[n]}(\lambda;\chi,\tau) = {\bf V}_{\bf N}^{[n]}(\lambda;\chi,\tau)$, and 
${\bf O}^{[n]}(\lambda;\chi,\tau)\to\mathbb{I}$ as $\lambda\to\infty$.
Now observe we have the following four factorizations:
\begin{eqnarray}
\label{Dxi-Xpos}
{\bf S}^{-1} & = & \bpm 1 & \frac{c_2^*}{c_1} \\ 0 & 1 \epm \bpm \frac{|{\bf c}|}{c_1} & 0 \\ 0 & \frac{c_1}{|{\bf c}|} \epm \bpm 1 & 0 \\ -\frac{c_2}{c_1} & 1 \epm \quad\quad (\text{use for }\lambda\in\partial D_\xi,\,\,\chi>0), \\
\label{Dxic-Xpos}
{\bf S}^{-1} & = & \bpm 1 & 0 \\ -\frac{c_2}{c_1^*} & 1 \epm \bpm \frac{c_1^*}{|{\bf c}|} & 0 \\ 0 & \frac{|{\bf c}|}{c_1^*} \epm \bpm 1 & \frac{c_2^*}{c_1^*} \\ 0 & 1 \epm \quad\quad (\text{use for }\lambda\in\partial D_{\xi^*},\,\,\chi>0), \\
\label{Dxi-Xneg}
\widetilde{\bf S}^{-1} & = & \bpm 1 & 0 \\ \frac{c_1^*}{c_2} & 1 \epm \bpm \frac{c_2}{|{\bf c}|} & 0 \\ 0 & \frac{|{\bf c}|}{c_2} \epm \bpm 1 & -\frac{c_1}{c_2} \\ 0 & 1 \epm \quad\quad (\text{use for }\lambda\in\partial D_\xi,\,\,\chi<0), \\
\label{Dxic-Xneg}
\widetilde{\bf S}^{-1} & = & \bpm 1 & -\frac{c_1}{c_2^*} \\ 0 & 1 \epm \bpm \frac{|{\bf c}|}{c_2^*} & 0 \\ 0 & \frac{c_2^*}{|{\bf c}|} \epm \bpm 1 & 0 \\ \frac{c_1^*}{c_2^*} & 1 \epm \quad\quad (\text{use for }\lambda\in\partial D_{\xi^*},\,\,\chi<0). 
\end{eqnarray}
For convenience we indicate when we will use each factorization;  of course 
each relation is an algebraic identity that holds independent of $\lambda$ 
or $\chi$.  First, suppose $\chi>0$.  We define four simple clockwise-oriented loops 
$\Sigma_\xi^\text{(out)}$, $\Sigma_\xi^\text{(in)}$, 
$\Sigma_{\xi^*}^\text{(out)}$, and $\Sigma_{\xi^*}^\text{(in)}$ such that:
\begin{itemize}
\item $\Sigma_\xi^\text{(out)}$ encloses $D_\xi$ and lies entirely in the 
region in which $\Re(\varphi(\lambda;\chi,\tau))<0$.
\item $\Sigma_\xi^\text{(in)}$ encloses $\xi$ and lies entirely in $D_\xi$ (so 
that $\Re(\varphi(\lambda;\chi,\tau))>0$).
\item $\Sigma_{\xi^*}^\text{(out)}$ encloses $D_{\xi^*}$ and lies entirely in 
the region in which $\Re(\varphi(\lambda;\chi,\tau))>0$.
\item $\Sigma_{\xi^*}^\text{(in)}$ encloses $\xi^*$ and lies entirely in 
$D_{\xi^*}$ (so that $\Re(\varphi(\lambda;\chi,\tau))<0$).
\end{itemize}
Also define the following four annular regions:
\begin{itemize}
\item $L_\xi^\text{(out)}$ is bounded by $\partial D_\xi$ and $\Sigma_\xi^\text{(out)}$.
\item $L_\xi^\text{(in)}$ is bounded by $\partial D_\xi$ and $\Sigma_\xi^\text{(in)}$.
\item $L_{\xi^*}^\text{(out)}$ is bounded by $\partial D_{\xi^*}$ and $\Sigma_{\xi^*}^\text{(out)}$.
\item $L_{\xi^*}^\text{(in)}$ is bounded by $\partial D_{\xi^*}$ and $\Sigma_{\xi^*}^\text{(in)}$.
\end{itemize}
For $(\chi,\tau)\in\mathcal{Z}$ and $\chi>0$, define 
\eq
{\bf P}^{[n]}(\lambda;\chi,\tau) := \begin{cases} 
{\bf O}^{[n]}(\lambda;\chi,\tau)\bpm 1 & \frac{c_2^*}{c_1}e^{-2n\varphi(\lambda;\chi,\tau)} \\ 0 & 1 \epm, & \lambda\in L_\xi^\text{(in)}, \\ 
{\bf O}^{[n]}(\lambda;\chi,\tau)\bpm 1 & 0 \\ -\frac{c_2}{c_1}e^{2n\varphi(\lambda;\chi,\tau)} & 1 \epm^{-1}, & \lambda\in L_\xi^\text{(out)}, \\ 
{\bf O}^{[n]}(\lambda;\chi,\tau)\bpm 1 & 0 \\ -\frac{c_2}{c_1^*}e^{2n\varphi(\lambda;\chi,\tau)} & 1 \epm, & \lambda\in L_{\xi^*}^\text{(in)}, \\ 
{\bf O}^{[n]}(\lambda;\chi,\tau)\bpm 1 & \frac{c_2^*}{c_1^*}e^{-2n\varphi(\lambda;\chi,\tau)} \\ 0 & 1 \epm^{-1}, & \lambda\in L_{\xi^*}^\text{(out)}, \\ 
{\bf O}^{[n]}(\lambda;\chi,\tau), & \text{otherwise}. \end{cases}
\endeq
Then ${\bf P}^{[n]}(\lambda;\chi,\tau)$ is analytic for 
$\lambda\notin\partial D_\xi\cup\partial D_{\xi^*}\cup\Sigma_\xi^\text{(out)}\cup\Sigma_\xi^\text{(in)}\cup\Sigma_{\xi^*}^\text{(out)}\cup\Sigma_{\xi^*}^\text{(in)}$, 
has the normalization ${\bf P}^{[n]}(\lambda;\chi,\tau)\to\mathbb{I}$ as $\lambda\to\infty$, and 
has the jumps ${\bf P}_+^{[n]}(\lambda;\chi,\tau) = {\bf P}_-^{[n]}(\lambda;\chi,\tau){\bf V}_{\bf P}^{[n]}(\lambda;\chi,\tau)$, where
\eq
{\bf V}_{\bf P}^{[n]}(\lambda;\chi,\tau) := \begin{cases} \bpm 1 & \frac{c_2^*}{c_1}e^{-2n\varphi(\lambda;\chi,\tau)} \\ 0 & 1 \epm, & \lambda\in\Sigma_\xi^\text{(in)}, \\ \bpm \frac{|{\bf c}|}{c_1} & 0 \\ 0 & \frac{c_1}{|{\bf c}|} \epm, & \lambda\in\partial D_\xi, \\ \bpm 1 & 0 \\ -\frac{c_2}{c_1}e^{2n\varphi(\lambda;\chi,\tau)} & 1 \epm, & \lambda\in\Sigma_\xi^\text{(out)}, \\ \bpm 1 & 0 \\ -\frac{c_2}{c_1^*}e^{2n\varphi(\lambda;\chi,\tau)} & 1 \epm, & \lambda\in\Sigma_{\xi^*}^\text{(in)}, \\ \bpm \frac{c_1^*}{|{\bf c}|} & 0 \\ 0 & \frac{|{\bf c}|}{c_1^*} \epm, & \lambda\in\partial D_{\xi^*}, \\ \bpm 1 & \frac{c_2^*}{c_1^*}e^{-2n\varphi(\lambda;\chi,\tau)} \\ 0 & 1 \epm, & \lambda\in\Sigma_{\xi^*}^\text{(out)}. \end{cases}
\endeq
The jumps for ${\bf P}^{[n]}(\lambda;\chi,\tau)$ on the contours $\Sigma_\xi^\text{(out)}$, 
$\Sigma_\xi^\text{(in)}$, $\Sigma_{\xi^*}^\text{(out)}$, and $\Sigma_{\xi^*}^\text{(in)}$ 
all decay exponentially to $\mathbb{I}$ as $n\to\infty$.  Thus we define 
${\bf Q}(\lambda;\chi,\tau)$ as the $n$-independent solution to the following Riemann-Hilbert 
problem:
\begin{rhp}[The model problem in the zero region with $\chi>0$]
Fix a pair of nonzero complex numbers $(c_1,c_2)$, along with a pair of real numbers 
$(\chi,\tau)\in\mathcal{Z}$ with $\chi>0$.  Determine the unique $2\times 2$ matrix 
${\bf Q}(\lambda;\chi,\tau)$ with the following properties:
\begin{itemize}
\item[]\textbf{Analyticity:}  ${\bf Q}(\lambda;\chi,\tau)$
is analytic for $\lambda\in\mathbb{C}$ except on $\partial D_\xi\cup\partial D_{\xi^*}$, 
where it achieves continuous boundary values.   
\item[]\textbf{Jump condition:}  The boundary values taken by ${\bf Q}(\lambda;\chi,\tau)$ are 
related by the jump conditions
${\bf Q}_+(\lambda;\chi,\tau)={\bf Q}_-(\lambda;\chi,\tau){\bf V}_{\bf Q}(\lambda;\chi,\tau)$, 
where
\eq
{\bf V}_{\bf Q}(\lambda;\chi,\tau) := \begin{cases} \bbm \frac{|{\bf c}|}{c_1} & 0 \\ 0 & \frac{c_1}{|{\bf c}|} \ebm, & \lambda\in\partial D_\xi, \vspace{.05in} \\ \bbm \frac{c_1^*}{|{\bf c}|} & 0 \\ 0 & \frac{|{\bf c}|}{c_1^*} \ebm, & \lambda\in\partial D_{\xi^*}. \end{cases}
\endeq 
\item[]\textbf{Normalization:}  As $\lambda\to\infty$, the matrix ${\bf Q}(\lambda;\chi,\tau)$ 
satisfies the condition
\begin{equation}
{\bf Q}(\lambda;\chi,\tau) = \mathbb{I}+\mathcal{O}(\lambda^{-1})
\end{equation}
with the limit being uniform with respect to direction.
\end{itemize}
\label{rhp:Q}
\end{rhp}
This Riemann-Hilbert problem reduces to two scalar problems, and as such can be solved 
explicitly using the Plemelj formula.  However, we will not need the exact formula, only 
that ${\bf Q}(\lambda;\chi,\tau)$ is diagonal:
\eq
{\bf Q}(\lambda;\chi,\tau) \equiv \bpm Q_{11}(\lambda;\chi,\tau) & 0 \\ 0 & Q_{22}(\lambda;\chi,\tau) \epm.
\endeq
Finally, we define the error function by the ratio 
\eq
{\bf R}^{[n]}(\lambda;\chi,\tau):={\bf P}^{[n]}(\lambda;\chi,\tau){\bf Q}(\lambda;\chi,\tau)^{-1}.
\endeq
The jumps across $\partial D_\xi$ and $\partial D_{\xi^*}$ cancel exactly.  Therefore, 
${\bf R}^{[n]}(\lambda;\chi,\tau)$ is analytic for 
$\lambda\notin\Sigma_\xi^\text{(in)}\cup\Sigma_\xi^\text{(out)}\cup\Sigma_{\xi^*}^\text{(in)}\cup\Sigma_{\xi^*}^\text{(out)}$, 
${\bf R}^{[n]}(\lambda;\chi,\tau)\to\mathbb{I}$ as $\lambda\to\infty$, and 
${\bf R}_+^{[n]}(\lambda;\chi,\tau)={\bf R}_-^{[n]}(\lambda;\chi,\tau){\bf V}_{\bf R}^{[n]}(\lambda;\chi,\tau)$, 
where 
\eq
{\bf V}_{\bf R}^{[n]}(\lambda;\chi,\tau)={\bf Q}_-(\lambda;\chi,\tau){\bf V}_{\bf P}^{[n]}(\lambda;\chi,\tau){\bf Q}_+(\lambda;\chi,\tau)^{-1}.
\endeq
Therefore, the jump matrices for ${\bf R}^{[n]}(\lambda;\chi,\tau)$ are 
exponentially close to the identity matrix as $n\to\infty$.  From standard 
nonlinear steepest-descent analysis (see, for instance, \cite{DeiftZ:1993} or 
\cite[Appendix B]{BuckinghamM:2013}), there is a constant $d>0$ such that 
${\bf R}^{[n]}(\lambda;\chi,\tau)=\mathbb{I}+\mathcal{O}(e^{-dn})$ uniformly 
in $\lambda$ and uniformly in $(\chi,\tau)$ bounded a fixed distance away 
from the edge of the zero region.  This implies that 
${\bf P}^{[n]}(\lambda;\chi,\tau)$ is exponentially close to 
${\bf Q}(\lambda;\chi,\tau)$ as $n\to\infty$.  Unwinding the transformations 
${\bf M}^{[n]}(\lambda;\chi,\tau)\to{\bf N}^{[n]}(\lambda;\chi,\tau)\to{\bf O}^{[n]}(\lambda;\chi,\tau)\to{\bf P}^{[n]}(\lambda;\chi,\tau)$, 
for fixed $(\chi,\tau)\in\mathcal{Z}$ with $\chi>0$, we have in particular 
\eq
[{\bf M}^{[n]}(\lambda;n\chi,n\tau)]_{12} = \mathcal{O}(e^{-dn})
\endeq
for some constant $d>0$ uniformly in $\lambda$.  Thus, from 
\eqref{psi-from-M}, for fixed $(\chi,\tau)\in\mathcal{Z}$ with $\chi>0$, 
\eqref{zero-region-result} holds for some constant $d>0$.  The analysis for 
$(\chi,\tau)\in\mathcal{Z}$ with $\chi<0$ follows exactly the same logic, 
only starting from the factorizations 
\eqref{Dxi-Xneg}--\eqref{Dxic-Xneg} instead of 
\eqref{Dxi-Xpos}--\eqref{Dxic-Xpos}.  This concludes the proof of Theorem 
\ref{thm-zero-reg} and the asymptotic description of the zero region.

\section{The near-field limit and the Painlev\'e-III hierarchy}
\label{sec-near-field}

We now prove Theorem \ref{thm-near-field}.  Our first move is to obtain an 
$n$-independent Riemann-Hilbert problem whose solution is a good 
approximation of ${\bf M}^{[n]}(\lambda;x,t)$ in a suitable rescaling near 
$(x,t)=(0,0)$.  Recall the near-field scalings $X:=nx$, $T:=n^2t$ (see 
\eqref{near-field-scaling}).  We also scale the spectral parameter $\lambda$:
\eq
\Lambda:=n^{-1}\lambda.
\endeq
With this scaling in mind we recall that the radius of the jump contour $D_0$ 
for ${\bf M}^{[n]}(\lambda;x,t)$ is arbitrary (as long as it encloses $\xi$). 
Thus we choose $\partial D_0$ to be a circle of radius $n$ centered at the 
origin (and hence $|\Lambda|=1$).  Applying these scalings for $x$, $t$, and 
$\lambda$ to the jump matrix for ${\bf M}^{[n]}(\lambda;x,t)$ in 
\eqref{eq:jump-rhp-M} gives
\eq
\label{M-jump-approx}
\begin{split}
e^{-i(\lambda x+\lambda^2 t)\sigma_3}\mathbf{S} \left( \frac{\lambda-\xi}{\lambda-\xi^*}\right)^{n\sigma_3} \mathbf{S}^{-1}e^{i(\lambda x+\lambda^2 t)\sigma_3}\Big\vert_{x=n^{-1}X,\, t=n^{-2}T,\,\lambda=n\Lambda}= & \\
\left(\mathbb{I} + \mathcal{O}(n^{-1}) \right)e^{-i(\Lambda X+\Lambda^2 T)\sigma_3}\mathbf{S}e^{-2i\beta\Lambda^{-1}\sigma_3}&\mathbf{S}^{-1}e^{i(\Lambda X+\Lambda^2 T)\sigma_3}.
\end{split}
\endeq
Neglecting the terms of $\mathcal{O}(n^{-1})$, we arrive (formally) at the 
near-field Riemann-Hilbert problem.
\begin{rhp}[The near-field problem]
Let $(X,T)\in\mathbb{R}^2$ be fixed but arbitrary parameters.  Find the 
unique $2\times 2$ matrix-valued function $\mathbf{A}(\Lambda;X,T)$ with the 
following properties:
\begin{itemize}
\item[]\textbf{Analyticity:} $\mathbf{A}(\Lambda;X,T)$ is analytic in 
$\Lambda$ for $|\Lambda|\neq 1$, and it takes continuous boundary values from 
the interior and exterior of $|\Lambda|=1$.
\item[]\textbf{Jump condition:} The boundary values on the jump contour 
(oriented clockwise) follow the relation
\begin{equation}
\mathbf{A}_+(\Lambda;X,T) = \mathbf{A}_-(\Lambda;X,T)e^{-i(\Lambda X+\Lambda^2 T)\sigma_3}\mathbf{S} e^{-2i\beta\Lambda^{-1}\sigma_3} \mathbf{S}^{-1}e^{i(\Lambda X+\Lambda^2 T)\sigma_3},\quad|\Lambda|=1.
\label{eq:jump-rhp-A}
\end{equation}
\item[]\textbf{Normalization:} $\mathbf{A}(\Lambda;X,T)\to\mathbb{I}$ as $\Lambda\to\infty$.
\end{itemize}
\label{rhp:limiting}
\end{rhp}
If $(c_1,c_2)=(1,\pm 1)$ and $\xi=i$, then Riemann-Hilbert Problem 
\ref{rhp:limiting} is exactly Riemann-Hilbert Problem 3 in \cite{BilmanLM:2018} 
used to define $\Psi(X,T;(1,\pm 1))$.  We now define 
\eq
\label{Psi-def}
\Psi(X,T;{\bf c}):=2i\lim_{\Lambda\to\infty}\Lambda[{\bf A}(\Lambda;X,T)]_{12}.
\endeq
As we show in Theorem \ref{thm-near-field}, this function 
$\Psi(X,T;{\bf c})$ is the scaled limit of $\psi^{[2n]}(x,t;{\bf c})$ in the 
near field.  From here on we assume $\xi=i$.

\subsection{The function $\Psi(X,T)$ and the NLS equation:  Proof of Theorem \ref{thm-near-field}(a).}
To prove \eqref{psi-Psi-approx} we follow the standard argument used in 
\cite[Theorem 1]{BilmanLM:2018}.  To measure the difference between Riemann-Hilbert 
Problem \ref{rhp:psi-n} (appropriately scaled) and Riemann-Hilbert Problem 
\ref{rhp:limiting}, define the ratio matrix 
\eq
{\bf F}(\Lambda;X,T;{\bf c}) := {\bf M}^{[n]}\left(n\Lambda;\frac{X}{n},\frac{T}{n^2};{\bf c}\right){\bf A}(\Lambda;X,T;{\bf c})^{-1}.
\endeq
Then ${\bf F}(\Lambda;X,T)$ is analytic for $|\Lambda|\neq 1$, whereas for 
$|\Lambda|=1$ we have 
\eq
\begin{split}
{\bf F}_+(\Lambda;X,T) & = {\bf F}_-(\Lambda;X,T){\bf A}_-(\Lambda;X,T)(\mathbb{I}+\mathcal{O}(n^{-1})){\bf A}_-(\Lambda;X,T)^{-1} \\ 
   & = {\bf F}_-(\Lambda;X,T)(\mathbb{I}+\mathcal{O}(n^{-1})).
\end{split}
\endeq
Here the first line follows from \eqref{M-jump-approx}, while the second line 
follows from the boundedness of ${\bf A}(\Lambda;X,T)$ and the fact that 
$\det{\bf A}(\Lambda;X,T)\equiv 1$.  Since we also have 
${\bf F}(\Lambda;X,T)\to\mathbb{I}$ as $\Lambda\to\infty$, the function 
${\bf F}(\Lambda;X,T)$ satisfies a small-norm Riemann-Hilbert problem, from 
which it follows \cite{DeiftZ:1993} that 
\eq
{\bf F}(\Lambda;X,T) = \mathbb{I} + \mathcal{O}(n^{-1})
\endeq
uniformly for compact regions in the $XT$-plane.  Starting from 
\eqref{psi-from-M}, 
\eq
\begin{split}
\frac{1}{n}\psi^{[2n]}\left(\frac{X}{n},\frac{T}{n^2}\right) & = \frac{2i}{n}\lim_{\lambda\to\infty}\lambda\left[{\bf M}^{[n]}\left(\lambda;\frac{X}{n},\frac{T}{n^2}\right)\right]_{12} \\
   & = \frac{2i}{n}\lim_{\Lambda\to\infty} n\Lambda\left([{\bf F}(\Lambda;X,T)]_{11}[{\bf A}(\Lambda;X,T)]_{12} + [{\bf F}(\Lambda;X,T)]_{12}[{\bf A}(\Lambda;X,T)]_{22} \right) \\
   & = 2i \lim_{\Lambda\to\infty}\Lambda[{\bf A}(\Lambda;X,T)]_{12} + \mathcal{O}(n^{-1}) \\ 
   & = \Psi(X,T) + \mathcal{O}(n^{-1})
\end{split}
\endeq
uniformly in $X$ and $T$ chosen from compact sets.

To prove that $\Psi(X,T;{\bf c})$ satisfies the nonlinear Schr\"odinger 
equation, define 
\eq
\label{K-ito-A}
{\bf K}(\Lambda;X,T):={\bf A}(\Lambda;X,T)e^{-i(\Lambda X+\Lambda^2T)\sigma_3}.
\endeq
Then, following the proof of \cite[Proposition 3]{BilmanLM:2018} with their 
matrix ${\bf Q}=\frac{1}{\sqrt{2}}\bbm 1 & -1 \\ 1 & 1 \ebm$ replaced with 
the more general matrix ${\bf S}$ (indeed, ${\bf S}={\bf Q}$ if 
${\bf c}=(1,1)$), the function ${\bf K}(\Lambda;X,T)$ satisfies the system of 
overdetermined ordinary differential equations 
\eq
\begin{split}
\frac{\partial{\bf K}}{\partial X}(\Lambda;X,T) & = \bbm -i\Lambda & \Psi(X,T) \\ -\Psi(X,T) & i\Lambda \ebm {\bf K}(\Lambda;X,T), \\ 
\frac{\partial{\bf K}}{\partial T}(\Lambda;X,T) & = \bbm -i\Lambda^2+\frac{i}{2}|\Psi(X,T)|^2 & \Lambda\Psi(X,T) + \frac{i}{2}\Psi_X(X,T) \\ -\Lambda\Psi(X,T)^* + \frac{i}{2}\Psi_X(X,T)^* & i\Lambda^2 - \frac{i}{2}|\Psi(X,T)|^2 \ebm {\bf K}(\Lambda;X,T), 
\end{split}
\endeq
which is simply the Lax pair \eqref{lax-pair-xt} with $\psi$, $x$, $t$, and 
$\lambda$ replaced by $\Psi$, $X$, $T$, and $\Lambda$, respectively.  This means 
\eqref{nls-XT} is equivalent to the condition 
${\bf K}_{XT}(\Lambda;X,T)={\bf K}_{TX}(\Lambda;X,T)$, and so $\Psi(X,T)$ 
satisfies \eqref{nls-XT}.  This completes the proof of 
Theorem \ref{thm-near-field}(a).

\subsection{The function $\Psi(X,T)$ and the $P_{III}$ hierarchy:  Proof of Theorem \ref{thm-near-field}(b).}
In \cite[\S 3.2.1]{BilmanLM:2018} it is shown that $\Psi(X,T;(1,\pm 1))$ 
satisfies \eqref{fourth-order-ode} by deriving a Lax pair in $\Lambda$ and 
$X$ for the function 
\eq
{\bf B}(\Lambda;X,T)e^{-i(\Lambda X+\Lambda^2 T+2\Lambda^{-1})\sigma_3},
\endeq  
where 
\begin{equation}
\label{B-ito-A}
\mathbf{B}(\Lambda;X,T):= \begin{cases}
\mathbf{A}(\Lambda;X,T)e^{-i(\Lambda X+\Lambda^2 T)\sigma_3} \mathbf{S}^{-1} e^{i(\Lambda X+\Lambda^2 T)\sigma_3},&|\Lambda|<1, \\
\mathbf{A}(\Lambda;X,T)e^{2i\Lambda^{-1}\sigma_3},&|\Lambda|>1.
\end{cases}
\end{equation}
The derivation depends on the fact that the jump for this function across the 
unit circle is constant, but not on the particular constant jump matrix.  
Since we have 
\eq
({\bf B}(\Lambda;X,T)e^{-i(\Lambda X+\Lambda^2 T+2\Lambda^{-1})\sigma_3})_+ = ({\bf B}(\Lambda;X,T)e^{-i(\Lambda X+\Lambda^2 T+2\Lambda^{-1})\sigma_3})_-{\bf S}^{-1}, \quad |\Lambda|=1,
\endeq
the derivation in \cite{BilmanLM:2018} goes through unchanged, and 
$\Psi(X,T;{\bf c})$ satisfies \eqref{fourth-order-ode} for general 
${\bf c}\in(\mathbb{C}^*)^2$. 
We now calculate $\Psi(0,0;{\bf c})$ and $\Psi_X(0,0;{\bf c})$.  
\begin{lemma} 
\label{Psi-lemma}
For any ${\bf c}=(c_1,c_2)\in(\mathbb{C}^*)^2$ and $\xi=i$, 
$\Psi(0,0;{\bf c})=8\frac{ c_1 c_2^*}{|\mathbf{c}|^2}$.
\end{lemma}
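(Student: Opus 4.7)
The plan is to solve Riemann-Hilbert Problem \ref{rhp:limiting} explicitly at $(X,T)=(0,0)$ by exploiting the observation that, with $\xi=i$ (so $\beta=1$) and $X=T=0$, the jump matrix in \eqref{eq:jump-rhp-A} collapses to ${\bf S}e^{-2i\Lambda^{-1}\sigma_3}{\bf S}^{-1}$. Since ${\bf S}$ is constant and $e^{-2i\Lambda^{-1}\sigma_3}$ is analytic on $\mathbb{C}\setminus\{0\}$ and tends to $\mathbb{I}$ as $\Lambda\to\infty$, this jump matrix is itself an analytic function of $\Lambda$ on $\mathbb{C}\setminus\{0\}$ with identity limit at infinity. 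This makes the problem explicitly solvable by the ``trivial gluing'' ansatz
\begin{equation*}
{\bf A}(\Lambda;0,0) = \begin{cases} \mathbb{I}, & |\Lambda|<1, \\ {\bf S}e^{-2i\Lambda^{-1}\sigma_3}{\bf S}^{-1}, & |\Lambda|>1. \end{cases}
\end{equation*}

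I would first verify that this ansatz satisfies all three conditions of Riemann-Hilbert Problem \ref{rhp:limiting}: analyticity in each domain is immediate; the jump relation $\mathbf{A}_+ = \mathbf{A}_-\cdot({\bf S}e^{-2i\Lambda^{-1}\sigma_3}{\bf S}^{-1})$ holds by construction; and the normalization at infinity follows because $e^{-2i\Lambda^{-1}\sigma_3}\to\mathbb{I}$. Uniqueness of the solution to Riemann-Hilbert Problem \ref{rhp:limiting} is standard — the usual Liouville argument (as in \cite[Theorem 2.4]{BilmanM:2017}) applies since $\det\mathbf{A}\equiv 1$.

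Next I would expand this explicit formula at $\Lambda=\infty$ and extract the leading off-diagonal term. Since
\begin{equation*}
{\bf A}(\Lambda;0,0) = \mathbb{I} - \frac{2i}{\Lambda}\,{\bf S}\sigma_3{\bf S}^{-1} + \mathcal{O}(\Lambda^{-2}), \quad |\Lambda|\to\infty,
\end{equation*}
and a direct calculation from \eqref{S-def} gives
\begin{equation*}
{\bf S}\sigma_3{\bf S}^{-1} = \frac{1}{|{\bf c}|^2}\begin{pmatrix} |c_1|^2-|c_2|^2 & 2c_1c_2^* \\ 2c_1^*c_2 & |c_2|^2-|c_1|^2 \end{pmatrix},
\end{equation*}
the $(1,2)$-entry at order $\Lambda^{-1}$ is $-4ic_1c_2^*/|{\bf c}|^2$. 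Applying the extraction formula \eqref{Psi-def} yields
\begin{equation*}
\Psi(0,0;{\bf c}) = 2i\lim_{\Lambda\to\infty}\Lambda[{\bf A}(\Lambda;0,0)]_{12} = 2i\cdot\left(-\frac{4ic_1c_2^*}{|{\bf c}|^2}\right) = \frac{8c_1c_2^*}{|{\bf c}|^2}.
\end{equation*}

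There is essentially no obstacle here. The only nontrivial step is recognizing that setting $X=T=0$ eliminates all $\lambda$-dependence outside of the factor ${\bf S}e^{-2i\Lambda^{-1}\sigma_3}{\bf S}^{-1}$, which happens to be analytic off the origin — so rather than needing nonlinear steepest descent or even a Plemelj computation, one can write down $\mathbf{A}(\Lambda;0,0)$ by inspection. The formula also passes the sanity checks $\Psi(0,0;(1,1))=4$ and $\Psi(0,0;(1,-1))=-4$ from property (b$^\prime$) stated earlier in the introduction.
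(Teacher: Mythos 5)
Your proof is correct and follows essentially the same route as the paper: the paper passes to the auxiliary matrix $\mathbf{B}(\Lambda;X,T)$ of \eqref{B-ito-A} and writes down its explicit form \eqref{B-at-00} at the origin, which is precisely your ``trivial gluing'' solution for $\mathbf{A}(\Lambda;0,0)$ translated through that substitution, and then extracts the same coefficient $4[\mathbf{S}\sigma_3\mathbf{S}^{-1}]_{12}=8c_1c_2^*/|\mathbf{c}|^2$ from the expansion at $\Lambda=\infty$. Your sign is consistent with the clockwise orientation of the unit circle (so that the $+$ boundary value is taken from the exterior), as confirmed by your check against $\Psi(0,0;(1,\pm 1))=\pm 4$.
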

\begin{proof}
Using the (Riemann-Hilbert) properties of $\mathbf{A}(\Lambda;X,T)$, we see that 
$\mathbf{B}(\Lambda;X,T)$ defined in \eqref{B-ito-A} is unimodular and analytic for $\Lambda\neq 0$ away from 
$|\Lambda|=1$, and has the property $\mathbf{B}(\Lambda;X,T)\to\mathbb{I}$ as 
$\Lambda\to\infty$. The jump condition satisfied by $\mathbf{B}(\Lambda;X,T)$ is
\begin{equation}
\mathbf{B}_+(\Lambda;X,T) = \mathbf{B}_-(\Lambda;X,T) e^{-i(\Lambda X + \Lambda^2 T + 2 \Lambda^{-1})\sigma_3} \mathbf{S}^{-1}e^{i(\Lambda X + \Lambda^2 T + 2 \Lambda^{-1})\sigma_3},\quad |\Lambda|=1.
\end{equation}
Moreover, since 
$e^{2i\Lambda^{-1}\sigma_3}=\mathbb{I} + 2i \sigma_3 \Lambda^{-1} + \mathcal{O}(\Lambda^{-2})$ 
as $\Lambda\to \infty$, the identity \eqref{Psi-def} implies that we recover 
$\Psi(X,T;{\bf c})$ via the limit
\begin{equation}
\label{Psi-ito-B}
\Psi(X,T;{\bf c})=2i\lim_{\Lambda\to\infty}\Lambda [{\bf B}(\Lambda;X,T)]_{12}.
\end{equation}
When $(X,T)=(0,0)$, we have the explicit formula 
\begin{equation}
\label{B-at-00}
\mathbf{B}(\Lambda;0,0)=\begin{cases}
\mathbf{S}, & |\Lambda|<1, \\
\mathbf{S}e^{-2i \Lambda^{-1}\sigma_3}\mathbf{S}^{-1}e^{2i \Lambda^{-1}\sigma_3}, & |\Lambda|>1.
\end{cases}
\end{equation}
Then, as $\Lambda\to\infty$,
\begin{equation}
\begin{aligned}
\mathbf{S}e^{-2i \Lambda^{-1}\sigma_3}\mathbf{S}^{-1}e^{2i \Lambda^{-1}\sigma_3}&=
\mathbf{S} \left(\mathbb{I} - 2i\Lambda^{-1}\sigma_3 + \mathcal{O}(\Lambda^{-2}) \right)\mathbf{S}^{-1} \left(\mathbb{I} + 2i\Lambda^{-1}\sigma_3 + \mathcal{O}(\Lambda^{-2}) \right)\\
&=\mathbb{I} + 2i (\sigma_3 - \mathbf{S}\sigma_3\mathbf{S}^{-1})\Lambda^{-1} + \mathcal{O}(\Lambda^{-2}),
\end{aligned}
\end{equation}
and hence
\begin{equation}
\Psi(0,0;{\bf c})=-4\left[\sigma_3 - \mathbf{S}\sigma_3\mathbf{S}^{-1}\right]_{12} = 8\frac{c_1 c_2^*}{|\mathbf{c}|^2},
\end{equation}
as asserted.
\end{proof}

\begin{lemma}
\label{PsiX-lemma}
For any ${\bf c}=(c_1,c_2)\in(\mathbb{C}^*)^2$ and $\xi=i$, 
$\Psi_X(0,0;{\bf c}) = 32\frac{c_1c_2^*}{|{\bf c}|^4}(c_2c_2^*-c_1c_1^*)$.
\end{lemma}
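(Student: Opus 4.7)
The plan is to extract $\Psi_X(0,0;\mathbf{c})$ from the next-to-leading term in the $\Lambda^{-1}$ expansion of $\mathbf{A}(\Lambda;0,0)$, coupling Lemma~\ref{Psi-lemma} with the Lax-pair relation in $X$ that was established in the proof of Theorem~\ref{thm-near-field}(a). Recall that $\mathbf{K}(\Lambda;X,T)=\mathbf{A}(\Lambda;X,T)e^{-i(\Lambda X+\Lambda^2 T)\sigma_3}$ satisfies the first equation of \eqref{lax-pair-xt} with $\psi,x,t,\lambda$ replaced by $\Psi,X,T,\Lambda$. Rewriting this ODE for $\mathbf{A}$ gives
\begin{equation*}
\partial_X\mathbf{A}(\Lambda;X,T) = -i\Lambda[\sigma_3,\mathbf{A}(\Lambda;X,T)] + \bbm 0 & \Psi(X,T) \\ -\Psi(X,T)^* & 0 \ebm \mathbf{A}(\Lambda;X,T).
\end{equation*}
Expand $\mathbf{A}(\Lambda;X,T)=\mathbb{I}+\Lambda^{-1}\mathbf{A}_1(X,T)+\Lambda^{-2}\mathbf{A}_2(X,T)+\mathcal{O}(\Lambda^{-3})$ as $\Lambda\to\infty$; by \eqref{Psi-def} we have $[\mathbf{A}_1]_{12}=-\tfrac{i}{2}\Psi$. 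Taking the $(1,2)$ entry of the display and matching powers of $\Lambda$, the coefficient of $\Lambda^0$ just recovers $\Psi=2i[\mathbf{A}_1]_{12}$, while the coefficient of $\Lambda^{-1}$ yields the key identity
\begin{equation*}
\Psi_X(X,T;\mathbf{c}) = 4[\mathbf{A}_2(X,T)]_{12} + 2i\,\Psi(X,T;\mathbf{c})\,[\mathbf{A}_1(X,T)]_{22}.
\end{equation*}

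Next, I would evaluate this identity at $(X,T)=(0,0)$. Combining \eqref{B-at-00} with \eqref{B-ito-A} for $|\Lambda|>1$ gives the closed-form expression $\mathbf{A}(\Lambda;0,0)=\mathbf{S}\,e^{-2i\Lambda^{-1}\sigma_3}\mathbf{S}^{-1}$, so the geometric-series expansion $e^{-2i\Lambda^{-1}\sigma_3}=\mathbb{I}-2i\Lambda^{-1}\sigma_3-2\Lambda^{-2}\mathbb{I}+\mathcal{O}(\Lambda^{-3})$ (using $\sigma_3^2=\mathbb{I}$) yields $\mathbf{A}_1(0,0)=-2i\,\mathbf{S}\sigma_3\mathbf{S}^{-1}$ and $\mathbf{A}_2(0,0)=-2\mathbb{I}$. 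A direct matrix computation with \eqref{S-def} produces
\begin{equation*}
\mathbf{S}\sigma_3\mathbf{S}^{-1} = \frac{1}{|\mathbf{c}|^2}\bpm |c_1|^2-|c_2|^2 & 2c_1 c_2^* \\ 2c_1^* c_2 & |c_2|^2-|c_1|^2 \epm,
\end{equation*}
from which I read off $[\mathbf{A}_2(0,0)]_{12}=0$ and $[\mathbf{A}_1(0,0)]_{22}=-2i(|c_2|^2-|c_1|^2)/|\mathbf{c}|^2$. Substituting these together with $\Psi(0,0;\mathbf{c})=8c_1c_2^*/|\mathbf{c}|^2$ from Lemma~\ref{Psi-lemma} into the key identity gives $\Psi_X(0,0;\mathbf{c})=32\,c_1c_2^*(|c_2|^2-|c_1|^2)/|\mathbf{c}|^4$, as claimed.

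There is no serious obstacle here; the argument is essentially a one-order-higher continuation of the bookkeeping in Lemma~\ref{Psi-lemma}, now coupled with the Lax-pair relation that ties $\Psi_X$ to $\mathbf{A}_2$. The only point that requires care is sign/coefficient accuracy in the expansion of $e^{-2i\Lambda^{-1}\sigma_3}$, and in particular the observation that $\mathbf{A}_2(0,0)$ is a scalar multiple of the identity so that $[\mathbf{A}_2(0,0)]_{12}=0$. Without this vanishing, the final expression would pick up an extra additive term and would not take the clean form asserted.
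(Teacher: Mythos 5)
Your proof is correct and follows essentially the same route as the paper: both extract $\Psi_X(0,0;{\bf c})$ by combining the Lax pair satisfied by ${\bf K}$ with the explicit formula ${\bf S}e^{-2i\Lambda^{-1}\sigma_3}{\bf S}^{-1}$ for the solution at the origin. The only cosmetic difference is that you use the $X$-equation applied directly to ${\bf A}$ (so that ${\bf A}_2(0,0)=-2\mathbb{I}$ kills one term and only $2i\Psi\,[{\bf A}_1]_{22}$ survives), whereas the paper reads $\Psi_X$ off the $T$-equation written for ${\bf B}={\bf A}e^{2i\Lambda^{-1}\sigma_3}$; the two identities $\Psi_X=4[{\bf A}_2]_{12}-4[{\bf A}_1]_{12}[{\bf A}_1]_{22}$ and $\Psi_X=4[{\bf B}_{-2}]_{12}-4[{\bf B}_{-1}]_{12}[{\bf B}_{-1}]_{22}$ are algebraically equivalent and both yield $32\,c_1c_2^*(c_2c_2^*-c_1c_1^*)/|{\bf c}|^4$.
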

\begin{proof}
We expand ${\bf B}(\Lambda;X,T)$ as 
\eq
{\bf B}(\Lambda;X,T) = \mathbb{I} + \frac{1}{\Lambda}{\bf B}_{-1}(X,T) + \frac{1}{\Lambda^2}{\bf B}_{-2}(X,T) + \mathcal{O}\left(\frac{1}{\Lambda^3}\right), \quad \Lambda\to\infty.
\endeq
From \eqref{K-ito-A} and \eqref{B-ito-A},
\eq
{\bf K}(\Lambda;X,T) = {\bf B}(\Lambda;X,T)e^{-i(\Lambda x+\Lambda^2 T + 2i\Lambda^{-1})\sigma_3}, \quad |\Lambda|>1.
\endeq 
Inserting this expression into 
\eq
{\bf K}_T(\Lambda;X,T) {\bf K}(\Lambda;X,T)^{-1} = \bbm -i\Lambda^2+\frac{i}{2}|\Psi(X,T)|^2 & \Lambda\Psi(X,T) + \frac{i}{2}\Psi_X(X,T) \\ -\Lambda\Psi(X,T)^* + \frac{i}{2}\Psi_X(X,T)^* & i\Lambda^2 - \frac{i}{2}|\Psi(X,T)|^2 \ebm 
\endeq
and expanding as $\Lambda\to\infty$ shows 
\eq
\label{PsiX-ito-B}
\Psi_X(0,0;{\bf c}) = 4[{\bf B}_{-2}(0,0)]_{12} - 4[{\bf B}_{-1}(0,0)]_{12}[{\bf B}_{-1}(0,0)]_{22}.
\endeq
From the formula \eqref{B-at-00} for ${\bf B}(\Lambda;0,0)$,
\eq
\label{B-entries}
[{\bf B}_{-2}(0,0)]_{12} = -\frac{8}{|{\bf c}|^2}c_1c_2^*, \quad [{\bf B}_{-1}(0,0)]_{12} = -\frac{4i}{|{\bf c}|^2}c_1c_2^*, \quad [{\bf B}_{-1}(0,0)]_{22} = -\frac{4i}{|{\bf c}|^2}c_2c_2^*.
\endeq
Combining \eqref{PsiX-ito-B} and \eqref{B-entries} gives the desired result.
\end{proof}
Now that $\Psi(0,0;{\bf c})$ and $\Psi_X(0,0;{\bf c})$ are known, higher 
derivatives $\frac{\partial^j\Psi}{\partial X^j}(0,0;{\bf c})$ can be found 
using \eqref{fourth-order-ode}.  Our next step is to prove 
\eqref{Psi-symmetry-X}.
\begin{lemma}
\label{Psi-symmetry-X-lemma}
$\Psi(-X,T;{\bf c}^*\sigma_1)=\Psi(X,T;\mathbf{c})$ for any $\mathbf{c}\in(\mathbb{C}^*)^2$.
\end{lemma}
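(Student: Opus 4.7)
My plan is to exploit the uniqueness of Riemann-Hilbert Problem~\ref{rhp:limiting} by building, from the solution $\mathbf{A}(\Lambda;X,T;\mathbf{c})$, an explicit transformation that solves the same problem at parameters $(X,T;\mathbf{c}^*\sigma_1)$. The natural ansatz is
\[
\tilde{\mathbf{A}}(\Lambda;X,T) := \sigma_3\,\mathbf{A}(-\Lambda;-X,T;\mathbf{c})\,\sigma_3.
\]
Analyticity of $\tilde{\mathbf{A}}$ off the unit circle and the normalization $\tilde{\mathbf{A}}\to\mathbb{I}$ as $\Lambda\to\infty$ follow directly from the corresponding properties of $\mathbf{A}$, because $\Lambda\mapsto -\Lambda$ preserves $|\Lambda|=1$ together with its interior, exterior, and clockwise orientation (and hence the $\pm$ labeling of boundary values). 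It follows that
\[
\tilde{\mathbf{A}}_+(\Lambda;X,T) = \tilde{\mathbf{A}}_-(\Lambda;X,T)\cdot\sigma_3\,J(-\Lambda;-X,T;\mathbf{c})\,\sigma_3,
\]
where $J$ denotes the jump matrix of \eqref{eq:jump-rhp-A}.

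The core of the argument is to show that this right-hand factor equals $J(\Lambda;X,T;\mathbf{c}^*\sigma_1)$. First I would simplify the outer exponentials using $(-\Lambda)(-X)=\Lambda X$, $(-\Lambda)^2=\Lambda^2$, and the fact that $\sigma_3$ commutes with $e^{i\alpha\sigma_3}$. The inner factor becomes $e^{2i\Lambda^{-1}\sigma_3}$, which I would convert back to $e^{-2i\Lambda^{-1}\sigma_3}$ by inserting $\sigma_1\sigma_1=\mathbb{I}$ and invoking $\sigma_1\sigma_3\sigma_1=-\sigma_3$. This maneuver absorbs a factor of $\sigma_1$ into the conjugation, so that the matrix $\mathbf{S}$ appears in the combination $\sigma_3\mathbf{S}\sigma_1$. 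The essential algebraic identity, to be verified by direct multiplication from \eqref{S-def}, is that $\sigma_3\mathbf{S}\sigma_1$ equals the negative of the matrix obtained from \eqref{S-def} by replacing $\mathbf{c}$ with $\mathbf{c}^*\sigma_1$. The overall minus sign cancels between this factor and its inverse, so the jump of $\tilde{\mathbf{A}}$ is exactly $J(\Lambda;X,T;\mathbf{c}^*\sigma_1)$. Uniqueness of Riemann-Hilbert Problem~\ref{rhp:limiting} then yields $\tilde{\mathbf{A}}(\Lambda;X,T)=\mathbf{A}(\Lambda;X,T;\mathbf{c}^*\sigma_1)$.

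The conclusion follows by extracting $\Psi$ through the limit in \eqref{Psi-def}. Conjugation by $\sigma_3$ negates the $(1,2)$ entry, while substituting $\Lambda\mapsto -\Lambda$ in the large-$\Lambda$ limit (direction-independent since $\mathbf{A}\to\mathbb{I}$ uniformly) contributes a compensating minus sign together with the replacement $-X\mapsto X$, producing
\[
\Psi(X,T;\mathbf{c}^*\sigma_1) = \Psi(-X,T;\mathbf{c}).
\]
Applying this identity once more with $\mathbf{c}$ replaced by $\mathbf{c}^*\sigma_1$ and using $(\mathbf{c}^*\sigma_1)^*\sigma_1=\mathbf{c}$ (since $\sigma_1$ is real and $\sigma_1^2=\mathbb{I}$) gives the stated symmetry. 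The main obstacle will be careful sign bookkeeping: one must track the interplay of $\sigma_1\sigma_3\sigma_1=-\sigma_3$, the sign appearing in the algebraic identity for $\sigma_3\mathbf{S}\sigma_1$, the sign flip $[\sigma_3M\sigma_3]_{12}=-[M]_{12}$, and the sign introduced by $\Lambda\mapsto -\Lambda$ in the limit defining $\Psi$, so that all four combine consistently.
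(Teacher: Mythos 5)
Your proof is correct and follows essentially the same route as the paper: conjugate $\mathbf{A}(-\Lambda;-X,T;\cdot)$ by $\sigma_3$, match the resulting jump matrix to that of the transformed parameters, and invoke uniqueness of Riemann-Hilbert Problem \ref{rhp:limiting} before extracting $\Psi$ from the residue at infinity. The only cosmetic difference is that you verify the jump identity by inserting $\sigma_1\sigma_1=\mathbb{I}$ around the inner exponential and using $\sigma_3\mathbf{S}(\mathbf{c})\sigma_1=-\mathbf{S}(\mathbf{c}^*\sigma_1)$, whereas the paper uses the alternative representation of the jump in terms of $\widetilde{\mathbf{S}}$ from \eqref{V-representations} together with the identity $\mathbf{S}(\sigma_1\mathbf{c}^*)=\sigma_3\widetilde{\mathbf{S}}(\mathbf{c})\sigma_3$ --- the same computation in different clothing.
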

\begin{proof}
Let ${\bf A}(\Lambda;X,T)$ be the unique solution of Riemann-Hilbert Problem 
\ref{rhp:limiting} with given $\mathbf{c}=(c_1,c_2)\in(\mathbb{C}^*)^2$ and 
$(X,T)\in\mathbb{R}^2$, and hence
\begin{equation}
2i \left[ \mathbf{A}(\Lambda;X,T;\mathbf{c}) \right]_{12}= \frac{\Psi(X,T;\mathbf{c})}{\Lambda} + \mathcal{O}\left(\frac{1}{\Lambda^2}\right),\quad \Lambda\to\infty.
\end{equation}
From the different representations given in \eqref{V-representations} with  
$\mathbf{S}=\mathbf{S}(\mathbf{c})$ and 
$\widetilde{\mathbf{S}}=\widetilde{\mathbf{S}}(\mathbf{c})$ defined as in 
\eqref{S-def} and \eqref{Stilde-def}, it follows that the jump matrix
\begin{equation}
\label{VA-def}
\mathbf{V}_{\bf A}(\Lambda;X,T;\mathbf{c}):=e^{-i(\Lambda X+\Lambda^2 T)\sigma_3}\mathbf{S}(\mathbf{c}) e^{-2i\Lambda^{-1}\sigma_3} \mathbf{S}(\mathbf{c})^{-1}e^{i(\Lambda X+\Lambda^2 T)\sigma_3}
\end{equation}
in \eqref{eq:jump-rhp-A} also has the representation
\begin{equation}
\mathbf{V}_{\bf A}(\Lambda;X,T;\mathbf{c}) = e^{-i(\Lambda X+\Lambda^2 T)\sigma_3}\widetilde{\mathbf{S}}(\mathbf{c}) e^{2i\Lambda^{-1}\sigma_3} \widetilde{\mathbf{S}}(\mathbf{c})^{-1}e^{i(\Lambda X+\Lambda^2 T)\sigma_3}.
\end{equation}
Using the identity
\begin{equation}
\mathbf{S}(\sigma_1 \mathbf{c}^{*})=\frac{1}{|\mathbf{c}^*|}\begin{pmatrix} c_2^{*} & -c_1 \\ c_1^* & c_2  \end{pmatrix} =\sigma_3 \frac{1}{|\mathbf{c}|}\begin{pmatrix} c_2^{*} & c_1 \\ -c_1^* & c_2  \end{pmatrix}\sigma_3 = \sigma_3 \mathbf{\widetilde{\mathbf{S}}(\mathbf{c})}\sigma_3,
\end{equation}
we see that
\begin{equation}
\begin{aligned}
\mathbf{V}_{\bf A}(-\Lambda;-X,T;{\bf c}^*\sigma_1)&=e^{-i(\Lambda X+\Lambda^2 T)\sigma_3}\mathbf{S}({\bf c}^*\sigma_1) e^{2i\Lambda^{-1}\sigma_3} \mathbf{S}({\bf c}^*\sigma_1)^{-1}e^{i(\Lambda X+\Lambda^2 T)\sigma_3}\\
&=\sigma_3e^{-i(\Lambda X+\Lambda^2 T)\sigma_3}\widetilde{\mathbf{S}}(\mathbf{c})e^{2i\Lambda^{-1}\sigma_3} \widetilde{\mathbf{S}}(\mathbf{c})^{-1}e^{i(\Lambda X+\Lambda^2 T)\sigma_3}\sigma_3\\
&=\sigma_3\mathbf{V}_{\bf A}(\Lambda;X,T; \mathbf{c})\sigma_3.
\end{aligned}
\end{equation}
Thus, the matrix function $\widehat{\mathbf{A}}(\Lambda;X,T;\mathbf{c}):= \sigma_3{\mathbf{A}}(-\Lambda;-X,T;{\bf c}^*\sigma_1)\sigma_3$ satisfies the jump condition
\begin{equation}
\widehat{\mathbf{A}}_+(\Lambda;X,T;\mathbf{c})=\widehat{\mathbf{A}}_-(\Lambda;X,T;\mathbf{c})\mathbf{V}_{\bf A}(\Lambda;X,T;\mathbf{c}),\quad |\Lambda|=1,
\end{equation}
which is exactly the jump condition \eqref{eq:jump-rhp-A} in Riemann-Hilbert Problem 
\ref{rhp:limiting}. Moreover, $\widehat{\mathbf{A}}(\Lambda;X,T;\mathbf{c})$ is 
well-defined and analytic away from $|\Lambda|= 1$, 
$\det\widehat{\mathbf{A}}(\Lambda;X,T;\mathbf{c})=1$ on its domain of definition, 
and $\widehat{\mathbf{A}}(\Lambda;X,T;\mathbf{c})\to \mathbb{I}$ as 
$\Lambda\to\infty$. Therefore, 
$\widehat{\mathbf{A}}(\Lambda;X,T;\mathbf{c})=\sigma_3 \mathbf{A}(-\Lambda;-X,T;{\bf c}^*\sigma_1)\sigma_3$ 
also solves Riemann-Hilbert Problem~\ref{rhp:limiting}.  The identity
\begin{equation}
\mathbf{A}(\Lambda;X,T;\mathbf{c})=\sigma_3 \mathbf{A}(-\Lambda;-X,T;{\bf c}^*\sigma_1)\sigma_3
\label{eq:A-symmetry1}
\end{equation}
follows from uniqueness of the solution to the Riemann-Hilbert problem.
Using \eqref{eq:A-symmetry1}, we write
\begin{equation}
\Psi(-X,T;{\bf c}^*\sigma_1) = 2i \lim_{\Lambda\to\infty} \Lambda \left[\mathbf{A}(\Lambda;-X,T;{\bf c}^*\sigma_1) \right]_{12}=2i \lim_{\Lambda\to\infty} \Lambda\left[\sigma_3\mathbf{A}(-\Lambda;X,T;\mathbf{c})\sigma_3 \right]_{12}.
\end{equation}
Now replacing $\Lambda$ with $-\Lambda$ gives
\begin{equation}
\Psi(-X,T;{\bf c}^*\sigma_1)  = - 2i\lim_{\Lambda\to\infty} \Lambda\left[\sigma_3\mathbf{A}(\Lambda;X,T;\mathbf{c})\sigma_3 \right]_{12} = \Psi(X,T;\mathbf{c}),
\end{equation}
as desired.
\end{proof}
Now we prove \eqref{Psi-symmetry-T}.
\begin{lemma}
\label{Psi-symmetry-T-lemma}
$\Psi(X,-T;\mathbf{c})^*=\Psi(X,T;\mathbf{c})$ if ${\bf c}\in\mathbb{R}^2$. In 
particular, $\Psi(X,0;\mathbf{c})$ is real-valued when 
$\mathbf{c}\in\mathbb{R}^2$.
\end{lemma}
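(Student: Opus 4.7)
The plan is to follow the template of Lemma \ref{Psi-symmetry-X-lemma}: exhibit an explicit transformation of $\mathbf{A}(\Lambda;X,T;\mathbf{c})$ that yields another solution of Riemann-Hilbert Problem \ref{rhp:limiting} with the same parameters, and then invoke uniqueness. The desired symmetry on $\Psi$ then follows by reading off the $12$-entry coefficient of $\Lambda^{-1}$ at infinity via \eqref{Psi-def}.

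Specifically, I propose the candidate
\[
\widetilde{\mathbf{A}}(\Lambda;X,T):=\mathbf{A}(-\Lambda^*;X,-T;\mathbf{c})^*.
\]
The map $\Lambda\mapsto -\Lambda^*$ is an anti-holomorphic involution that preserves the unit circle and also preserves both the open disk $|\Lambda|<1$ and the open exterior $|\Lambda|>1$. A standard Schwarz reflection argument shows that $\widetilde{\mathbf{A}}$ is analytic in $\Lambda$ off the unit circle, and $\widetilde{\mathbf{A}}(\Lambda;X,T)\to\mathbb{I}$ as $\Lambda\to\infty$ is immediate. The substantive step is to verify the jump. Substituting $\Lambda\mapsto -\Lambda^*$ and $T\mapsto -T$ into \eqref{VA-def} and taking the complex conjugate yields
\[
\mathbf{V}_{\bf A}(-\Lambda^*;X,-T;\mathbf{c})^* = e^{-i(\Lambda X+\Lambda^2 T)\sigma_3}\mathbf{S}^* e^{-2i\Lambda^{-1}\sigma_3}(\mathbf{S}^{-1})^* e^{i(\Lambda X+\Lambda^2 T)\sigma_3},
\]
since the sign flips introduced in the exponents by $\Lambda\mapsto -\Lambda^*$ and $T\mapsto -T$ cancel against the outer conjugation, restoring the factors $e^{\mp i(\Lambda X+\Lambda^2 T)\sigma_3}$ and the middle factor $e^{-2i\Lambda^{-1}\sigma_3}$ unchanged. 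The hypothesis $\mathbf{c}\in\mathbb{R}^2$ enters at exactly one point: it forces $\mathbf{S}^*=\mathbf{S}$, since the entries of $\mathbf{S}$ in \eqref{S-def} are then all real, so the right-hand side reduces to $\mathbf{V}_{\bf A}(\Lambda;X,T;\mathbf{c})$.

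A routine but necessary check is that the anti-holomorphic substitution is compatible with the clockwise orientation of the jump contour. Since $-\Lambda^*$ sends the exterior $|\Lambda|>1$ to itself, the labels ``$+$ = exterior'' and ``$-$ = interior'' are preserved, and one obtains $\widetilde{\mathbf{A}}_+(\Lambda)=\widetilde{\mathbf{A}}_-(\Lambda)\mathbf{V}_{\bf A}(-\Lambda^*;X,-T;\mathbf{c})^*$ with no orientation reversal. Uniqueness for Riemann-Hilbert Problem \ref{rhp:limiting} (argued as in \cite[Theorem 2.4]{BilmanM:2017}) therefore forces $\mathbf{A}(-\Lambda^*;X,-T;\mathbf{c})^* = \mathbf{A}(\Lambda;X,T;\mathbf{c})$. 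Extracting the coefficient of $\Lambda^{-1}$ at infinity from this matrix identity --- noting that the change of variable $\Lambda\mapsto -\Lambda^*$ contributes a single factor of $-1$ that combines with the factor $(2i)^*=-2i$ coming out of the conjugation --- yields $\Psi(X,-T;\mathbf{c})^* = \Psi(X,T;\mathbf{c})$; specialization to $T=0$ is precisely the real-valuedness of $\Psi(X,0;\mathbf{c})$.

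The main obstacle is simply careful bookkeeping of signs: one must verify that the combined action of $\Lambda\mapsto -\Lambda^*$, $T\mapsto -T$, and entrywise complex conjugation really does return the jump matrix to its starting form when $\mathbf{S}$ is real. Once this is done, the rest of the argument is a direct transcription of the uniqueness-based strategy used in Lemma \ref{Psi-symmetry-X-lemma}.
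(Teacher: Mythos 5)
Your proposal is correct and follows essentially the same route as the paper: the paper's proof also defines $\widehat{\mathbf{A}}(\Lambda;X,T;\mathbf{c}):=\mathbf{A}(-\Lambda^*;X,-T;\mathbf{c})^*$, verifies that the jump matrix is preserved because $\mathbf{S}^*=\mathbf{S}$ when $\mathbf{c}\in\mathbb{R}^2$, and concludes by uniqueness and extraction of the $\Lambda^{-1}$ coefficient. The sign bookkeeping you describe matches the paper's computation exactly.
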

\begin{proof}
Let ${\bf A}(\Lambda;X,T)$ be the unique solution of Riemann-Hilbert 
Problem~\ref{rhp:limiting} with given $\mathbf{c}\in\mathbb{R}^2$ and 
$(X,T)\in\mathbb{R}^2$. Note that $\mathbf{S}\in\mathbb{R}^{2\times 2}$ and 
that the jump matrix$\mathbf{V}_{\bf A}(\Lambda;X,T;\mathbf{c})$
defined in \eqref{VA-def} admits the symmetry
\begin{equation}
\begin{aligned}
\mathbf{V}_{\bf A}(-\Lambda^*;X,-T;\mathbf{c})^*&=e^{-i(\Lambda X+\Lambda^2 T)\sigma_3}\mathbf{S}^* e^{-2i\Lambda^{-1}\sigma_3} (\mathbf{S}^*)^{-1}e^{i(\Lambda X+\Lambda^2 T)\sigma_3}\\
&=e^{-i(\Lambda X+\Lambda^2 T)\sigma_3}\mathbf{S} e^{-2i\Lambda^{-1}\sigma_3} \mathbf{S}^{-1}e^{i(\Lambda X+\Lambda^2 T)\sigma_3}\\
&=\mathbf{V}_{\bf A}(\Lambda;X,T; \mathbf{c}).
\end{aligned}
\end{equation}
Thus, the matrix function $\widehat{\mathbf{A}}(\Lambda;X,T;\mathbf{c}):= \mathbf{A}(-\Lambda^*;X,-T;\mathbf{c})^*$ satisfies the jump condition
\begin{equation}
\widehat{\mathbf{A}}_+(\Lambda;X,T;\mathbf{c})=\widehat{\mathbf{A}}_-(\Lambda;X,T;\mathbf{c})\mathbf{V}_{\bf A}(\Lambda;X,T;\mathbf{c}),\quad |\Lambda|=1,
\end{equation}
which is exactly the jump condition \eqref{eq:jump-rhp-A} in Riemann-Hilbert 
Problem~\ref{rhp:limiting}. Moreover, $\widehat{\mathbf{A}}(\Lambda;X,T;\mathbf{c})$ 
is well-defined and analytic away from $|\Lambda|= 1$, 
$\det\widehat{\mathbf{A}}(\Lambda;X,T;\mathbf{c})=1$ on its domain of 
definition, and $\widehat{\mathbf{A}}(\Lambda;X,T;\mathbf{c})\to \mathbb{I}$ as 
$\Lambda\to\infty$. Therefore, 
$\widehat{\mathbf{A}}(\Lambda;X,T;\mathbf{c})=\mathbf{A}(-\Lambda^*;X,-T;\mathbf{c})^*$ 
also solves Riemann-Hilbert Problem~\ref{rhp:limiting}.  Uniqueness implies 
\begin{equation}
\mathbf{A}(\Lambda;X,T;\mathbf{c})=\mathbf{A}(-\Lambda^*;X,-T;\mathbf{c})^*.
\label{eq:A-symmetry2}
\end{equation}
Using \eqref{eq:A-symmetry2},
\begin{equation}
\begin{aligned}
\Psi(X,-T)&=2i\lim_{\Lambda\to\infty}\Lambda \left[ \mathbf{A}(\Lambda; X,-T)\right]_{12}=2i\lim_{\Lambda\to\infty}\Lambda \left[ \mathbf{A}(-\Lambda^*; X,T)^*\right]_{12}\\
&=-\left(2i\lim_{\Lambda\to\infty}\Lambda^* \left[ \mathbf{A}(-\Lambda^*; X,T)\right]_{12}\right)^*.
\end{aligned}
\end{equation}
Now replacing $\Lambda$ with $-\Lambda^{*}$ gives
\begin{equation}
\Psi(X,-T)=\left(2i\lim_{\Lambda\to\infty}\Lambda \left[ \mathbf{A}(\Lambda; X,T)\right]_{12}\right)^*=\Psi(X,T)^*,
\end{equation}
which completes the proof.
\end{proof}
Lemmas \ref{Psi-lemma}, \ref{PsiX-lemma}, \ref{Psi-symmetry-X-lemma}, and 
\ref{Psi-symmetry-T-lemma}, combined with the fact that 
$\Psi(X,T;{\bf c})$ satisfies the second member of the Painlev\'e-III hierarchy 
given in \eqref{fourth-order-ode}, prove Theorem \ref{thm-near-field}(b).

\subsection{The Painlev\'e-III function u(x):  Proof of Theorem \ref{thm-near-field}(c).}

Inserting the definition \eqref{u-def} into the Painlev\'e-III equation 
\eqref{PIII} and using the chain rule to convert $s$-derivatives to $X$-derivatives 
shows that \eqref{PIII} is satisfied as long as 
\eq
X\Psi\Psi_{XXX}+3\Psi\Psi_{XX}-X\Psi_X\Psi_{XX}-2(\Psi_X)^2+4\Psi^4+2X\Psi^3\Psi_X+2X\Psi^3\Psi_X = 0,
\endeq
which holds from \eqref{fourth-order-ode} since $T=0$ and $\Psi(X,0;{\bf c})$ is a 
real-valued function (since ${\bf c}\in\mathbb{R}^2$).  Plugging the expansion 
$\Psi(X,0;{\bf c}) = \Psi(0,0;{\bf c}) + \Psi_X(0,0;{\bf c})X + \mathcal{O}(X^2)$ 
as $X\to 0$ into the definition \eqref{u-def} along with $X=-\frac{1}{8}s^2$ 
immediately gives the expansion 
\eq
u(s;{\bf c}) = s + \frac{\Psi_X(0,0;{\bf c})}{8\Psi(0,0;{\bf c})}s^3 + \mathcal{O}(s^5), \quad x\to 0.
\endeq
Using the explicit forms for $\Psi(0,0;{\bf c})$ and $\Psi_X(0,0;{\bf c})$ given 
in \eqref{Psi-values} shows \eqref{u-values}.  Higher derivatives 
$u^{(j)}(0;{\bf c})$, $j\geq 5$ can now be computed using the governing 
Painlev\'e-III equation \eqref{PIII}.  This completes the proof of Theorem 
\ref{thm-near-field}.

\appendix

\section{Reformulation of the Riemann-Hilbert problem as a linear system}
\label{sec-app-linear}
We now show how to rewrite Riemann-Hilbert Problem \ref{rhp:psi-n} for 
${\bf M}^{[n]}$ as 
the linear system \eqref{linear-system} of $4n$ equations in $4n$ unknowns.  
In principal, this linear system can be solved explicitly 
for any fixed $n$, thus allowing the determination of $\psi^{[2n]}(x,t)$ via 
\eqref{recover-psi}.  In practice, the entries of the resulting coefficient 
matrix are increasingly complicated functions of $x$ and $t$ as $n$ 
increases, and the system can only be feasibly solved for at most a few 
values of $n$.  However, picking specific values of $x$ and $t$ reduces the 
problem to the inversion of a $4n\times 4n$ matrix with numerical entries, 
which can be done rapidly to any desired precision using standard numerical 
linear algebra packages for moderately large values of $n$.  This procedure 
was used to create Figures \ref{fig-far-field-2d}--\ref{fig-origin-c5}.  
Analogous methods have been used previously to study semiclassical behavior 
of the nonlinear Schr\"odinger equation \cite{LyngM:2007}, the sine-Gordon 
equation \cite{BuckinghamM:2008}, and the three-wave resonant interaction 
equations \cite{BuckinghamJM:2017}.

From \eqref{Mn-ito-Un} and \eqref{Unp1-ito-Gn-Un} we have, 
using ${\bf M}^{[0]}(\lambda;x,t)\equiv\mathbb{I}$, 
\eq
{\bf M}^{[n]}(\lambda;x,t) = \begin{cases} {\bf G}^{[n-1]}(\lambda;x,t)\cdots{\bf G}^{[0]}(\lambda;x,t), & \lambda\notin D_0,\\ {\bf G}^{[n-1]}(\lambda;x,t)\cdots{\bf G}^{[0]}(\lambda;x,t)e^{-i(\lambda x+\lambda^2 t)\sigma_3}{\bf G}^{[0]}(\lambda;0,0)^{-n}e^{i(\lambda x+\lambda^2 t)}, & \lambda\in D_0.\end{cases}
\endeq
For succinctness we define
\eq
{\bf \Pi}^{[n]}(\lambda;x,t):={\bf G}^{[n-1]}(\lambda;x,t)\cdots{\bf G}^{[0]}(\lambda;x,t).
\endeq
From the jump condition \eqref{Mn-jump}, we have 
${\bf M}_-^{[n]}(\lambda;x,t)={\bf M}_+^{[n]}(\lambda;x,t){\bf V}_{\bf M}^{[n]}(\lambda;x,t)^{-1}$. 
Since the left-hand side extends analytically into $D_0$, the right-hand side 
must as well.  Our conditions for the linear system will arise from the 
fact that 
\eq
\label{M-Vinv}
{\bf M}_+^{[n]}(\lambda;x,t){\bf V}_{\bf M}^{[n]}(\lambda;x,t)^{-1} = {\bf\Pi}^{[n]}(\lambda;x,t)e^{-i(\lambda x+\lambda^2 t)\sigma_3}{\bf G}^{[0]}(\lambda;0,0)^{-n}e^{i(\lambda x+\lambda^2t)\sigma_3}
\endeq
is analytic at $\xi$ and $\xi^*$.  We can write
\eq
{\bf G}^{[0]}(\lambda;0,0)^{-1} = \mathbb{I} + \frac{{\bf W}}{\lambda-\xi} + \frac{{\bf X}}{\lambda-\xi^*},
\endeq
where (recall $\xi=\alpha+i\beta$)
\eq
\label{W-X}
\begin{split}
{\bf W} & :=\frac{2i\beta}{|{\bf c}|^2}\bpm c_1c_1^* & c_1c_2^* \\ c_1^*c_2 & c_2c_2^* \epm = \frac{2i\beta}{|{\bf c}|^2}\bpm c_1 & c_1 \\ c_2 & c_2 \epm \bpm c_1^* & 0 \\ 0 & c_2^* \epm, \\
{\bf X} & :=\frac{-2i\beta}{|{\bf c}|^2}\bpm c_2c_2^* & -c_1c_2^* \\ -c_1^*c_2 & c_1c_1^* \epm = \frac{-2i\beta}{|{\bf c}|^2}\bpm c_2^* & c_2^* \\ -c_1^* & -c_1^* \epm \bpm c_2 & 0 \\ 0 & -c_1 \epm.
\end{split}
\endeq
By direct calculation, we have the relations
\eq
\label{WX-relations}
{\bf W}{\bf X} = {\bf X}{\bf W} = {\bf 0}, \quad {\bf X}^2=-2i\beta{\bf X}, \quad {\bf W}^2 = 2i\beta{\bf W}.
\endeq
Using these, we have
\eq
\label{G-expansion}
\begin{split}
{\bf G}^{[0]}(\lambda;0,0)^{-n} & = \mathbb{I} + \sum_{k=1}^n\bpm n \\ k \epm \left(\frac{{\bf W}^k}{(\lambda-\xi)^k} + \frac{{\bf X}^k}{(\lambda-\xi^*)^k}\right) \\ 
  & = \mathbb{I} + \sum_{k=1}^n\bpm n \\ k \epm (2i\beta)^{k-1} \left(\frac{{\bf W}}{(\lambda-\xi)^k} + (-1)^k\frac{{\bf X}}{(\lambda-\xi^*)^k}\right).
\end{split}
\endeq
Dropping the explicit dependence on $n$, we now set
\eq
{\bf L}(\lambda;x,t):={\bf\Pi}^{[n]}(\lambda;x,t)e^{-i(\lambda x+\lambda^2 t)\sigma_3}.
\endeq
Therefore, the condition that the quantity in \eqref{M-Vinv} is analytic at 
$\lambda=\xi$ and $\lambda=\xi^*$ can be reformulated as the fact that 
${\bf L}(\lambda;x,t){\bf G}^{[0]}(\lambda;0,0)^{-n}$ is analytic at 
$\lambda=\xi$ and $\lambda=\xi^*$.  We expand ${\bf L}$ (which has poles of 
order $n$ at $\xi$ and $\xi^*$) about $\lambda=\xi$ and $\lambda=\xi^*$:
\eq
\label{L-expansions}
{\bf L}(\lambda;x,t) = \sum_{j=-n}^\infty{\bf L}_j^+(x,t)(\lambda-\xi)^j, \quad {\bf L}(\lambda;x,t) = \sum_{j=-n}^\infty{\bf L}_j^-(x,t)(\lambda-\xi^*)^j.
\endeq
Here the unknown $2\times 2$ matrices ${\bf L}_j^\pm$ are independent of 
$\lambda$.  Using the expansions \eqref{G-expansion} and \eqref{L-expansions}, along 
with \eqref{WX-relations}, the analyticity conditions become 
\eq
\label{LWX-at-xi}
\left(\sum_{j=-n}^\infty{\bf L}_j^+(\lambda-\xi)^j\right)\left( \mathbb{I} + \sum_{k=1}^n\bpm n \\ k \epm (2i\beta)^{k-1} \left(\frac{{\bf W}}{(\lambda-\xi)^k} + (-1)^k\frac{{\bf X}}{(\lambda-\xi^*)^k}\right) \right)=\mathcal{O}(1), \quad \lambda\to\xi
\endeq
and 
\eq
\label{LWX-at-xi*}
\left(\sum_{j=-n}^\infty{\bf L}_j^-(\lambda-\xi^*)^j\right)\left( \mathbb{I} + \sum_{k=1}^n\bpm n \\ k \epm (2i\beta)^{k-1} \left(\frac{{\bf W}}{(\lambda-\xi)^k} + (-1)^k\frac{{\bf X}}{(\lambda-\xi^*)^k}\right) \right)=\mathcal{O}(1), \quad \lambda\to\xi^*.
\endeq
Expanding \eqref{LWX-at-xi} and collecting powers of $\lambda-\xi$ gives 
$2n$ equations for ${\bf L}_{-n}^+,\dots,{\bf L}_{n-1}$.  For example, for 
$n=1$ we obtain the two equations 
\eq
{\bf L}_{-1}^+{\bf W}={\bf 0}, \quad {\bf L}_{-1}^+ - \frac{1}{{\xi-\xi^*}}{\bf L}_{-1}^+{\bf X} + {\bf L}_0^+{\bf W} = {\bf 0}.
\endeq
Multiplying the second equation by ${\bf W}$ on the right and then using the 
first equation and the relations \eqref{WX-relations} yields the 
simplified equation ${\bf L}_0^+{\bf W}={\bf 0}$.  Indeed, using the same 
procedure of right-multiplying by ${\bf W}$ and using forward substitution 
and \eqref{WX-relations} works for general $n$ to deliver the equations 
${\bf L}_j^+{\bf W}={\bf 0}$, $-n\leq j\leq n-1$.  Similarly, expanding 
\eqref{LWX-at-xi*} and collecting powers of $\lambda-\xi^*$ gives, after 
analogous manipulations, the equations ${\bf L}_j^-{\bf X}={\bf 0}$, 
$-n\leq j\leq n-1$.  From the explicit forms \eqref{W-X} for ${\bf W}$ and 
${\bf X}$, we see these matrix equations are equivalent to the vector 
equations
\eq
\label{L-c-eqs}
{\bf L}_j^+\bpm c_1 \\ c_2 \epm = {\bf 0}, \quad {\bf L}_j^-\bpm c_2^* \\ -c_1^* \epm = {\bf 0}, \quad -n\leq j\leq n-1.
\endeq
Next, recalling 
${\bf L} = {\bf \Pi}^{[n]}e^{-i(\lambda x+\lambda^2 t)\sigma_3}$, we expand 
\eq
e^{-i(\lambda x+\lambda^2 t)\sigma_3} = \sum_{j=0}^\infty{\bf D}_j^+(x,t)(\lambda-\xi)^j = \sum_{j=0}^\infty{\bf D}_j^-(x,t)(\lambda-\xi^*)^j
\endeq
(here the coefficient matrices ${\bf D}_j^\pm$ are known, or at least can 
be computed) and 
\eq
{\bf \Pi}^{[n]}(\lambda;x,t) = \mathbb{I} + \sum_{j=1}^n\left(\frac{{\bf A}_{-j}^+(x,t)}{(\lambda-\xi)^j} + \frac{{\bf A}_{-j}^-(x,t)}{(\lambda-\xi^*)^j}\right) 
\endeq
(here the coefficient matrices ${\bf A}_{-j}^\pm$ are unknown).
If we write 
\eq
{\bf A}_{-j}^+=:\bpm r_{-j} & u_{-j} \\ * & * \epm, \quad {\bf A}_{-j}^-=:\bpm s_{-j} & v_{-j} \\ * & * \epm, 
\endeq  
(here * denotes an entry we will not need), then 
\eq
\label{recover-psi}
\psi^{[2n]}(x,t) = 2i(u_{-1}(x,t)+v_{-1}(x,t)).
\endeq
Since the equations for the entries in the top and bottom rows in 
\eqref{L-c-eqs} decouple, we only need to calculate the first rows of 
the matrices ${\bf A}_{-j}^\pm$ to reconstruct $\psi^{[2n]}(x,t)$.  This 
gives $4n$ linear equations in $4n$ unknowns, which we now express in a form 
suitable for numerical computations.

Direct calculation gives
\eq
\label{L-list1}
{\bf L}_j^+ = \sum_{k=0}^{j+n}{\bf A}_{j-k}^+{\bf D}_k^+, \quad {\bf L}_j^- = \sum_{k=0}^{j+n}{\bf A}_{j-k}^-{\bf D}_k^-, \quad -n\leq j\leq -1.
\endeq
Furthermore, if we define the constants $\gamma_{k,m}^\pm$ by the 
expansions 
\eq
\frac{1}{(\lambda-\xi^*)^m} = \sum_{k=0}^\infty\gamma_{k,m}^+(\lambda-\xi)^k, \quad \frac{1}{(\lambda-\xi)^m} = \sum_{k=0}^\infty\gamma_{k,m}^-(\lambda-\xi^*)^k,
\endeq
then we also have 
\eq
\label{L-list2}
\hspace*{-.2in}
\begin{matrix}
{\bf L}_j^+ = {\bf D}_j^+ + \sum_{k=1}^n{\bf A}_{-k}^+{\bf D}_{j+k}^+ + \sum_{\ell=0}^j\left(\sum_{m=1}^n\gamma_{j-\ell,m}^+{\bf A}_{-m}^-\right){\bf D}_\ell^+,
\\
{\bf L}_j^- = {\bf D}_j^- + \sum_{k=1}^n{\bf A}_{-k}^-{\bf D}_{j+k}^- + \sum_{\ell=0}^j\left(\sum_{m=1}^n\gamma_{j-\ell,m}^-{\bf A}_{-m}^+\right){\bf D}_\ell^-,
\end{matrix} 
\quad 0\leq j\leq n-1.
\endeq
Comparing \eqref{L-c-eqs} with \eqref{L-list1} and \eqref{L-list2}, we see 
that ${\bf D}_j^+$ only occurs multiplied by $(c_1,c_2)^\mathsf{T}$, and 
${\bf D}_j^-$ only occurs multiplied by $(c_2^*,-c_1^*)^\mathsf{T}$, so we 
define 
\eq
\bpm f_j^+(x,t) \\ g_j^+(x,t) \epm :={\bf D}_j^+(x,t)\bpm c_1 \\ c_2 \epm, \quad \bpm f_j^-(x,t) \\ g_j^-(x,t) \epm:={\bf D}_j^-(x,t)\bpm c_2^* \\ -c_1^* \epm, \quad 0\leq j\leq 2n-1.
\endeq
We also define 
\eq
\begin{split}
{\bf F}_j:=\bpm f_j^+ & 0 \\ 0 & f_j^- \epm, & \quad {\bf G}_j:=\bpm g_j^+ & 0 \\ 0 & g_j^- \epm, \\
{\bf H}_{jk}:=\bpm 0 & \displaystyle\sum_{\ell=0}^j\gamma_{\ell k}^+ f_{j-\ell}^+ \\ \displaystyle\sum_{\ell=0}^j\gamma_{\ell k}^-f_{j-\ell}^- & 0 \epm, & \quad {\bf I}_{jk}:=\bpm 0 & \displaystyle\sum_{\ell=0}^j\gamma_{\ell k}^+ g_{j-\ell}^+ \\ \displaystyle\sum_{\ell=0}^j\gamma_{\ell k}^-g_{j-\ell}^- & 0 \epm.
\end{split}
\endeq
Using these, we define the $4n\times 4n$ coefficient matrix 
\eq
{\bf T}:=\scalebox{0.8}{
$\bpm {\bf F}_0 & {\bf G}_0 & {\bf 0} & {\bf 0} & \cdots & {\bf 0} & {\bf 0} \\ {\bf F}_1 & {\bf G}_1 & {\bf F}_0 & {\bf G}_0 & \cdots & {\bf 0} & {\bf 0} \\ \vdots & \vdots & \vdots & \vdots & & \vdots & \vdots \\ {\bf F}_{n-1} & {\bf G}_{n-1} & {\bf F}_{n-2} & {\bf G}_{n-2} & \cdots & {\bf F}_0 & {\bf G}_0 \\ 
{\bf F}_n+{\bf H}_{0,n} & {\bf G}_n+{\bf I}_{0,n} & {\bf F}_{n-1}+{\bf H}_{0,n-1} & {\bf G}_{n-1}+{\bf I}_{0,n-1} & \cdots & {\bf F}_1+{\bf H}_{0,1} & {\bf G}_1+{\bf I}_{0,1} \\
{\bf F}_{n+1}+{\bf H}_{1,n} & {\bf G}_{n+1}+{\bf I}_{1,n} & {\bf F}_n+{\bf H}_{1,n-1} & {\bf G}_n+{\bf I}_{1,n-1} & \cdots & {\bf F}_2+{\bf H}_{1,1} & {\bf G}_2+{\bf I}_{1,1} \\
\vdots & \vdots & \vdots & \vdots & & \vdots & \vdots \\
{\bf F}_{2n-1}+{\bf H}_{n-1,n} & {\bf G}_{2n-1}+{\bf I}_{n-1,n} & {\bf F}_{2n-2}+{\bf H}_{n-1,n-1} & {\bf G}_{2n-2}+{\bf I}_{n-1,n-1} & \cdots & {\bf F}_n+{\bf H}_{n-1,1} & {\bf G}_n+{\bf I}_{n-1,1} 
\epm$,
}
\endeq
the $4n$-vector of unknowns
\eq
{\bf y}:=\bpm r_{-n} & s_{-n} & u_{-n} & v_{-n} & \cdots & r_{-1} & s_{-1} & u_{-1} & v_{-1} \epm,
\endeq
and the $4n$-vector of inhomogeneous terms
\eq
{\bf f}:= ( \underbrace{0 \hspace{.08in} 0 \hspace{.08in} \cdots \hspace{.08in} 0}_{2n\text{ terms}} \hspace{.08in} -f_0^+ \hspace{.08in} -f_0^- \hspace{.08in} -f_1^+ \hspace{.08in} -f_1^- \hspace{.08in} \cdots \hspace{.08in} -f_{n-1}^+ \hspace{.08in} -f_{n-1}^-).
\endeq
Here ${\bf T}$, ${\bf y}$, and ${\bf f}$ depend on $x$, $t$, $n$, ${\bf c}$, 
and $\xi$.  At last, the equations for the top rows in \eqref{L-c-eqs} can be 
recast as 
\eq
\label{linear-system}
{\bf T}{\bf y} = {\bf f},
\endeq
a form amenable to numerical computation for moderately large values of $n$.
Once ${\bf y}$ is obtained from solving this equation, the solution 
$\psi^{[2n]}(x,t)$ to the nonlinear Schr\"odinger equation is immediately 
recovered from \eqref{recover-psi}.

\section{Numerical computation of $\Psi(X,T;{\bf c})$ for arbitrary ${\bf c}\in(\mathbb{C}^*)^2$}
\label{sec-app-Psi}
A numerical procedure was developed in \cite[\S 5]{BilmanLM:2018} to compute 
the special functions $\Psi(X,T;(1,\pm 1))$ for the first time, with the aid 
of \texttt{RHPackage} \cite{Olver:website} in context of the numerical 
framework introduced in \cite{TrogdonO:2016}.  While Riemann-Hilbert 
Problem~\ref{rhp:limiting} can be solved numerically using \texttt{RHPackage} 
without contour deformations for $(X,T)$ lying in a small rectangle 
containing the origin, e.g.\@ $|X|+|T|<2$, for large values of $X$ one needs 
to deform the jump contours of this Riemann-Hilbert Problem by introducing 
lens-shaped domains to use the Deift-Zhou method of nonlinear steepest 
descent. In this section we briefly describe the deformations necessary to 
compute $\Psi(X,0;\mathbf{c})$ for arbitrary $\mathbf{c}\in(\mathbb{C}^*)^2$ 
and large values of $X$. These deformations are a generalization of what was 
employed in \cite[\S 4.1]{BilmanLM:2018} to arbitrary 
$\mathbf{c}\in(\mathbb{C}^*)^2$. Before we begin, we note it is sufficient to 
consider the case $X\geq 0$ by \eqref{Psi-symmetry-X} and $T\geq 0$ by 
\eqref{Psi-symmetry-T}.

The function $\mathbf{B}(\Lambda;X,T)$ defined in \eqref{B-ito-A} is 
unimodular and satisfies the following Riemann-Hilbert problem.
\begin{rhp}[Reformulated near-field problem]
Let $(X,T)\in\mathbb{R}^2$ be fixed but arbitrary parameters. Find the unique $2\times 2$ matrix-valued function $\mathbf{B}(\Lambda;X,T)$ with the following properties:
\begin{itemize}
\item[]\textbf{Analyticity:} $\mathbf{B}(\Lambda;X,T)$ is analytic for $|\Lambda|\neq 1$ and takes continuous boundary values from the interior and exterior of the jump contour.
\item[]\textbf{Jump condition:} The boundary values on the jump contour (oriented clockwise) follow the relation
\begin{equation}
\mathbf{B}_+(\Lambda;X,T) = \mathbf{B}_-(\Lambda;X,T) e^{-i(\Lambda X + \Lambda^2 T + 2 \Lambda^{-1})\sigma_3} \mathbf{S}^{-1}e^{i(\Lambda X + \Lambda^2 T + 2 \Lambda^{-1})\sigma_3},\quad |\Lambda|=1.
\label{eq:jump-rhp-B}
\end{equation}
\item[]\textbf{Normalization:} $\mathbf{B}(\Lambda;X,T)\to\mathbb{I}$ as 
$\Lambda\to\infty$.
\end{itemize}
\label{rhp:limiting-reformulated}
\end{rhp}

To solve Riemann-Hilbert Problem~\ref{rhp:limiting-reformulated} for $X>0$ 
large, we introduce 
\eq
v:=X^{-3/2}T, \quad z:=X^{1/2}\Lambda 
\endeq
and define $\mathbf{C}(z;X,v):=\mathbf{B}(X^{-1/2}z; X,X^{3/2}v)$.  Recall that 
$\Psi(X,T;{\bf c})$ is recovered from ${\bf B}(\Lambda;X,T)$ via 
\eqref{Psi-ito-B}, which implies
\begin{equation}
\Psi(X,X^{3/2}v)=2i X^{-1/2}\lim_{z\to\infty}z \left[\mathbf{C}(z;X,v)\right]_{12},\quad X>0,
\end{equation}
and the phase in the diagonal matrices conjugating $\mathbf{S}^{-1}$ in the 
jump condition \eqref{eq:jump-rhp-B} now has the form
\begin{equation}
\Lambda X + \Lambda^2 T + 2 \Lambda^{-1} = X^{1/2}(z+vz^2+2z^{-1})=:X^{1/2} \phi(z;v).
\end{equation}
It is clear that for each $X>0$ and $v\geq 0$, $\mathbf{C}(z;X,v)$ satisfies 
the jump condition
\begin{equation}
\mathbf{C}_+(z;X,v) = \mathbf{C}_-(z;X,v) e^{-i X^{1/2}\phi(z;v)\sigma_3} \mathbf{S}^{-1}e^{i X^{1/2}\phi(z;v)\sigma_3}, \quad z\in\Gamma,
\label{eq:jump-C}
\end{equation}
where $\Gamma$ is a Jordan curve (depending on $X$) surrounding $z=0$ with 
clockwise orientation, and $\mathbf{C}(z;X,v)$ is analytic in the complement of 
$\Gamma$. The matrix $\mathbf{C}(z;X,v)$ is unimodular and has the same 
normalization as $\mathbf{B}(\Lambda;X,T)$: $\mathbf{C}(z;X,v)\to \mathbb{I}$ 
as $z\to\infty$ for each fixed $X>0$ and $v\geq 0$.

We now proceed with introducing lens-shaped regions and deforming the jump 
contour to control the exponential factors in the jump matrix \eqref{eq:jump-C} 
as in \cite{BilmanLM:2018}. For given $v\geq 0$, the critical points of 
the phase $\phi(z;v)$ are roots of the real cubic equation
\begin{equation}
2v z^3 + z^2 -2 =0,
\label{eq:phi-cubic}
\end{equation}
which are all real and distinct if $0\leq v<54^{-1/2}$. If $v>54^{-1/2}$, however, there is a complex conjugate pair of roots and a real root. In the former case, the level curve $\Im(\phi(z;v))=0$ along which the exponential factors $ e^{\pm i X^{1/2}\phi(z;v)\sigma_3}$ are purely oscillatory has a component that is a Jordan curve enclosing the origin in the $z$-plane, and that passes through two critical points, with the remaining critical point (if $v>0$) in the exterior domain. We take this curve as the jump contour $\Gamma$ in \eqref{eq:jump-C} for $\mathbf{C}(z;X,v)$ and denote the relevant two critical points of $\phi(z;v)$ by $a<b$, where $a$ and $b$ depend on $v$. Note that when $v=0$, \eqref{eq:phi-cubic} is a real quadratic with the roots $a=-\sqrt{2}$ and $b=\sqrt{2}$. See Figure~\ref{fig:large-X-signs} for representative level curves $\Im(\phi(z;v))=0$ and the roots of \eqref{eq:phi-cubic} for different values of $v\in\mathbb{R}$.
 
\begin{figure}
\includegraphics[height=2in]{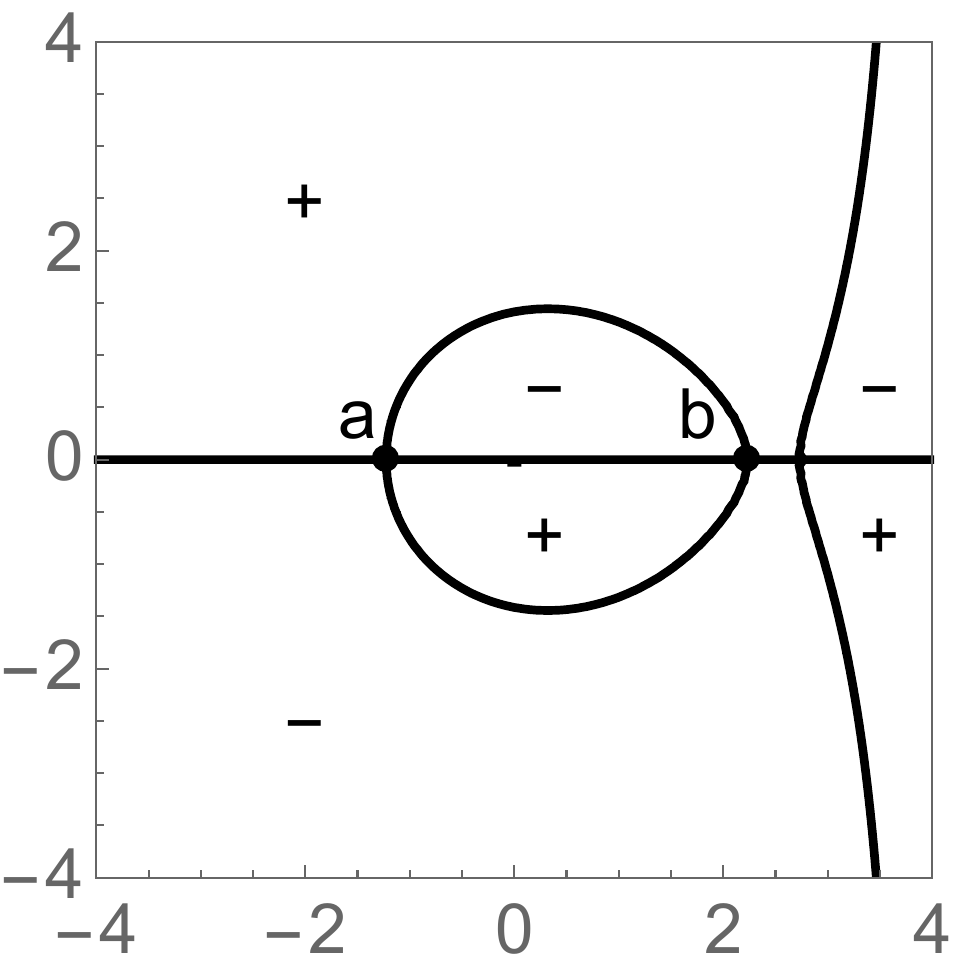}
\includegraphics[height=2in]{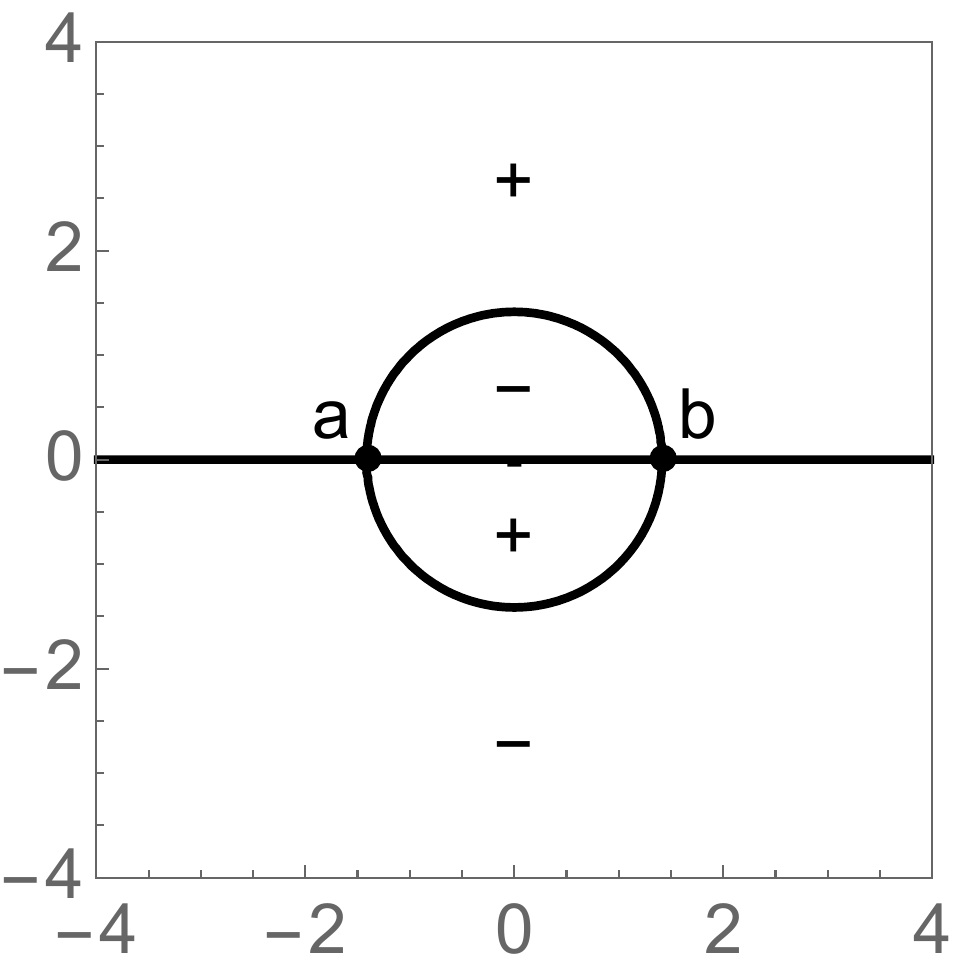}
\includegraphics[height=2in]{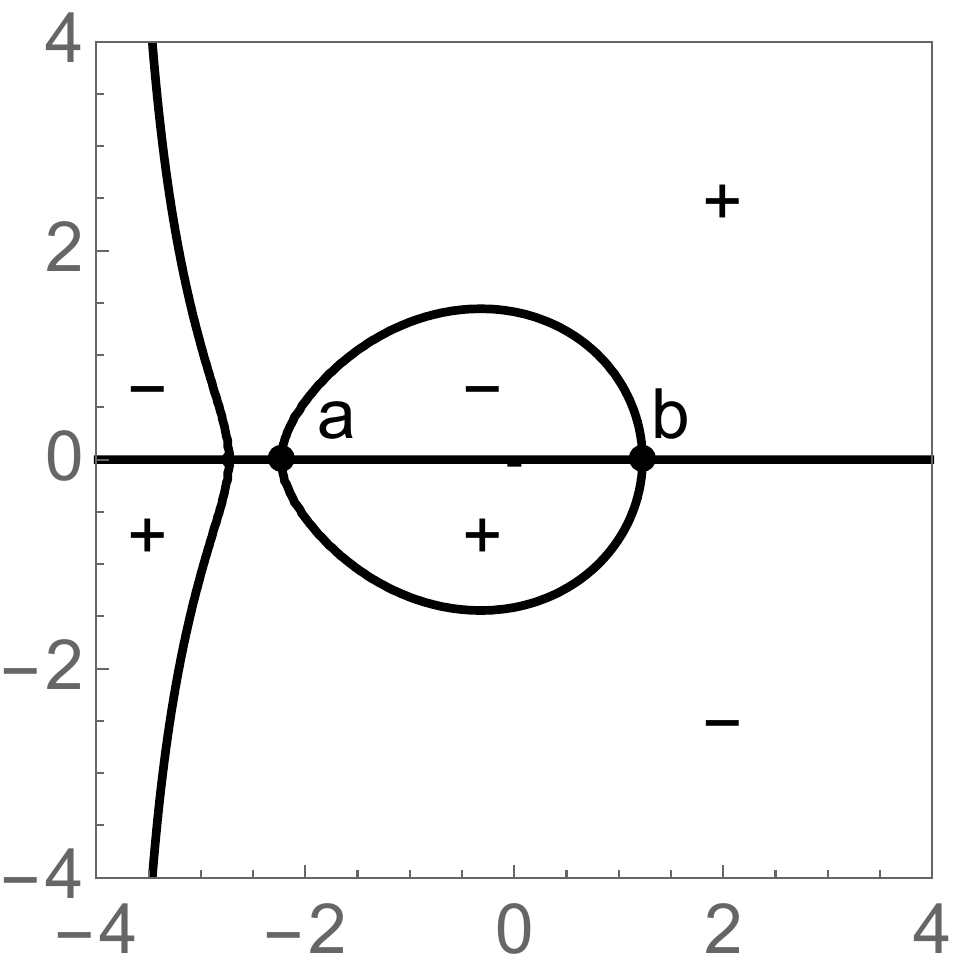}
\caption{Signature charts of $\Im(\phi(z; v))$ in the complex $z$-plane.  \emph{Left:} $v=-0.134$ (representative for $-54^{-1/2}<v<0$).  \emph{Center:} $v=0$.  \emph{Right:} $v=0.134$ (representative for $0<v<54^{-1/2}$).
}
\label{fig:large-X-signs}
\end{figure}
The real axis splits $\Gamma$ into two arcs, $\Gamma^+$ lying in the upper half 
plane and $\Gamma^{-}$ lying in the lower half plane. We deform $\Gamma^{\pm}$ 
by opening lens-shaped domains $L^{\pm}$ and $R^{\pm}$ on the left and right 
sides of $\Gamma^{\pm}$. The outer boundary arcs $C_{L}^{\pm}$ and 
$C_{R}^{\pm}$ of these regions meet the real axis at $45^\circ$ angles as shown 
in the left-hand panel of Figure~\ref{fig:large-X-deform} and on each of these 
arcs $\Im(\phi(z;v))$ has a definite sign. The line segment from $a$ to $b$ is 
denoted $I$. We label the region between $C_R^{+}$ and the real axis by 
$\Omega^{+}$, the region between $C_R^{-}$ and the real axis by $\Omega^{-}$, 
the region between $C_R^{\pm}$ and $\Gamma^{\pm}$ by $R^{\pm}$, and the region 
between $C_L^{\pm}$ and $\Gamma^{\pm}$ by $L^{\pm}$. See 
Figure~\ref{fig:large-X-deform} for illustrations of these domains and contour 
arcs.

To separate the exponential factors $e^{\pm i X^{1/2}\phi(z;v)}$ in the jump condition \eqref{eq:jump-C}, we make the following substitutions:
\begin{equation}
\mathbf{E}(z;X,v):=
\begin{cases}
\mathbf{C}(z;X,v)\begin{bmatrix}1 & 0\\\tfrac{c_2}{c_1}e^{2i X^{1/2}\phi(z;v)} & 1\end{bmatrix}, & z\in L^+, \\
\mathbf{C}(z;X,v)\left(\frac{|\mathbf{c}|}{c_1}\right)^{\sigma_3}\begin{bmatrix}1 & \tfrac{c_1 c_2^*}{|\mathbf{c}|^2}e^{-2i X^{1/2}\phi(z;v)}\\0 & 1\end{bmatrix}, & z\in R^+, \\
\mathbf{C}(z;X,v)\left(\frac{|\mathbf{c}|}{c_1}\right)^{\sigma_3}, & z\in\Omega^+, \\
\mathbf{C}(z;X,v)\left(\frac{|\mathbf{c}|}{c_1^*}\right)^{-\sigma_3}, & z\in\Omega^-,\\
\mathbf{C}(z;X,v)\left(\frac{|\mathbf{c}|}{c_1^*}\right)^{-\sigma_3}\begin{bmatrix}1 & 0\\-\tfrac{c_2 c_1^*}{|\mathbf{c}|^2}e^{2i X^{1/2}\phi(z;v)} & 1\end{bmatrix}, & z\in R^-,\\
\mathbf{C}(z;X,v)\begin{bmatrix}1 & -\frac{c_2^*}{c_1^*}e^{-2i X^{1/2}\phi(z;v)}\\0 & 1\end{bmatrix}, & z\in L^-,\\
\mathbf{C}(z;X,v), & \text{otherwise}.
\end{cases}
\end{equation}
Now $\mathbf{C}_+(z;X,v)=\mathbf{C}_-(z;X,v)$ for $z\in\Gamma^{\pm}$, so this transformation removes the jump condition across $\Gamma^{+}$ and $\Gamma^{-}$ and $\mathbf{C}(z;X,v)$ can be considered to be a well-defined analytic function on $\Gamma^{+}$ and $\Gamma^{-}$. Moreover, $\mathbf{C}(z;X,v)$ is unimodular and has the normalization $\lim_{z\to\infty} \mathbf{C}(z;X,v) = \mathbb{I}$. It therefore is the solution of the following Riemann-Hilbert problem.
\begin{rhp}[Large-$X$ problem]
Let $(X,v)\in\mathbb{R}_{>0}\times\mathbb{R}_{\geq 0}$ be fixed but arbitrary parameters. Find the unique $2\times 2$ matrix-valued function $\mathbf{E}(z;X,v)$ with the following properties:
\begin{itemize}
\item[]\textbf{Analyticity:} $\mathbf{E}(z;X,v)$ is analytic in $z$ for $z\in\mathbb{C}\setminus (C_L^- \cup C_R^- \cup I \cup C_R^+ \cup C_L^+) $, and it takes continuous boundary values from the interior and exterior of the union of the five arcs $C_L^- \cup C_R^- \cup I \cup C_R^+ \cup C_L^+$.
\item[]\textbf{Jump condition:} The boundary values on the jump contour $C_L^- \cup C_R^- \cup I \cup C_R^+ \cup C_L^+$ follow the relations
\begin{eqnarray}
\mathbf{E}_+(z;X,v) & = & \mathbf{E}_-(z;X,v)\begin{bmatrix}1 & 0\\-\tfrac{c_2}{c_1}e^{2i X^{1/2}\phi(z;v)} & 1\end{bmatrix},\quad z\in C_L^+,
\label{eq:Ejump-1} \\
\mathbf{E}_+(z;X,v) & = & \mathbf{E}_-(z;X,v)\begin{bmatrix}1 & \tfrac{c_1 c_2^*}{|\mathbf{c}|^2}e^{-2i X^{1/2}\phi(z;v)}\\0 & 1\end{bmatrix},\quad z\in C_R^+,
\label{eq:Ejump-2} \\
\mathbf{E}_+(z;X,v) & = & \mathbf{E}_-(z;X,v)\left(\frac{|\mathbf{c}|^2}{|c_1|^2}\right)^{\sigma_3},\quad z\in I, 
\label{eq:E-diagonal} \\
\mathbf{E}_+(z;X,v) & = & \mathbf{E}_-(z;X,v)\begin{bmatrix}1 & 0\\-\tfrac{c_2 c_1^*}{|\mathbf{c}|^2}e^{2i X^{1/2}\phi(z;v)} & 1\end{bmatrix},\quad z\in C_R^-,
\label{eq:Ejump-4} \\
\mathbf{E}_+(z;X,v) & = & \mathbf{E}_-(z;X,v)\begin{bmatrix}1 & -\frac{c_2^*}{c_1^*}e^{-2i X^{1/2}\phi(z;v)}\\0 & 1\end{bmatrix},\quad z\in C_L^-.
\label{eq:Ejump-5}
\end{eqnarray}
\item[]\textbf{Normalization:} $\mathbf{E}(z;X,v)\to\mathbb{I}$ as $z\to\infty$.
\end{itemize}
\label{rhp:large-X}
\end{rhp}
Since $\Im(\phi(z;v))>0$ for $z\in C^{+}_L\cup C^{-}_R$ and $\Im(\phi(z;v))<0$ for $z\in C^{+}_R\cup C^{-}_L$, the jump matrices supported on these four contour arcs are exponentially small (as $X\to+\infty$) perturbations of the identity matrix uniformly except near the end points $a$ and $b$. For the jump contours in Riemann-Hilbert Problem~\ref{rhp:large-X} with $v=0$, see the right panel of Figure~\ref{fig:large-X-deform}. The picture is completely analogous for $0<v<54^{-1/2}$. 

\begin{figure}
\includegraphics[height=2in]{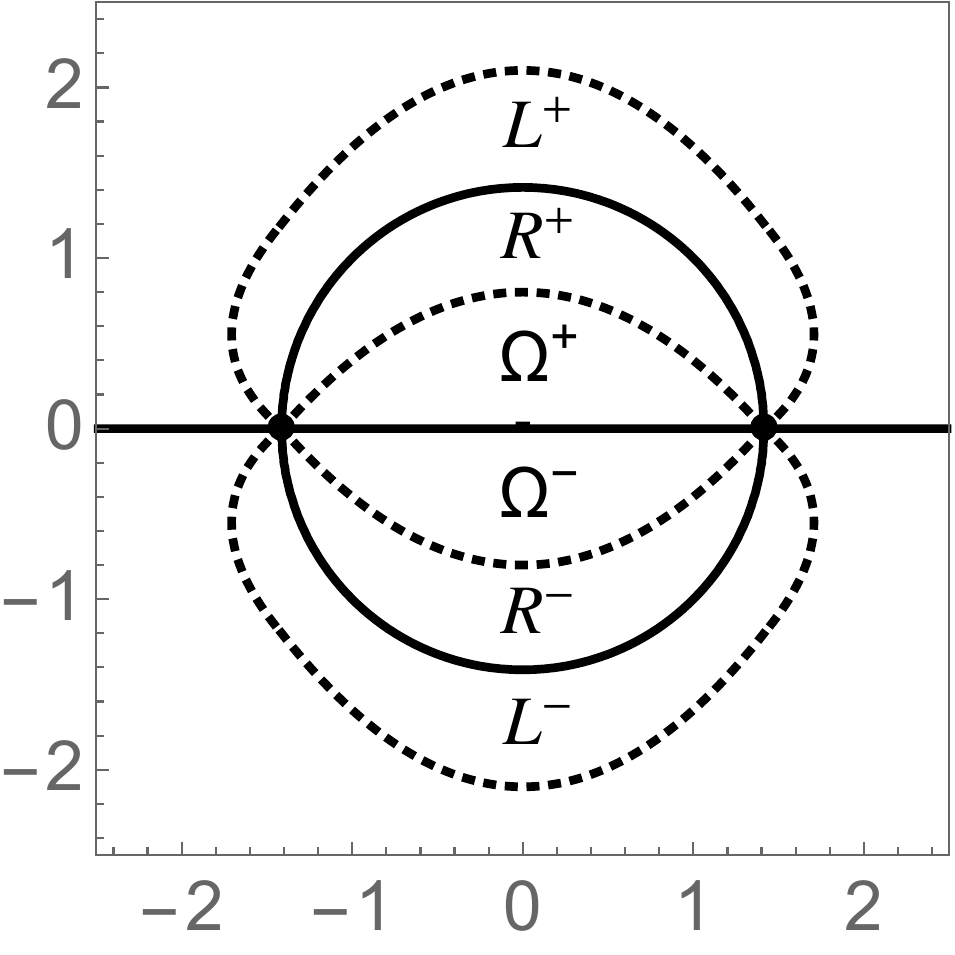}
\includegraphics[height=2in]{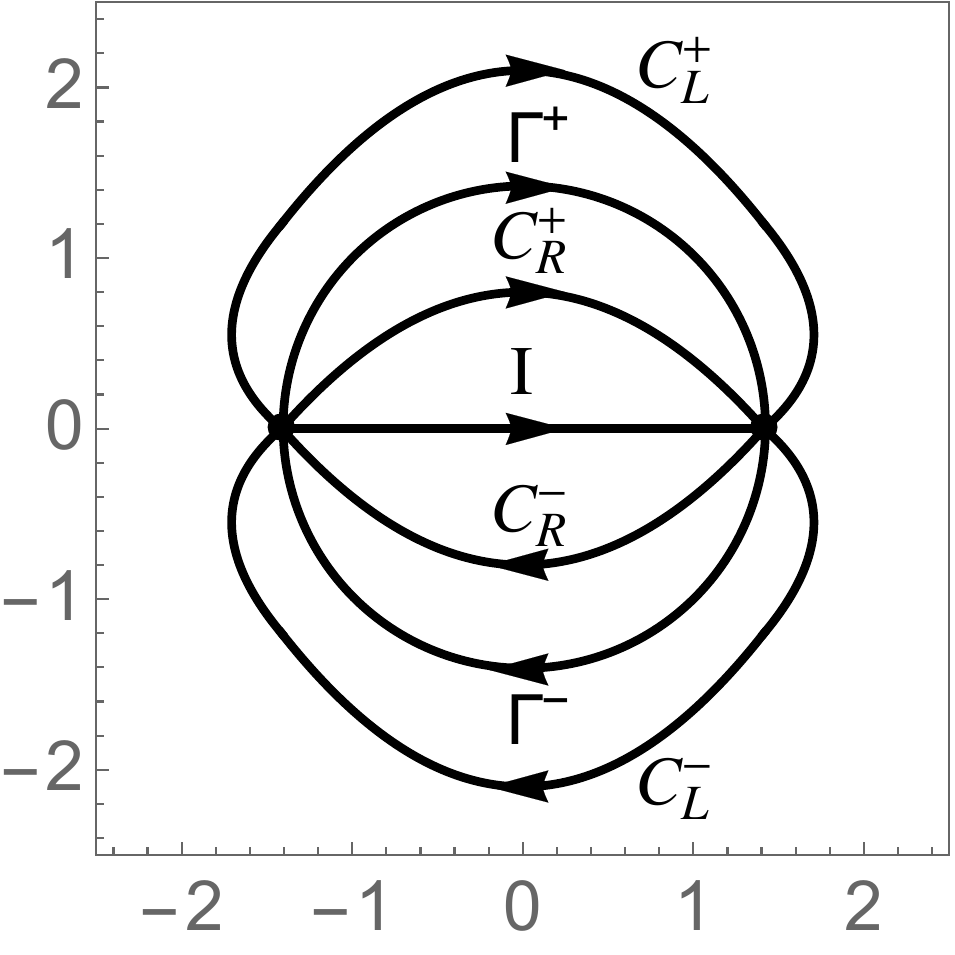}
\caption{\emph{Left:} The regions $\Omega^{\pm}$, $R^{\pm}$, and $L^{\pm}$ used in the definition of ${\bf E}(z;X,v)$.  \emph{Right:} The jump contours for Riemann-Hilbert Problem \ref{rhp:large-X}.  In both plots $v=0$.}
\label{fig:large-X-deform}
\end{figure}

Riemann-Hilbert Problem~\ref{rhp:large-X} can be used to numerically compute 
$\Psi(X,T)$ for values of $X> 54^{1/3}T^{2/3}$.  We now focus our
attention on the case $T=0$ (hence $v=0$) and $\mathbf{c}\in\mathbb{R}^2$.  For 
cross-validation of the numerical procedure described here along with further 
details, see \cite[\S 5]{BilmanLM:2018}.

As discussed in \cite[\S 5]{BilmanLM:2018}, although the jump matrices on the 
four arcs $C_L^- \cup C_R^- \cup  C_R^+ \cup C_L^+$ become exponentially small 
perturbations of the identity matrix as $X\to\infty$, their Sobolev norms 
(differentiation with respect to $z$) grow.  This presents a numerical 
challenge that is overcome in \texttt{RHPackage} by a rescaling algorithm (see 
\cite{TrogdonO:2016} for details). Thus, to have a procedure that is 
asymptotically robust as $X>0$ becomes large, one has to remove the so-called 
connecting jump condition \eqref{eq:E-diagonal} on the line segment $I$. To 
this end, we introduce the parametrix
\begin{equation}
\mathbf{\Delta}(z;v):=\left( \frac{z-a(v)}{z-b(v)} \right)^{i p \sigma_3},\quad p:=\frac{\ln\left(\frac{|\mathbf{c}|^2}{|c_1|^2}\right)}{2\pi}>0,\quad z\in\mathbb{C}\setminus I,
\end{equation}
which exactly satisfies the jump condition
\begin{equation}
\mathbf{\Delta}_+(z;v)=\mathbf{\Delta}_-(z;v)\left(\frac{|\mathbf{c}|^2}{|c_1|^2}\right)^{\sigma_3},\quad z\in I,
\end{equation}
and is normalized as $\mathbf{\Delta}(z;v)\to\mathbb{I}$ as $z\to\infty$.  
Thus, setting 
$\widehat{\mathbf{C}}(z;X,v):=\mathbf{C}(z;X,v)\mathbf{\Delta}(z;v)^{-1}$ for 
$z\in\mathbb{C}\setminus I$ removes the connecting jump condition across $I$ as 
desired and conjugates the existing other jump matrices given in 
\eqref{eq:Ejump-1}--\eqref{eq:Ejump-5} by $\mathbf{\Delta}(z;v)$. This comes 
with the cost of introducing bounded singularities in the jump matrices at 
$z=a$ and $z=b$ since $\mathbf{\Delta}(z;v)$ has bounded singularities at these 
points. To remedy this, we place small circles centered at $z=a$ and $z=b$ and 
transfer the jump conditions on the line segments inside these circles to jump 
conditions on arcs of these circles connecting the endpoints of these line 
segments. This successfully removes the aforementioned singular jump 
conditions, but introduces jump matrices on the small circles centered at 
$z=a(v)$ and $z=b(v)$ whose components now grow exponentially as $X\to+\infty$. 
Observe that for $p=a,b$,
\begin{equation}
\phi(z;v) - \phi(p;v) = \frac{\phi''(p;v)}{2}(z-p)^2 + \mathcal{O}((z-p)^3), \quad z\to p.
\end{equation}
Therefore,
\begin{equation}
e^{\pm i |X|^{1/2}\phi(z;v)}=\mathcal{O}(1),\quad X\to+\infty
\end{equation}
if $|z-p|=\mathcal{O}(|X|^{-1/4})$ as $X\to+\infty$ for both $p =a(v)$ and $p=b(v)$. In order to overcome the growth of these factors, we scale the common radius of these circles by $|X|^{-1/4}$ as $X$ becomes large. As noted in \cite[\S 5]{BilmanLM:2018}, while shrinking the circles at a faster rate ensures boundedness of the exponentials supported on them, it also moves the support of the jump matrices closer to singularities at a faster rate and hence should be avoided. The jump contours of the Riemann-Hilbert problem used to compute $\Psi(X,0)$ numerically for large values of $X>0$ are given in Figure~\ref{fig:large-X-truncated-contours}. In practice, the jump contours are truncated if the jump matrices supported on them differ from the identity matrix by at most machine precision. For more details see \cite[Chapter 2 and Chapter 7]{TrogdonO:2016} and \cite[\S 5]{BilmanLM:2018}.
\begin{figure}[ht]
\begin{center}
\includegraphics[height=2in]{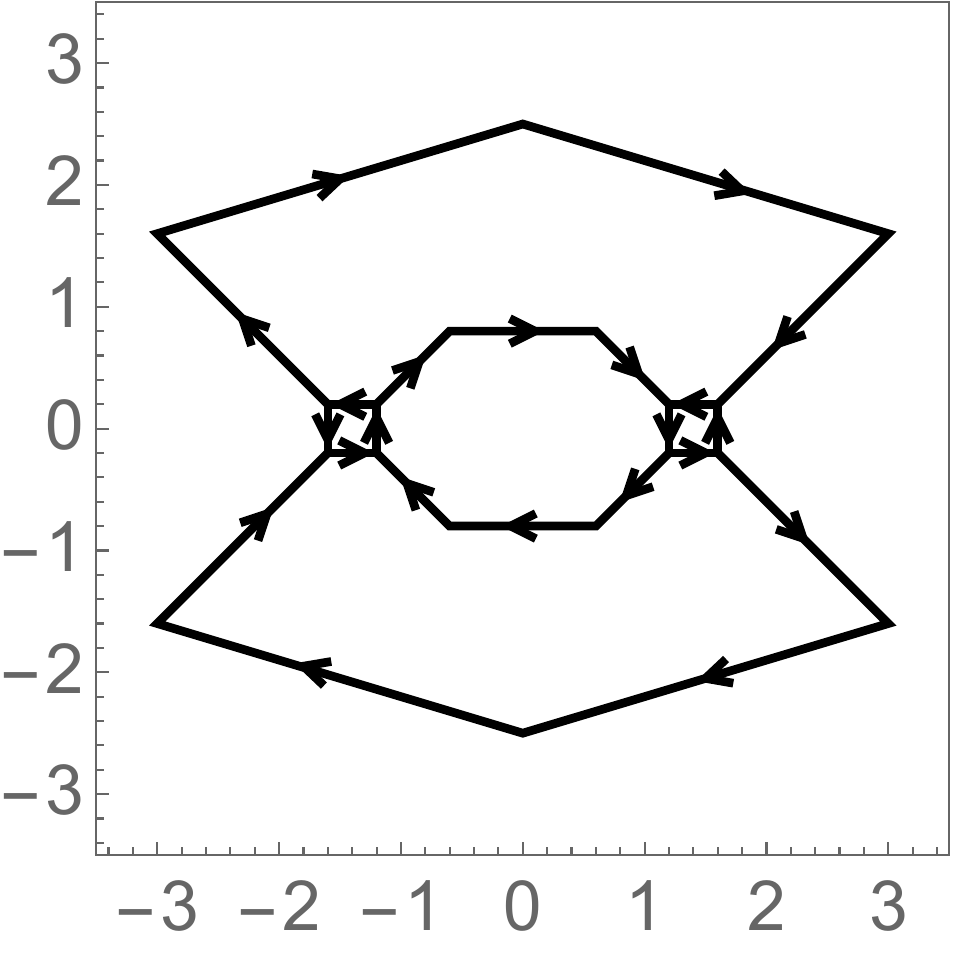}
\includegraphics[height=2in]{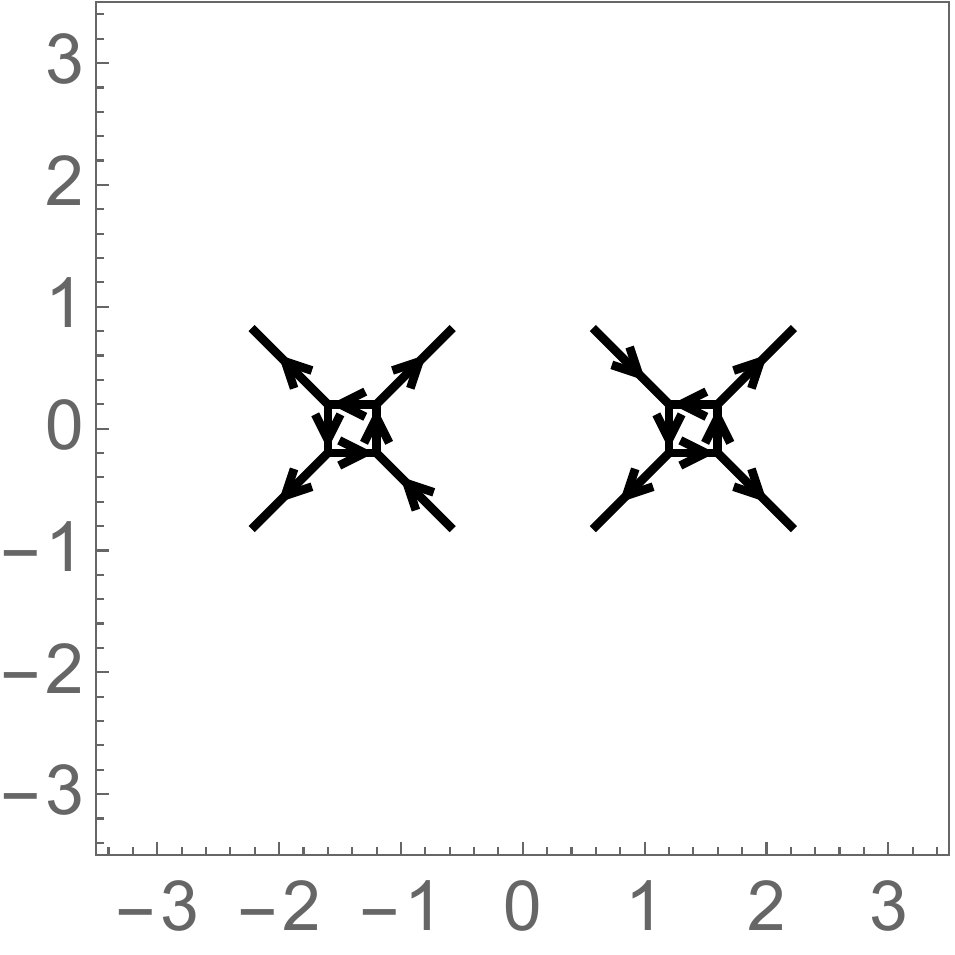}
\end{center}
\caption{\emph{Left:} Jump contours used in the numerical solution of the Riemann-Hilbert problem satisfied by $\widehat{\mathbf{C}}(z;X,v=0)$, which is asymptotically and numerically well-adapted for large $X>0$.  \emph{Right:}  Truncated jump contours that are used in practice if $X>0$ is large. For both plots, $X=2000$ and $v=0$.}
\label{fig:large-X-truncated-contours}
\end{figure}

\end{document}